\def\dOi{10(1:11)2014}
\subjclass{D.3.1, F.3.2, D.2.4}
\tikzset{edge/.style={->,font=\scriptsize,line width=.7pt,shorten <= .1cm,shorten >=.1cm}}
\tikzset{over/.style={fill=white,font=\tiny,inner sep=1.5pt}}
\newenvironment{myequation}{%
  \ignorespaces\[\everymath{\displaystyle}\begin{array}{rclr}%
    }{%
  \end{array}\]\ignorespacesafterend%
}
\newcommand\eqdef{\stackrel{\smash{\scriptscriptstyle\mathsf{def}}}{=}}
\newcommand\ZZ{\mathbf{Z}}
\newcommand\NN{\mathbf{N}}
\newcommand\Rule[2]{\frac{\phantom{\big(}\quad{#1}\quad}{\phantom{\Big(}\quad{#2}\quad}
}
\newcommand\Zero{{\mathbf{0}}}
\newcommand\ttt[1]{{\hbox{\rm\texttt{#1}}}}
\newcommand\x{{\ttt{x}}}
\newcommand\y{{\ttt{y}}}
\newcommand\BLANK{\ttt{\char"5F}}
\newcommand{\Depth}{\mathop{\mathrm{depth}}\nolimits}
\newcommand{\nf}{\mathop{\mathrm{nf}}\nolimits}
\newcommand\morphism[1]{\mathrel{\mathop{\longrightarrow}\limits^{#1}}}
\newcommand\PB[2]{{\left\lceil{#2}\right\rceil_{\scriptscriptstyle#1}}}
\newcommand\PC[2]{{{#2}_{\upharpoonright_{#1}}}} 
\newcommand\PD[2]{{{#2}_{\downharpoonright_{#1}}}} 
\newcommand\T{\mathcal{T}}
\newcommand\C[1]{\mathtt{#1}}
\newcommand\CC[1]{\mathtt{#1}^{\ttt{\scriptsize-}}}
\newcommand\app[1]{\langle#1\rangle}
\newcommand\subst[1]{\left[\,\SUBST#1;\ENDSUBST\,\right]}
\def\SUBST#1;#2\ENDSUBST{%
  \def\END{#2}%
  \ifx\END\empty%
  #1%
  \else%
  #1\,;\,%
  \SUBST#2\ENDSUBST%
  \fi
}
\renewcommand\leq{\leqslant} 
\renewcommand\geq{\geqslant} 
\newcommand\less{\preccurlyeq}  
\newcommand\more{\succcurlyeq}  
\newcommand\sqless{\sqsubseteq}  
\newcommand\red{\to}
\newcommand\comp{\circ}
\newcommand\ccomp{\diamond}
\newcommand\EqDown{\ensuremath{\displaystyle\mathop{\mskip1.5mu\downarrow^{\mskip-13.5mu=}}}}
\newcommand\Down{\ensuremath{\mathop{\downarrow}}}
\newcommand\seq[1]{\overline{#1\,}}
\begin{document}

\title[Size-Change Termination for Constructor Based Languages]
      {The Size-Change Termination Principle\\ for Constructor Based Languages}

\author[P.~Hyvernat]{Pierre Hyvernat}
\address{Laboratoire de Math\'ematiques\\
  CNRS UMR 5126 -- Universit\'e de Savoie\\
  73376 Le Bourget-du-Lac Cedex\\
  France}
\email{pierre.hyvernat@univ-savoie.fr}
\urladdr{\url{http://lama.univ-savoie.fr/~hyvernat/}}
\thanks{This work was partially funded by the French ANR project
  r\'ecr\'e ANR-11-BS02-0010.}

\keywords{program analysis, termination analysis, size-change principle, ML}


\begin{abstract} 

  This paper describes an automatic \emph{termination checker} for a generic
  first-order call-by-value language in ML style. We use the fact that values
  are built from constructors and tuples to keep some information about
  how arguments of recursive calls evolve during evaluation.

  The result is a criterion for termination extending the \emph{size-change
  termination principle} of Lee, Jones and Ben-Amram that can detect size
  changes inside subvalues of arguments. Moreover the corresponding algorithm
  is easy to implement, making it a good candidate for experimentation.
\end{abstract} 

\maketitle


\section*{Introduction} 

Our goal is to automatically check the termination of mutually recursive
definitions written in a first-order call-by-value language in ML style. The
problem is of course undecidable and we can only hope to capture \emph{some}
recursive definitions.
Lee, Jones and Ben-Amram's size-change termination principle (SCT) is a
simple, yet surprisingly strong sufficient condition for termination of
programs~\cite{SCT}.  It relies on a notion of \emph{size} of values and a
static analysis interpreting a recursive program as a \emph{control-flow
graph} with information about how the size of arguments evolves during
recursive calls. The procedure checking that such a graph is ``terminating''
amounts to a (conceptually) simple construction of a graph of paths.

We specialize and extend this principle to an ML-like language where
first-order values have a specific shape: they are built with $n$-tuples and
constructors. It is then possible to record more information about
arguments of recursive calls than ``decreases strictly'' or ``decreases''. The
main requirement is that the set of possible informations is finite, which we
get by choosing bounds for the \emph{depth} and the \emph{weight} of the terms
describing this information. We obtain a parametrized criterion for checking
termination of first-order recursive programs. The weakest version of this
criterion corresponds to the original SCT where the size of a value is its
depth.
An important point is that because we know some of the constructors present in
the arguments, it is possible to ignore some paths in the control-flow graph
because they cannot correspond to real evaluation steps. Moreover, it makes it
possible to inspect \emph{subvalues} of the arguments and detect a ``local''
size change. Another important point is that there is a simple syntax directed
static analysis that can be done in linear time.

The criterion has been implemented as part of the PML~\cite{EJC} language,
where it plays a central role: PML has a notion of \emph{proofs}, which are
special programs that need to terminate. As far as usability is concerned,
this criterion was a success: it is strong enough for our purpose, its output
is usually easy to predict and its implementation was rather straightforward.
The core consists of about 600 lines of OCaml code without external
dependencies.\footnote{A standalone version is available from
\url{http://lama.univ-savoie.fr/~hyvernat/research.php}}

\smallbreak
The paper is organized as follows: after introducing the ambient programming
language and some paradigmatic examples, we first define an abstract
interpretation for \emph{calls} and look at their properties. This makes it
possible to give an abstract interpretation for sets of recursive definitions
as {control-flow graphs}. A subtle issue arises when we try to make the set of
possible interpretations finite, making the notion of composition not
associative in general. We then describe and prove the actual criterion. We
finish with an appendix giving some technical lemmas, details about the
implementation and a simple static analysis.

\subsubsection*{Comparison with other work}


Two aspects of this new criterion appeared in the
literature~\cite{deltaSCT,calling_contexts} (see
Section~\ref{sub:comparison}), but what seems to be new here is that the
algorithm for testing termination is, like for the original SCT, ``finitary''.
Once the static analysis is done ---and this can be as simple as a linear-time
syntactical analysis of the definitions--- one needs only to compute the graph
of paths of the control-flow graph and inspect its loops. This makes it
particularly easy to implement from scratch as it needs not to rely on
external automatic proof-checker~\cite{calling_contexts} or integer linear
programming libraries~\cite{deltaSCT}.

One advantage of this minimalistic approach is that a formal proof of the
criterion is probably easier, making the criterion well-suited for proof
assistants based on type theory like Coq~\cite{Coq} or~Agda~\cite{Agda}. The
closest existing criterion seems to be the termination checker of Agda. It is
based on the ``\texttt{foetus}'' termination checker~\cite{foetus}. The
implementation incorporates a part of SCT but unfortunately, the exact
criterion isn't formally described anywhere.

It should be noted that native datatypes (integers with arithmetic operations
for example) are not addressed in this paper. This is not a problem as proof
assistants don't directly use native types. Complementing the present approach
with such internal datatypes and analysis of higher-order
programs~\cite{Sereni05terminationanalysis} is the subject of future research.

\subsection*{Ambient Programming Language} 
\label{sub:ambient_language}

The programming language we are considering is a first-order call-by-value
language in ML-style. It has constructors, pattern-matching, tuples and
projections.
The language is described briefly in Figure~\ref{fig:language} and the syntax
should be obvious to anyone familiar with an ML-style language.
The~``\ttt{match}'' construction allows to do pattern matching,
while~``$\pi_i$'' is used for projecting a tuple on one of its components. The
only proviso is that all constructors are unary and written
as~``$\C{C}\ttt{[}u\ttt{]}$''. Note that the~\ttt{f} in the grammar for expressions can
either be one of the functions that are being inductively defined, or any
function in the global environment.
Other features like~\ttt{let} expressions, exceptions, (sub)typing etc. can
easily be added as they don't interfere with the criterion. (They might make
the static analysis harder though.)

\begin{figure}
\begin{myequation}
  \textit{program}
  &::=&
  \ttt{val rec} \quad \textit{def} \quad (\ttt{and} \quad \textit{def\/})^*\\

  \textit{def}
  &::=&
  \ttt{f x\(\sb1\) x\(\sb2\) } \dots \ttt{ x\(\sb n\) = }\textit{term}\\

  \textit{expr}
  &::=&
  \ttt{x\(\sb k\)}                                                       \quad|\quad
  \ttt{f}                                                       \quad|\quad
  \textit{expr}\ \textit{expr}^+                                \quad|\quad \\
&&\ttt{C[{\it expr}]}                                           \quad|\quad
  \ttt{(}\textit{expr}, \dots, \textit{expr}\ttt{)}             \quad|\quad \\
&& \ttt{match}\ \textit{expr}\ \ttt{with}\ \textit{branch}^+    \quad|\quad
  \pi_i\,\textit{expr} \quad{\scriptstyle\textrm{(with~$i>0$)}}\\
  \textit{branch}
  &::=&
  \ttt{| C[x\(\sb k\)] ->}\ \textit{expr}
  \quad|\quad \ttt{| \_ ->}\ \textit{expr}

\end{myequation}
\caption{syntax of the programming language}
\label{fig:language}
\end{figure}

The operational semantics is the usual one and we only consider programs whose
semantics is well defined\label{rk:safe}. This can be achieved using
traditional Hindley-Milner type checking / type inference~\cite{Milner} or a
\emph{constraint checking} algorithm~\cite{EJC} ensuring that
\begin{itemize}
  \item a constructor is never projected,
  \item a tuple is never matched,
  \item an $n$-tuple is only projected on its~$i$-th component if~$1\le i\le
  n$.
\end{itemize}
To simplify the presentation, we assume that functions have an arity and are
always fully applied. Moreover, we suppose that the arguments of functions are
all first-order values. These constraints are relaxed in the actual
implementation.


\smallbreak
An important property of this language is that non-termination can only be the
result of evaluation going through an infinite sequence of calls to recursive
functions~\cite{EJC}. A consequence of that is that it is not possible to use
the notions described in this paper directly for languages where a fixed point
combinator can be defined without recursion. Extensions similar to the work of
Jones and Bohr for untyped languages~\cite{JonesB08} might be possible, at the
cost of a greatly increased complexity of implementation.

\smallbreak
A \emph{first-order value} is a closed expression built only with constructors and
(possibly empty) tuples. Examples include unary natural numbers built with
constructors~``\ttt{Z}'' and~``\ttt{S}'' or lists built with
constructors~``\ttt{Nil}'' and~``\ttt{Cons}''.
The \emph{depth} of a value is
\begin{myequation}
  \Depth(\ttt{C[$u$]})           &\eqdef& 1 + \Depth(u)\\
  \Depth\big((u_1,\dots,u_n)\big)   &\eqdef& \max_{1\leq i\leq n} \big(1+\Depth(u_i)\big)
  \ \hbox{.}
\end{myequation}
Note that values are not explicitly typed and that depth counts all
constructors. For example, the depth of a list of natural numbers counts
the~\ttt{Nil}, \ttt{Cons}, \ttt{S} and~\ttt{Z} constructors, as well as the
tuples coming with the~$\C{Cons}$ constructors.

To make examples easier to read, we will deviate from the grammar of
Figure~\ref{fig:language} and use ML-like deep pattern-matching, including
pattern-matching on tuples. Moreover, parenthesis around tuples will be
omitted when they are the argument of constructors. For example, here is
how we write the usual~\ttt{map} function: \label{map_list}
{\small\begin{alltt}
  val rec map x = match x with Nil[]  ->  Nil[]
                             | Cons[a,y]  ->  Cons[f a, map y]
\end{alltt}}\noindent
Without the previous conventions, the definition would look like
{\small\begin{alltt}
  val rec map x = match x with | Nil[y]  ->  Nil[()]
                               | Cons[y]  ->  Cons[(f \(\pi\sb1\)y, map \(\pi\sb2\)y)]
\end{alltt}}\noindent
Note that because we here restrict to first-order arguments, we cannot
formally make~\ttt{f} an argument of~\ttt{map}. We thus assume that it is
a predefined function. This constraint is relaxed in the actual
implementation.


\subsubsection*{Vocabulary and notation} 

We use a fixed-width font, possibly with subscripts, for syntactical
tokens:~``$\x$'',~``$\y$'' or~``$\x_i$'' for variables, ``\ttt{f}''
or~``\ttt{g}'' for function names, ``$\C{A}$'' for a constructor, etc. The
only exception will be the letter~$\pi$, used to represent a projection. Meta
variables representing terms will be written with italics:~``$t$'',~``$u$''
or~``$t_i$'' etc.

\medbreak
For a set of mutual recursive definitions
{\small\begin{alltt}
    val rec f \(\x\sb1\) \(\x\sb2\) \(\x\sb3\) = ... g \(t\sb1\) \(t\sb2\) ...
  and     g \(\y\sb1 \)\(\y\sb2 \)   = ...
\end{alltt}}\noindent
where~$\x_1$, $\x_2$, $\x_3$, $\y_1$ and~$\y_2$ are variables and~$t_1$
and~$t_2$ are expressions,
\begin{itemize}
\item ``\ttt{\(\x\sb1\)}'', ``\ttt{\(\x\sb2\)}'' and ``\ttt{\(\x\sb3\)}'' are the
\emph{parameters of the definition of~\ttt{f}},
\item ``\ttt{g \(t\sb1\) \(t\sb2\)}'' is a \emph{call site from~\ttt{f} to~\ttt{g}},
\item ``$t_1$'' and ``$t_2$'' are the \emph{arguments
of~\ttt{g} at this call site}.
\end{itemize}
We usually abbreviate those to \emph{parameters}, \emph{call} and
\emph{arguments}.


\subsection*{Examples} 

Here are some examples of recursive (ad-hoc) definitions that are accepted by our
criterion.

\begin{itemize}
  \item All the structurally decreasing inductive functions, like the
    \ttt{map} function given previously are accepted.

  \item Our criterion generalizes the original SCT (where the
    size of a value is its depth), and thus, all the original
    examples~\cite{SCT} pass the test. For example, the Ackermann function is
    accepted:
{\small\begin{alltt}
  val rec ack \(\x\sb1\) \(\x\sb2\) = match (\(\x\sb1\),\(\x\sb2\)) with
                          (Z[],Z[]) -> S[Z[]]
                        | (Z[],S[n]) -> S[S[n]]
                        | (S[m],Z[]) -> ack m S[Z[]]
                        | (S[m],S[n]) -> ack m (ack S[m] n)
\end{alltt}}\label{def:ackermann}%

  \item In the original SCT, the size information is lost as soon as a value
    increases. We do support a local bounded increase of size as in
{\small\begin{alltt}
    val rec \(\ttt{f}\sb1\) x = \(\ttt{g}\sb1\) A[x]
      and \(\ttt{g}\sb1\) x = match x with A[A[x]] -> \(\ttt{f}\sb1\) x
                            | _    -> ()
\end{alltt}}\label{def:f1g1}\noindent
    The call from~\ttt{\(\ttt{f}\sb1\)} to~\ttt{\(\ttt{g}\sb1\)} (that
    increases the depth by~$1$) is harmless because it is followed by a call
    from~\ttt{\(\ttt{g}\sb1\)} to~\ttt{\(\ttt{f}\sb1\)} (that decreases the
    depth by~$2$).

  \item In the definition
{\small\begin{alltt}
val rec \(\ttt{f}\sb2\) x = match x with   A[x] -> \(\ttt{f}\sb2\) B[C[x]]
                            | B[x] -> \(\ttt{f}\sb2\) x
                            | C[x] -> \(\ttt{f}\sb2\) x
\end{alltt}}\noindent
    the size of the argument increases in the first recursive call. This alone
    would make the definition non size-change terminating for the original SCT. However,
    the constructors and pattern matching imply that the first recursive
    call is necessarily followed by the second and third one, where the size
    decreases at last. This function passes the improved test.

  \item In the definition
{\small\begin{alltt}
  val rec push_left x =
    match x with Leaf[] -> Leaf[]
               | Node[t, Leaf[]] -> Node[t, Leaf[]]
               | Node[\(\ttt{t}\sb1\), Node[\(\ttt{t}\sb2\),\(\ttt{t}\sb3\)]] -> push_left Node[Node[\(\ttt{t}\sb1\),\(\ttt{t}\sb2\)],\(\ttt{t}\sb3\)]
\end{alltt}}\noindent\label{ex:push_left}%
    the depth of the argument does not decrease but the depth of its right
    subtree does. In the original SCT, the user could choose the ad-hoc notion
    of size ``depth of the right-subtree''. Our criterion will see that this
    is terminating without help.

\end{itemize}


\subsubsection*{Idea of the Algorithm} 

Just like the original SCT, our algorithm works by making an abstract
interpretation of the recursive definitions as a control-flow graph. This is
done by a static analysis independent of the actual criterion. A simple,
syntactical static analysis that allows to deal with the examples of the paper
is described in Appendix~\ref{sec:static_analysis}.
This control-flow graph only represents the evolution of arguments of recursive
calls. For example both the~\ttt{map} function and
the~\ttt{last} function
{\small\begin{alltt}
  val rec last x = match x with Cons[a,Nil[]] -> a
                              | Cons[_,x] -> last x
\end{alltt}}\noindent
have the same control-flow graph: when the function is called on a non-empty
list, it makes a recursive call to the tail of the list.

Ideally, each argument to a call should be represented by a transformation
describing how the argument is obtained from the parameters of the defined
function. To make the problem tractable, we restrict to transformations
described by a simple term language. For example, the argument of
the~\ttt{map}/\ttt{last} functions is described by~``$\pi_2 \CC{Cons} \x$'':
starting from parameter~$\x$, we remove a~$\C{Cons}$ and take the second
component of the resulting tuple. When a function has more than one parameter,
each argument of the called function is described by a term with free
variables among the parameter of the calling function.

Checking termination is done by finding a sufficient condition for the
following property of the control-flow graph: \emph{no infinite path of the
graph may come from an infinite sequence of real calls}. The two main reasons
for a path to not come from a sequence of real calls are:
\begin{itemize}
  \item there is an incompatibility in the path: for example, it is not
    possible to remove a~$\C{Cons}$ from the~$\C{Nil}$ value,
  \item it would make the depth of some value negative: for example,
    it is not possible to remove infinitely many~$\C{Cons}$ from
    a given list.
\end{itemize}
In order to do that, we will identify loops that every infinite path must go
through, and check that for all these ``coherent'' loops, there is some part
of an argument that decreases strictly.
For example, in the definition of~\ttt{push\_left}
(page~\pageref{ex:push_left}), the right subtree of the argument is
decreasing, which makes the function pass the termination test.


\section{Interpreting Calls} 

\subsection{Terms and Reduction} 

The next definition gives a way to describe how an argument of a recursive
calls is obtained from the parameters of the calling function:
\begin{defi}
  Representations for arguments are defined by the following grammar
\[
  t \in \T \quad::=\quad
    \x_k \mskip 15mu|\mskip 15mu
    \underbrace{\C{C}t \mskip 15mu|\mskip 15mu
    (t_1,\dots,t_n)}_{\hbox{\scriptsize constructors}} \mskip 15mu|\mskip 15mu
    \underbrace{\CC{C}t \mskip 15mu|\mskip 15mu
    \pi_i t}_{\hbox{\scriptsize destructors}} \mskip 15mu|\mskip 15mu
    t_1 + t_2 \mskip 15mu|\mskip 15mu
    \Zero \mskip 15mu|\mskip 15mu
    \app{w} t
\]
where~$\x_k$ can be any variable of the ambient language,~$n\geq 0$, $i\geq1$
and~$w\in\ZZ_\infty=\ZZ\cup\{\infty\}$.
We write~$\T(\x_1,\dots,\x_n)$ for the set of terms whose variables are
in~$\{\x_1,\dots,\x_n\}$.

We enforce linearity (or $n$-linearity for~$n$-tuples) for
all term formation operations with the following equations:
\[\begin{array}{rclcrcl}
    \C{C} \Zero &=& \Zero &&
    \C{C}(t_1+t_2) &=& \C{C}t_1 + \C{C}t_2  \\
    (\dots,\Zero,\dots) &=& \Zero &&
    (\dots,t_1+t_2,\dots) &=& (\dots,t_1,\dots) + (\dots,t_2,\dots) \\
    \CC{C} \Zero &=& \Zero &&
    \CC{C}(t_1+t_2) &=& \CC{C}t_1 + \CC{C}t_2  \\
    \pi_i \Zero &=& \Zero &&
    \pi_i(t_1+t_2) &=& \pi_i t_1 + \pi_i t_2  \\
    \app{w} \Zero &=& \Zero &&
    \app{w} (t_1+t_2) &=& \app{w} t_1 + \app{w} t_2\ .  \\
\end{array}\]
We also quotient~$\T$ by associativity, commutativity, neutrality of~$\Zero$,
\emph{and idempotence} of~$+$:
\[\begin{array}{rclcrcl}
    t_1 + (t_2 + t_3) &=& (t_1 + t_2) + t_3  &&
  t + \Zero &=& t  \\
    t_1 + t_2 &=& t_2 + t_1  &&
  t + t &=& t \ . \\
\end{array}\]
\end{defi}\noindent
The intuition is that:
\begin{itemize}
  \item $\x_k$ is a parameter of the calling function.
  \item $\C{C}$ is a constructor and~$(\BLANK,\dots,\BLANK)$ is a
    tuple.
  \item $\pi_i$ is a projection. It gives access to the~$i$th component of a
    tuple.
  \item $\CC{C}$ corresponds to a branch of pattern matching. It removes
    the~$\C{C}$ from a value.
  \item $\Zero$ is an artifact used to represent an error during
    evaluation. Since we only look at well defined programs (see remark on
    page~\pageref{rk:safe}), any~$\Zero$ that
    appears during analysis can be ignored as it cannot come from an actual
    computation.
  \item $t_1+t_2$ acts as a non-deterministic choice.
    Those sums will play a central role in our analysis of control-flow graphs.
  \item $\app{w}$ stands for an unknown term that may increase the depth of
    its argument by at most~$w$. For example, if~$w<0$, then the depth
    of~$\app{w}v$ is strictly less than the depth of~$v$.
    Those terms will serve as \emph{approximations} of other terms:
    for example, both~$\C{C}t$ and~$\C{D}t$ can be approximated by~$\app{1}t$,
    but each one contains strictly more information than~$\app{0} t$.
\end{itemize}
%
There is a natural notion of reduction on terms:
\begin{defi}\label{def:reduction}
  We define a reduction relation on~$\T$:
  \[\begin{array}{lrclcrcl}
(1) & \CC{C}\C{C} t & \red & t &\mskip20mu&
      \pi_i(t_1,\dots,t_n) & \red & t_i \hbox{\quad if $1\leq i\leq n$}\\
      \noalign{\medbreak}
(2) & \app{w} \C{C} t & \red & \app{w+1} t &&
      \app{w} (t_1,\dots,t_n) & \red & \sum_{1\leq i\leq n}\app{w+1} t_i
      \hbox{\quad if $n>0$}\\
(2) & \CC{C} \app{w} t & \red & \app{w-1} t &&
      \pi_i \app{w} t & \red & \app{w-1} t \\
(2) &  \app{w} \app{v} t & \red & \app{w+v} t \\
      \noalign{\medbreak}
(3) & \pi_i\C{C}t & \red & \Zero &&
      \pi_i(t_1,\dots,t_n) & \red & \Zero \hbox{\quad if $i>n$}\\
(3) & \CC{C}(t_1,\dots,t_n) & \red & \Zero && \CC{C}\C{D} t & \red & \Zero
      \hbox{\quad if $\C{C}\neq\C{D}$}
  \end{array}\]
  The symbol~``$+$'' for elements of~$\ZZ_\infty$ denotes the obvious
  addition, with~$\infty+\infty = \infty+n = \infty$.
\end{defi}
This reduction extends the operational semantics of the ambient language: the
two rules from group~$(1)$ correspond to the evaluation mechanism and the four
rules from group~$(3)$ correspond to unreachable states of the evaluation
machine. The five rules from group~$(2)$ explain how approximations behave.
Note in particular that:
\begin{itemize}
  \item a~$\app{w}$ absorbs constructors on its right and destructors on its
    left,
  \item a~$\app{w}$ may approximate some projections and we don't know which
    components of a tuple it may access. This is why a sum appears in
    the reduction.
\end{itemize}

\begin{lem}\label{lem:SN}
  The reduction~$\red$ is strongly normalizing and confluent. We
  write~$\nf(t)$ for the unique normal form of~$t$.
\end{lem}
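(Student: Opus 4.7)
The plan is to establish strong normalization by a multiset-based measure, and then derive confluence from local confluence using Newman's lemma. For SN, I first observe that the quotient equations let us rewrite every term in a canonical form $s_1 + \dots + s_k$, where each $s_i$ is a \emph{simple} term, that is, one built without $+$ or $\Zero$: distribute every $+$ outward with the linearity equations, collapse repeated summands with idempotence, and drop the neutral $\Zero$'s. To each term $t$ I associate the finite multiset of sizes $|s_i|$ of the summands of its canonical form, and compare such multisets using the Dershowitz--Manna multiset extension of~$<$ on~$\NN$, which is well-founded.

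I then check that every reduction strictly decreases this multiset. All non-duplicating rules (groups (1) and (3), and those in group (2) not involving a tuple argument) replace a single simple summand $C[u]$ by $C[u']$ with $|u'| < |u|$ by direct inspection, which is a strict multiset decrease (or a removal, when the reduct is $\Zero$). The only delicate case is the duplicating rule $\app{w}(t_1,\dots,t_n) \red \sum_j \app{w+1}t_j$ fired inside a simple context $C[\,\cdot\,]$: the linearity equations rewrite the whole result as $\sum_j C[\app{w+1}t_j]$, so a single summand of size $|C| + 2 + \sum_j |t_j|$ is replaced by at most $n$ summands each of size $|C| + 1 + |t_j|$, every one strictly smaller, hence again a multiset decrease. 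The main obstacle is precisely this duplicating rule: a naive size-based measure \emph{increases} after distributing the resulting sum through a non-trivial context, but tracking multisets of summand sizes sidesteps this because, however heavily the context is duplicated, every new summand is strictly smaller than the one it replaces.

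For confluence, by Newman's lemma it is enough to prove local confluence. I would enumerate the critical pairs, which correspond to overlapping redexes in nested patterns such as $\CC{C}\app{w}\C{C}t$, $\CC{C}\app{w}\C{D}t$, $\pi_i\app{w}(t_1,\dots,t_n)$, $\app{u}\app{w}\C{C}t$, $\app{u}\app{w}(t_1,\dots,t_n)$, and $\CC{C}\app{w}(t_1,\dots,t_n)$; in each case both reduction paths converge to a common term in at most two further steps, with the arithmetic on the weights (for example $(w-1)+1 = w$ and $w+v = (w+1)+(v-1)$) making the two endpoints literally equal. Disjoint-redex overlaps are trivially joinable, and compatibility of the reduction relation with the full quotient (in particular with the linearity equations) has to be checked along the way, but this is routine since the equations commute with every rewrite step.
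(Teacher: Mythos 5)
Your proof is correct, and for the confluence half it takes exactly the paper's route: Newman's lemma plus a case analysis of the critical pairs, all of which sit around an~$\app{w}$ and close by arithmetic on the weights. (Your list of pairs is indicative rather than exhaustive---a complete proof should also treat $\pi_i\app{w}\C{C}t$, $\CC{C}\app{w}\app{v}t$, $\pi_i\app{w}\app{v}t$ and $\app{w}\app{v}\app{u}t$, giving the paper's nine overlaps---but they all close in the same one or two steps, so nothing of substance is missing.) Where you genuinely diverge is strong normalization. The paper dismisses it in one line by asserting that the depth of terms decreases strictly under reduction; taken literally this is a little glib, since depth is a maximum over summands and tuple components and so need not decrease \emph{strictly} when the redex sits in a shallow branch of a deeper term, and any size-based measure actually grows under the duplicating rule $\app{w}(t_1,\dots,t_n)\red\sum_i\app{w+1}t_i$ once the resulting sum is distributed through the surrounding context. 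Your multiset of summand sizes under the Dershowitz--Manna order handles both difficulties explicitly: a step touches exactly one summand of the canonical $+$-free decomposition and replaces it by finitely many strictly smaller summands (or erases it when the reduct is~$\Zero$), and collapsing duplicates by idempotence only shrinks the multiset further. The cost is that you must know the canonical decomposition is well defined on the quotient and that reduction is compatible with the equations---points you rightly flag as routine but which the paper never raises. In short: same skeleton, but your termination measure is the more carefully justified of the two.
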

We write~$t\approx u$ when~$t$ and~$u$ have the same normal form. This
lemma implies that~$\approx$ is the least equivalence relation containing
reduction.
\begin{proof}[Proof of Lemma~\ref{lem:SN}]
  Strong normalization is easy as the depth of terms decreases strictly during
  reduction. By Newman's lemma, confluence thus follows from local confluence
  which follows from examination of the critical pairs:
  \[\begin{array}{c@{\qquad}c@{\qquad}c}
      \CC{D}\app{w} \C{C} t &
      \CC{D}\app{w} (t_1,\dots,t_n) &
      \CC{C}\app{w}\app{v} t \\
      \pi_i\app{w} \C{C} t &
      \pi_i\app{w} (t_1,\dots,t_n) &
      \pi_1\app{w}\app{v} t \\
      \app{w}\app{v}\C{C} t &
      \app{w}\app{v}(t_1,\dots,t_n) &
      \app{w}\app{v}\app{u} t
      \ .
  \end{array}\]
  For example,~$\CC{D}\app{w} \C{C} t$ reduces both to~$\app{w-1}\C{C}t$
  and~$\CC{D}\app{w+1} t$. Luckily, those two terms reduce
  to~$\app{w} t$. The same holds for the eight remaining critical pairs.
\end{proof}
Call a term~$t\in\T$ \emph{simple} if it is in normal form and doesn't
contain~$+$ or~$\Zero$. We have:
\begin{lem}
  Every term~$t\in\T$ reduces to a (possibly empty) sum of simple terms,
  where the empty sum is identified with~$\Zero$.
\end{lem}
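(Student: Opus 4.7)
The plan is to combine Lemma~\ref{lem:SN} with a structural analysis of the quotient equations. By strong normalization, every $t\in\T$ has a unique normal form $\nf(t)$, so it is enough to exhibit a representative of $\nf(t)$ of the shape of a (possibly empty) sum of simple terms.

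First I would show, by induction on term structure, that the linearity and strictness equations alone suffice to rewrite any element of $\T$ as a finite sum $\sum_i u_i$ in which no summand $u_i$ contains $+$ or $\Zero$. Concretely, every formation rule in the grammar of $\T$ is declared to be strict in $\Zero$ and to distribute over $+$, so both symbols can always be pushed outward through any enclosing constructor or destructor (as in $\C{C}(v_1+v_2)=\C{C}v_1+\C{C}v_2$, $\app{w}\Zero=\Zero$, $\pi_i(v_1+v_2)=\pi_i v_1+\pi_i v_2$, and similarly for $\CC{C}$ and tuples). Iterating until no $+$ or $\Zero$ sits beneath any other symbol yields the desired representative; if everything collapses to $\Zero$ under the strictness equations, the result is the empty sum.

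Applied to $\nf(t)$, this gives an equivalent representative $\sum_i u_i$ in which every $u_i$ is free of $+$ and $\Zero$. Because reduction is closed under arbitrary term contexts, a redex occurring inside some $u_i$ would give a redex of the whole sum, contradicting the fact that $\nf(t)$ is a normal form. Consequently each $u_i$ is itself in normal form, hence simple by definition, and the lemma follows.

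The only delicate step is making the rewriting of the second paragraph fully rigorous: the quoted equations also include associativity, commutativity and idempotence of $+$, so finite sums are most naturally treated as finite sets of summands, and termination of the outward-pushing procedure must be justified by an appropriate well-founded measure (for instance, the total number of occurrences of $+$ and $\Zero$ lying strictly below a constructor, destructor or $\app{w}$, with a secondary measure on the depth of such occurrences). This bookkeeping is routine but is where care is needed; the remaining verifications are immediate from the definitions.
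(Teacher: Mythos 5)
Your proof is correct and follows essentially the same route as the paper, which simply observes that the claim "follows from the fact that all term constructions are linear and that the reduction is strongly normalizing"; you have merely spelled out the details (pushing $+$ and $\Zero$ outward via the linearity equations, then noting each resulting summand inherits normality from $\nf(t)$).
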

\begin{proof}
  This follows from the fact that all term constructions are linear and that
  the reduction is strongly normalizing. Note that because of confluence,
  associativity, commutativity and idempotence of~$+$, this representation is
  essentially unique.
\end{proof}
Simple terms have a very constrained form: all the constructors are on the
left and all the destructors are on the right. More precisely:
\begin{lem}\label{lem:normal_form}
  The simple terms of~$\T$ are generated by the grammar
  \begin{myequation}
    t
    &::=&
    \C{C}t              \mskip 15mu|\mskip 15mu
    (t_1,\dots,t_n)     \mskip 15mu|\mskip 15mu
    \seq d              \mskip 15mu|\mskip 15mu
  \app{w}\seq d\qquad\hbox{\small$(n>0)$}\\
    \seq d
    &::=&
      () \mskip 15mu|\mskip15mu \seq{d_v}\\
    \seq{d_v}
    &::=&
    \x_k            \mskip 15mu|\mskip 15mu
    \pi_i \seq{d_v}    \mskip 15mu|\mskip 15mu
    \CC{C} \seq{d_v}
  \end{myequation}
  We will sometimes write~$\seq d\x = d_1\cdots d_n\x$ for some~$\seq{d_v}$ ending with
  variable~$\x$.

  The length~$|\seq d|$ of~$\seq d$ is the number of
  destructors~$\CC{C}$/$\pi_i$ it contains.
\end{lem}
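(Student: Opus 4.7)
The plan is to proceed by induction on the size of the simple term $t$, performing a case analysis on its head symbol. A preliminary observation is that every (strict) subterm of a simple term is itself simple: reductions propagate into their containing contexts, so a reducible subterm would force $t$ to be reducible too; and no subterm can contain $+$ or $\Zero$ when $t$ does not. This makes the induction hypothesis available for all immediate subterms.

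The two genuine base cases are $t = \x_k$ and $t = ()$: the former fits as a $\seq{d_v}$ (hence as a $\seq d$, hence as a $t$ in the grammar), and the latter fits directly as a $\seq d$. When $t = \C{C}u$ or $t = (t_1,\dots,t_n)$ with $n > 0$, the immediate subterms are simple, so the induction hypothesis gives them the claimed shape and $t$ falls into one of the first two clauses of the grammar.

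The interesting cases are when the head symbol is a destructor or an approximation. Suppose $t = \pi_i u$ or $t = \CC{C}u$. Reductions from groups $(1)$ and $(3)$ rule out $u$ being constructor- or tuple-headed (including the empty tuple: the clause $\pi_i(t_1,\dots,t_n) \red \Zero$ when $i > n$ applies with $n = 0$, and $\CC{C}(t_1,\dots,t_n) \red \Zero$ fires for any $n$), and a rule from group $(2)$ rules out $u$ being $\app{w}$-headed. Hence $u$ is either a variable or is itself destructor-headed, so by induction $u$ is some $\seq{d_v}$ and $t$ is again a $\seq{d_v}$. Suppose now $t = \app{w}u$. The five rules of group $(2)$ forbid $u$ from being constructor-headed, non-empty-tuple-headed, or $\app{v}$-headed; the remaining possibilities are $u = ()$, $u = \x_k$, or $u$ destructor-headed, and by induction each of these is a $\seq d$, so $t$ fits the $\app{w}\seq d$ clause.

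The main obstacle is simply bookkeeping: one has to make sure that, in each subcase, every reduction rule of Definition~\ref{def:reduction} has been considered to exclude the forbidden shapes of the immediate subterm, with particular care at the edge cases involving the empty tuple and the side conditions $n > 0$ and $i \leq n$. Once the case analysis is exhaustive, uniqueness of the grammar's parse and the final statement about $|\seq d|$ counting destructors are immediate from the shape of $\seq{d_v}$.
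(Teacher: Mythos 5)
The paper states this lemma without proof, so there is nothing to compare against; your routine induction on term size with a case analysis on the head symbol is exactly the argument the author is implicitly relying on, and your handling of the edge cases (the irreducibility of $\app{w}()$ because the tuple rule requires $n>0$, and the collapse of $\pi_i()$ and $\CC{C}()$ to $\Zero$) is correct. The only thing worth adding is the trivial converse inclusion --- every term produced by the grammar is simple, since no production introduces a redex, a~$+$, or a~$\Zero$ --- which is needed if ``generated by'' is read as an equality of sets.
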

We now introduce a preorder describing approximation.
\begin{defi}\label{def:preorder}
  The relation~$\less$ is the least preorder on~$\T$
  satisfying
  \begin{itemize}
    \item $\less$ is contextual: if $t$ is a term, and
      if~$u_1\less u_2$, then~$t[\x:=u_1]\less t[\x:=u_2]$,
    \item $\less$ is compatible with~$\approx$: if $t\approx u$ then~$u \less t$
      and~$t\less u$,
    \item $\less$ is compatible with~$+$ and~$\Zero$ : $\Zero\less t$ and~$t
      \less t+u$,
    \item if~$v \leq w$ in~$\ZZ_\infty$ then~$\app{v} t \less \app{w} t$,
    \item $t \less \app{0} t$.
  \end{itemize}
  When~$t\less u$, we say that ``$t$ is finer than~$u$'' or that ``$u$ is an
  approximation of~$t$''.
\end{defi}
This definition implies for example that~$\C{C}\x \less \app{0}\C{C}\x \approx
\app{1}\x$, and thus, by contextuality, that~$\C{C}t \less \app{1}t$ for any
term~$t$. Appendix~\ref{sec:inductive_approximation} gives a characterization
of this preorder that is easier to implement because it doesn't use
contextuality. It implies in
particular the following lemma:
\begin{lem}\label{lem:approximation_destructors}
  We have $\app{w}\seq d \less \app{w'}\seq b$ if and only~$\seq b$ is a
  suffix of~$\seq d$ and~$w+|\seq b| \leq w'+|\seq b|$. In
  particular,~$\app{w}() \less \app{w'}()$ if and only if~$w\leq w'$.
\end{lem}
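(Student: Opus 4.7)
The plan is to prove the biconditional in two directions, leveraging the inductive characterization of $\less$ from Appendix~\ref{sec:inductive_approximation} so that the contextuality clause in Definition~\ref{def:preorder} does not blow up the case analysis.

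For the implication $(\Leftarrow)$, suppose $\seq d=\seq c\,\seq b$ with $\seq b$ a suffix of $\seq d$ and $w\leq w'$. Starting from $\seq b\x \less \app{0}\seq b\x$ (an instance of $t\less\app{0}t$), contextuality applied to the term $\seq c\,\x$ gives $\seq c\seq b\x \less \seq c\app{0}\seq b\x$. The right-hand side reduces: each destructor in $\seq c$ absorbs the approximation adjacent to it on the right via $\CC{C}\app{v}t\red\app{v-1}t$ or $\pi_i\app{v}t\red\app{v-1}t$, yielding $\seq c\app{0}\seq b\x \approx \app{-|\seq c|}\seq b\x$. Using contextuality once more with the context $\app{w}\,\x$ and reducing via $\app{w}\app{v}t\red\app{w+v}t$ produces $\app{w}\seq d\x \less \app{w-|\seq c|}\seq b\x$. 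Since $w\leq w'$ and $|\seq c|\geq 0$ give $w-|\seq c|\leq w'$, the monotonicity clause ``$v\leq w'\Rightarrow \app{v}t\less\app{w'}t$'' concludes. The ``In particular'' clause is the special case $\seq d=\seq b=()$, where $|\seq b|=0$ and the inequality collapses to $w\leq w'$.

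For the converse $(\Rightarrow)$, I would argue by induction on a derivation of $\app{w}\seq d\less\app{w'}\seq b$ in the contextuality-free system of Appendix~\ref{sec:inductive_approximation}. Both endpoints are simple terms of the rigid shape $\app{?}\,\seq{?}$: no $+$, no $\Zero$, and no surface constructor. The only rules that can directly produce such a conclusion should therefore be the monotonicity rule on the weight (which preserves the destructor chain and forces an inequality between $w$ and $w'$) and rules that peel off a matching destructor from both sides (forcing a common suffix). A leaf-up reading of the derivation then yields both the suffix condition on $\seq b$ and the weight inequality $w+|\seq b|\leq w'+|\seq b|$.

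The main obstacle will be this $(\Rightarrow)$ direction. The preorder of Definition~\ref{def:preorder} is closed under contextuality, so a priori a derivation may weave through non-simple intermediate terms, and it is not obvious that such detours buy no extra derivability between two simple terms of this restricted shape. The technical heart of the proof is therefore to invoke Appendix~\ref{sec:inductive_approximation} for a canonical derivation format on comparisons between simple terms, after which the rigid-shape analysis reduces to a routine finite case split on the destructor chains $\seq d$ and $\seq b$.
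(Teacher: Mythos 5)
Your ($\Leftarrow$) direction is correct and is essentially the computation the paper has in mind: start from $\seq b\x\less\app{0}\seq b\x$, apply contextuality, let the destructors of $\seq c$ absorb the approximation, and finish with monotonicity in the weight, obtaining $\app{w}\seq c\seq b\x\less\app{w-|\seq c|}\seq b\x$. Your strategy for ($\Rightarrow$) --- pass to the contextuality-free system of Appendix~\ref{sec:inductive_approximation} via Lemma~\ref{lem:inductive_approximation} and invert the derivation --- is also exactly the paper's route: its entire ``proof'' of this lemma is that pointer. Two corrections are needed, though. Your inventory of the rules is off: Definition~\ref{def:inductive_preorder} has no rule that peels one matched destructor off both sides; destructor chains are compared wholesale by rule~$(5)$, whose side condition (after renaming to match the lemma) is ``$\seq b$ is a suffix of $\seq d$ and $w+|\seq b|\leq w'+|\seq d|$'', while rules~$(3)$ and~$(4)$ are inert on terms of the shape $\app{w}\seq d$ because $\nf(\app{0}\app{w}\seq d)=\app{w}\seq d$. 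So the inversion is even easier than you anticipate --- but it delivers a different inequality from the one you announce.

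That is the genuine gap: you claim the leaf-up reading yields $w+|\seq b|\leq w'+|\seq b|$, i.e.\ $w\leq w'$, taking the printed statement at face value. That only-if direction is false, and your own ($\Leftarrow$) computation refutes it: you prove $\app{w}\seq c\seq b\x\less\app{w-|\seq c|}\seq b\x$, so for instance $\app{0}\CC{C}\x\less\app{-1}\x$ holds while $0\leq-1$ does not. The bound ``$w'+|\seq b|$'' in the statement is a typo for ``$w'+|\seq d|$'': the correct equivalence is that $\seq b$ is a suffix of $\seq d$ and $w+|\seq b|\leq w'+|\seq d|$ (equivalently $w-|\seq c|\leq w'$ where $\seq d=\seq c\,\seq b$), which is precisely what rule~$(5)$ gives and which collapses to $w\leq w'$ only in the special case $\seq d=\seq b=()$. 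With the corrected inequality both of your directions go through essentially as you sketch them; as written, the ($\Rightarrow$) direction cannot be completed because the inequality you are trying to extract from the derivation is not derivable.
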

An important property is that a finer term has at least as many head constructors
as a coarser one:
\begin{lem}\label{lem:orderHeadConstructors}
  If~$u\not\approx\Zero$, we have:
  \[
    u \less \C{C}v \quad\iff\quad u=\C{C}{u'} \hbox{ with } u' \less v
  \]
  and
  \[
    u \less (v_1,\dots,v_n) \quad\iff\quad u=(u_1, \dots,u_n) \hbox{ with }
    u_1 \less v_1 \dots u_n \less v_n
  \]
\end{lem}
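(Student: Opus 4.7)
The plan is to prove each equivalence by decomposing $u$ into a sum of simple terms and analyzing each summand, where the crucial step rests on the inductive characterization of $\less$ promised in Appendix~\ref{sec:inductive_approximation}. The right-to-left directions are immediate from contextuality (resp.\ $n$-linearity) of $\less$: from $u' \less v$ we obtain $\C{C}u' \less \C{C}v$, and from $u_i \less v_i$ for every $i$ we obtain $(u_1,\dots,u_n) \less (v_1,\dots,v_n)$.

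For the nontrivial left-to-right direction in the constructor case, I would assume $u \less \C{C}v$ with $u \not\approx \Zero$, and apply the preceding lemma to write $u \approx s_1 + \cdots + s_k$ as a sum of simple terms, with $k \geq 1$ since $u \not\approx \Zero$. By iterating $t \less t + w$, each summand satisfies $s_i \less u \less \C{C}v$. The heart of the proof is then the following claim: every simple term $s$ with $s \less \C{C}v$ must have the shape $s = \C{C}s'$ with $s' \less v$. To establish this, I would invoke the inductive characterization of $\less$ from Appendix~\ref{sec:inductive_approximation}, which unlike Definition~\ref{def:preorder} does not use contextuality and so permits a direct case analysis on the head of $s$. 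Inspecting the grammar of Lemma~\ref{lem:normal_form}, a simple term is either constructor-headed ($\C{D}s'$, or a tuple) or purely destructor-shaped ($\seq{d}$, possibly prefixed by~$\app{w}$). The characterization rules out the destructor-shaped cases (in the spirit of Lemma~\ref{lem:approximation_destructors}, such terms can only approximate other destructor-shaped terms), as well as $\C{D}s' \less \C{C}v$ when $\C{D} \neq \C{C}$ and $(s_1,\dots,s_n) \less \C{C}v$. Only $s = \C{C}s'$ remains, and the same characterization yields $s' \less v$.

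Granted the claim, I can write each $s_i = \C{C}s'_i$ with $s'_i \less v$. Using the linearity equation $\C{C}(t_1 + t_2) = \C{C}t_1 + \C{C}t_2$, we refactor $u \approx \C{C}(s'_1 + \cdots + s'_k)$, so we may take $u' = s'_1 + \cdots + s'_k$. To see that $u' \less v$, I combine the individual inequalities $s'_i \less v$ by contextuality in the contexts $\x + s'_{i+1} + \cdots + s'_k$ and $v + \cdots + v + \x$, concluding with idempotence $v + v \approx v$ and transitivity. The tuple case is treated identically, replacing the constructor clause by the $n$-linearity rule $(\dots,t_1+t_2,\dots) = (\dots,t_1,\dots) + (\dots,t_2,\dots)$ and using the tuple variant of the claim to conclude.

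The main obstacle is the claim about simple terms. Without the inductive reformulation of $\less$ from the appendix, one would have to reason about arbitrary derivations built from the closure rules of Definition~\ref{def:preorder}, and the contextuality clause would allow the head constructor of $u$ to be hidden arbitrarily deep inside a substitution context; the appendix's characterization is precisely what makes the head analysis tractable. Everything else is straightforward bookkeeping with normal forms and the linearity equations.
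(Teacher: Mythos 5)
Your proof is correct and follows exactly the route the paper itself takes: the paper's proof is a one-line pointer to the characterization of $\less$ via $\sqless$ in Appendix~\ref{sec:inductive_approximation} (Lemma~\ref{lem:inductive_approximation}), and your argument fills in precisely those details --- decomposing $u$ into simple summands, observing that the only $\sqless$-rule whose conclusion has a constructor- (resp.\ tuple-) headed right-hand side is rule~$(1)$ (resp.\ $(2)$), and reassembling by linearity. The only remark worth adding is that Lemma~\ref{lem:inductive_approximation} lets the right-hand side be regrouped into non-simple summands, but since any grouping of summands of $\C{C}v$ is again of the form $\C{C}(\cdots)$ by linearity, your case analysis goes through unchanged.
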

\begin{proof}
  There is a simple, direct inductive proof, but this lemma also follows from
  the characterization of~$\less$ in
  Appendix~\ref{sec:inductive_approximation}
  (Lemma~\ref{lem:inductive_approximation}).
\end{proof}
The next lemma gives some facts about the preorder~$(\T,\less)$ that may help
getting some intuitions.
\begin{lem}\label{lem:app_preorder}
  We have
  \begin{itemize}
    \item $\Zero$ is the least element,
    \item $\app{\infty}()$ is the greatest element of~$\T()$, the set of
    \emph{closed terms},
    \item $+$ is a least-upper bound, i.e., $t_1 + t_2 \less u$ iff~$t_1\less u$
      and~$t_2\less u$,
    \item if~$t$ and~$u$ are simple, then~$t\less u \hbox{ and } u\less t$
      iff~$t=u$.
  \end{itemize}
\end{lem}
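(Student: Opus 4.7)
The plan is to prove the four clauses in the order 1, 3, 2, 4, since Part~2 will rely on Part~3. Part~1 is immediate: the clause $\Zero \less t$ appears directly in Definition~\ref{def:preorder}.

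For Part~3, the forward implication follows from $t \less t+u$ (together with commutativity of $+$, which is built into the quotient) and transitivity. For the backward implication, assume $t_1 \less u$ and $t_2 \less u$; contextuality applied to the one-hole context $\y + t_2$ gives $t_1 + t_2 \less u + t_2$, and contextuality applied to the context $u + \y$ gives $u + t_2 \less u + u$. Idempotence yields $u + u \approx u$, so compatibility of $\less$ with $\approx$ plus transitivity deliver $t_1 + t_2 \less u$.

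For Part~2, I would prove by induction on the grammar of simple terms (Lemma~\ref{lem:normal_form}), specialised to the variable-free case, that every simple closed term $v$ satisfies $v \less \app{\infty}()$; the general case then follows because any closed term reduces to a sum of simple closed terms and Part~3 applies. The base cases $()$ and $\app{w}()$ are immediate from $t \less \app{0}t$ and the clause $v\leq w \Rightarrow \app{v}t \less \app{w}t$. For $\C{C}t'$, the induction hypothesis and contextuality give $\C{C}t' \less \C{C}\app{\infty}()$; then $\C{C}\app{\infty}() \less \app{0}\C{C}\app{\infty}()$, and the reduction chain $\app{0}\C{C}\app{\infty}() \red \app{1}\app{\infty}() \red \app{\infty}()$ closes the argument via $\approx$. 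The tuple case is analogous, with Part~3 used to collapse the sum produced by the reduction of $\app{0}(\app{\infty}(),\dots,\app{\infty}())$.

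Part~4 is the delicate one and would be done by induction on the shape of $t$. When $t = \C{C}t'$, Lemma~\ref{lem:orderHeadConstructors} applied to $u \less t$ forces $u = \C{C}u'$ with $u' \less t'$, and applied to $t \less u$ gives $t' \less u'$; the induction hypothesis then yields $t = u$. The tuple case is analogous. When $t$ has no constructor or tuple head, the same lemma applied to $t \less u$ forbids $u$ from having one either, so both $t$ and $u$ are of the form $\seq d$ or $\app{w}\seq d$. Equality can then be read off Lemma~\ref{lem:approximation_destructors}: after promoting any bare $\seq d$ to $\app{0}\seq d$ via $t \less \app{0}t$ so that the suffix inequality applies on both sides, the two suffix conditions together force the destructor chains to coincide, and the two weight inequalities then force the weights to match. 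The main obstacle is precisely this last case, where the interaction between bare destructor chains and their approximated variants requires some care; the inductive characterization of $\less$ developed in Appendix~\ref{sec:inductive_approximation} provides a cleaner alternative should the elementary argument become unwieldy.
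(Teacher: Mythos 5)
Your overall strategy coincides with the paper's: the paper dispatches the first three points as ``direct'' and derives the last from Lemmas~\ref{lem:approximation_destructors} and~\ref{lem:orderHeadConstructors}, which is exactly what you do, only with the details written out. Parts~1--3 are correct as argued (the contextuality-plus-idempotence derivation of the least-upper-bound property, and the induction over closed simple terms for Part~2, both go through).

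The one step that does not quite work as written is the final case of Part~4, where $t$ and $u$ are both destructor-shaped. Promoting a bare $\seq d$ to $\app{0}\seq d$ is a lossy move: $\seq d$ and $\app{0}\seq d$ are two \emph{distinct} simple terms with the same promotion, so even after the two suffix conditions force the destructor chains to coincide and the two weight inequalities force the weights to match, you have only identified the promoted terms, not $t$ and $u$ themselves. Concretely, if $t=\seq d$ and $u=\app{0}\seq d$, your argument would run to completion without ever establishing $t=u$ (which is in fact false for that pair --- the point being that the hypotheses cannot hold for it, and your argument does not see that). The repair is to note that the mixed case is vacuous: in the inductive characterization of Appendix~\ref{sec:inductive_approximation}, the only rule whose conclusion has a bare $\seq d$ on the right-hand side is rule~$(6)$, so $u\sqless \seq d$ already forces $u=\seq d$ literally. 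Hence either both terms are bare (and rule~$(6)$ gives equality outright) or both carry an explicit $\app{w}$ (and Lemma~\ref{lem:approximation_destructors} applies in both directions with no promotion needed). With that case split in place of the promotion, the proof closes.
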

The last point follows from Lemmas~\ref{lem:approximation_destructors}
and~\ref{lem:orderHeadConstructors}. The rest is direct.

\subsection{Substitutions and Control-Flow Graphs} 

Just like a term is meant to represent one argument of a recursive call,
a substitution~$\subst{\x_1:=u_1;\dots;\x_n:=u_n}$ is meant to represent
\emph{all} the arguments of a recursive call to an~$n$-ary function. In order
to follow the evolution of arguments along several recursive calls, we need to
\emph{compose} substitutions: given some terms~$t$,~$u_1$, \dots,$u_n$
in~$\T$, we define~$t\subst{\x_1:=u_1;\dots;\x_n:=u_n}$ as the parallel
substitution of each~$\x_i$ by~$u_i$. The composition~$\tau\comp\sigma$ of two
substitutions~$\tau=\subst{\x_1:=u_1;\dots;\x_n=u_n}$ and~$\sigma$ is simply the
substitution~$\tau\comp\sigma = \subst{\x_1:=u_1\sigma;\dots;\x_n:=u_n\sigma}$.

\begin{lem}
  Composition of substitutions is associative and monotonic (for the pointwise
  order) on the right and on the left: if~$\tau_1 \less \tau_2$ then~$\sigma
  \comp \tau_1 \less \sigma \comp \tau_2$ and~$\tau_1 \comp \sigma \less
  \tau_2 \comp \sigma$.
\end{lem}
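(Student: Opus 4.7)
The plan is to reduce all three claims to a single technical fact---the \emph{substitution lemma} $(t\tau)\sigma \approx t(\tau\comp\sigma)$---and then read off associativity and both monotonicity statements from it. I would prove the substitution lemma by structural induction on $t$: the variable case $t = \x_i$ holds by the very definition of $\comp$, while the cases for constructors, tuples, destructors, $\app{w}\cdot$, $+$ and $\Zero$ follow because application of a substitution is defined to commute with every term former, and these commutations respect the linearity and quotient equations listed in the definition of $\T$. Associativity is then immediate: for each parameter $\x_i$, both $(\rho\comp\tau)\comp\sigma$ and $\rho\comp(\tau\comp\sigma)$ send $\x_i$ to $\rho(\x_i)\tau\sigma$ modulo $\approx$.

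For right monotonicity I would unfold composition: assuming $\tau_1 \less \tau_2$, the $i$-th component of $\sigma\comp\tau_k$ is $\sigma(\x_i)\tau_k$, so what I need is $s\tau_1 \less s\tau_2$ for every term $s$. This is exactly the parallel-substitution version of the contextuality clause of Definition~\ref{def:preorder}, which I would derive from the single-variable version by substituting one parameter of the called function at a time. For left monotonicity, the $i$-th components become $\tau_1(\x_i)\sigma$ and $\tau_2(\x_i)\sigma$, so the task reduces to showing $u\less u'$ implies $u\sigma\less u'\sigma$ for a fixed $\sigma$. Here I would use the universal-property characterization of $\less$: define $u\less_\sigma u'$ to mean $u\sigma\less u'\sigma$, check that $\less_\sigma$ satisfies all five defining clauses of $\less$, and conclude $\less\,\subseteq\,\less_\sigma$ by minimality. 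The only clause requiring real work is contextuality, where the needed identity is once more the substitution lemma; compatibility with $\approx$, with $+$ and $\Zero$, and the two clauses about $\app{w}\cdot$ are direct because substitution commutes with each of those constructions.

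The main obstacle, and the only place where genuine care is needed, is the bookkeeping of free variables when iterating the single-variable contextuality and when setting up the substitution lemma. One has to keep the domain of $\tau$ (the parameters of the called function) disjoint from the free variables appearing in the image of $\sigma$ (values built from parameters of the calling function); otherwise capture could spoil the induction. In the present setting this disjointness is automatic from the convention that parameters of distinct function definitions are chosen distinct, but it must be stated explicitly to justify each application of the substitution lemma.
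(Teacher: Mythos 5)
Your proposal is correct and follows essentially the same route as the paper: your minimality argument for $\less_\sigma$ is exactly the paper's induction on the derivation of $t_1\less t_2$, the other monotonicity is read off from contextuality in both proofs, and the commutation identity $t[\y:=v][\x:=u]=t[\x:=u]\big[\y:=v[\x:=u]\big]$ that you package as a ``substitution lemma'' is precisely what the paper uses inside the contextuality case. One caveat on your final paragraph: the disjointness you invoke is \emph{not} automatic, because the substitutions that matter most label loops $\ttt{f}\to\ttt{f}$ (e.g.\ the one for \ttt{ack}), where the domain of $\tau$ and the variables occurring in its image are the same set $\{\x_1,\dots,\x_n\}$; so the reduction of parallel contextuality to the single-variable clause of Definition~\ref{def:preorder} needs an explicit pass through fresh renamed variables (or a direct argument for the parallel version), not an appeal to distinctness of parameters across different definitions.
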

\proof
  Associativity is obvious. Monotonicity on the left follows from the fact
  that~$\less$ is contextual.
  For monotonicity on the right, we show~``$t_1\less t_2$
  implies~$t_1[\x:=u]\less t_2[\x:=u]$'' by induction on~$t_1\less t_2$.
  The only interesting case is when~$t_1 \less t_2$ because~$t_1 = t[\y:=v_1]$
  and~$t_2=t[\y:=v_2]$ and~$v_1\less v_2$. By induction hypothesis, we
  have~$v_1[\x:=u] \less v_2[\x:=u]$. There are two cases:
  \begin{itemize}
    \item if~$\x=\y$, we have~$t_i[\x:=u] = t[\x:=v_i][\x:=u] =
      t\big[\x:=v_i[\x:=v_i]\big]$ and we get~$t_1[\x:=u] \less t_2[\x:=u]$ by
      contextuality applied to the induction hypothesis;

    \item if~$\x\neq\y$, we have~$t_i[\x:=u] = t[\y:=v_i][\x:=u] =
      t[\x:=u]\big[\y:=v_i[\x:=u]\big]$, and here again, we get~$t_1[\x:=u]
      \less t_2[\x:=u]$ by contextuality applied to the induction hypothesis.\qed
  \end{itemize}

We can now define what the abstract interpretations for
our programs will be:
\begin{defi}
  A \emph{control-flow graph} for a set of mutually recursive definitions is a labeled
  graph where:
  \begin{itemize}
    \item vertices are function names,
    \item if the parameters of~\ttt{f} are~$\y_1, \dots,\y_m$ and the
      parameters of~\ttt{g} are~$\x_1,\dots,\x_n$, the labels of arcs
      from~\ttt{f} to~\ttt{g} are substitutions~$\subst{\x_1 := u_1; \dots;
      \x_n := u_n}$, where each~$u_i$ is a term in~$\T(\y_1,\dots,\y_m)$.
  \end{itemize}
\end{defi}

\noindent That a control-flow graph is \emph{safe} (Definition~\ref{d:safety}) means
that it gives approximations of the real evolution of arguments of the
recursive calls during evaluation. Since we are in a call-by-value language,
those arguments are first-order values of
the ambient language (see page~\pageref{sub:ambient_language}) which can be
embedded in~$\T()$.

\begin{defi}
  A \emph{value} is a simple term of~$\T()$, i.e., a simple closed term. An
  \emph{exact value} is a value which doesn't contain any~$\app{w}$,
  with~$w\in\ZZ_\infty$.
\end{defi}
First-order values of the ambient language correspond precisely to
exact values in~$\T$.
We can now define safety formally:
\begin{defi}\label{d:safety}
  Let~$G$ be a control-flow graph for some recursive definitions,
  \begin{enumerate}
  \item suppose we have a call site from~\ttt{f} to~\ttt{g}:
  {\small\begin{alltt}
  val rec f \(\x\sb1\) \(\x\sb2\) ... \(\x\sb{n}\) =
    ... g \(u\sb1\) ... \(u\sb{m}\)
    ...\end{alltt}}\noindent
    An arc~$\ttt{f}\morphism{\sigma}\ttt{g}$ in~$G$ \emph{safely represents}
  this particular call site if for every substitution~$\rho$ of the parameters by
  \emph{exact values}~$\rho=\subst{\x_1:=v_1;\dots;\x_n:=v_n}$,
  we have
  \[
    \subst{
      \y_1:=[\![u_1]\!]_\rho
      ;\dots;
      \y_m:=[\![u_m]\!]_\rho
    }
    \quad\less\quad
    \sigma\comp\rho
  \]
  where each~$[\![u_i]\!]_\rho$ is the value of~$u_i$ given by the operational
  semantics of the language, in the environment where each variable~$\x_i$ has
  value~$\rho(\x_i)$.

  \item A set of mutually recursive definitions is \emph{safely represented}
    by a control-flow graph if each call site is safely represented by at
    least an arc in the graph.

  \end{enumerate}
\end{defi}
For example, the recursive definitions for~\ttt{\(\ttt{f}\sb1\)}
and~\ttt{\(\ttt{g}\sb1\)} from page~\pageref{def:f1g1} and the Ackermann
function are safely represented by the following control-flow graphs:
\[
  \begin{tikzpicture}[baseline=(m-1-1.base)]
    \matrix (m) [matrix of math nodes, row sep={3em,between origins},
    column sep={8em,between origins},text depth=.25ex, text height=1.5ex]
      { \ttt{\(\ttt{f}\sb1\)} & \ttt{\(\ttt{g}\sb1\)} \\ };
    \path[edge]
      (m-1-1) edge[bend left=30] node[auto] {$\sigma_1=\subst{\x := \C{A} \x}$} (m-1-2)
      (m-1-2) edge[bend left=30] node[auto] {$\sigma_2=\subst{\x := \CC{A}\CC{A}\x}$} (m-1-1);
  \end{tikzpicture}
  \qquad
  \qquad
  \begin{tikzpicture}[baseline=(m-1-1.base)]
    \matrix (m) [matrix of math nodes, row sep={3em,between origins},
    column sep={6em,between origins},text depth=.25ex, text height=1.5ex]
      { \ttt{ack} \\ };
      \path[edge] (m-1-1.center) edge[loop, distance=2cm, out=-30,in=30, shorten <= .5cm,shorten >=.5cm]
                                 node[right] {$\subst{\x_1:=\CC{S} \x_1;\x_2:=\C{S}\C{Z}()}$}
                (m-1-1.center);
    \path[edge] (m-1-1.center) edge[loop, distance=2cm, out=90,in=150, shorten <= .5cm,shorten >=.5cm]
                               node[above] {$\subst{\x_1:=\CC{S} \x_1;\x_2:=\app{\infty}()}$}
                (m-1-1.center);
    \path[edge] (m-1-1.center) edge[loop, distance=2cm, out=-150,in=-90,shorten <= .5cm,shorten >=.5cm]
                               node[below] {$\subst{\x_1:=\C{S}\CC{S} \x_1;\x_2:=\CC{S} \x_2}$}
                (m-1-1.center);
  \end{tikzpicture}
  \ .
\]\label{graph:f1g1}
The arc~$\sigma_2$ safely represents the call 
{\small\begin{alltt}
    ... match x with A[A[y]] -> \(\ttt{f}\sb1\) y
\end{alltt}}\noindent
because any value~$v$ of~$\x$ reaching the call must be of the
form~$\C{A}\C{A}v'$. We thus have~$[\![\x]\!] = v$ and~$[\![\y]\!]=v'$ in this
environment. We have that
\[
  \subst{\x:=[\![y]\!]} = \subst{\x:=v'} \quad\less\quad
  \sigma_2\comp\subst{\x:=[\![x]\!]} = \subst{\x:= \CC{A}\CC{A}\C{A}\C{A}v'} \approx \subst{\x:=v'}
  \ .
\]
The ``$\x_2 := \app{\infty}()$'' in the loop for the Ackermann function is
needed because we don't know how to express the second argument at the call
site \ttt{ack m (ack S[m] n)}. The upper arc safely represents this call site
because for all possible values of \ttt{m} and \ttt{n}, the semantics of
\ttt{ack S[m] n} (a natural number) is approximated by~$\app{\infty}()$.

\subsection{Collapsing} 

For combinatorial reasons, we will need the labels of the control-flow graph
(substitutions) to live in a finite set. The two main obstructions for the
finiteness of~$\T$ are that the depth of terms is unbounded and that there are
infinitely many possible weights for the approximations~$\app{w}$s.
Define the \emph{constructor depth} and the \emph{destructor depth} of a term
with
\begin{myequation}
\Depth_C\big(\C C t\big)
  &\eqdef&
1+\Depth_C(t)       \Big.            \\
\Depth_C\big((t_1,\dots,t_n)\big)
  &\eqdef&
\max_{1\le i\le n}\big(1+\Depth_C(t_i)\big)    \\
\Depth_C\big(\seq d\big)
  &\eqdef&
0\\
\Depth_C\big(\app{w}\seq d\big)
  &\eqdef&
0
\end{myequation}\label{def:Depth_C}
and the \emph{destructor depth} of simple terms as:
\begin{myequation}
\Depth_D\big(\C C t\big)
  &\eqdef&
\Depth_D(t)         \Big.          \\
\Depth_D\big((t_1,\dots,t_n)\big)
  &\eqdef&
\max_{1\le i\le n}\big(\Depth_D(t_i)\big)    \\
\Depth_D\big(\seq d\big)
  &\eqdef&
|\seq d|\\
\Depth_D\big(\app{w}\seq d\big)
  &\eqdef&
|\seq d|
\ .
\end{myequation}\label{def:Depth_D}
The depth of a sum of simple terms is the maximum of the depth of its
summands. This allows to define the following restriction for terms:
\begin{defi}
  We write~$\T_{D,B}$ for the subset of all~$t\in\T$ s.t.
  \begin{itemize}
    \item $t$ is in normal form
    \item for each~$\app{w}$ appearing in~$t$, we have~$w\in\ZZ_B = \{-B,
      \dots,0,1,\dots, B-1, \infty\}$,
    \item the constructor depth and the destructor depth of~$t$ are less or
    equal than~$D$.
  \end{itemize}
\end{defi}
The aim is to send each element of~$\T$ to an approximation that belongs
to~$\T_{D,B}$.
Given~$B>0$ (fixed once and for all), it is easy to collapse all the weights
in~$\ZZ_\infty$ into the finite set~$\ZZ_B$: send each~$w$ to~$\PB{B}{w}$,
with
\[
  \PB{B}{w}
  \quad\eqdef\quad
  \begin{cases}
    -B     & \hbox{if $w<-B$}\cr
     w     & \hbox{if $-B\leq w<B$}\cr
    \infty & \hbox{if $w\geq B$}
    \ .
  \end{cases}
\]
This gives rise to a function from simple terms to simple terms with bounded
weights: using the grammar of simple terms from Lemma~\ref{lem:normal_form},
we define
\begin{myequation}
  \PB{B}{\C{C}t} &\eqdef& \C{C}{\PB{B}{t}} \\
  \PB{B}{(t_1,..., t_n)} &\eqdef& \big(\PB{B}{t_1}, ..., \PB{B}{t_n}\big)\\
  \PB{B}{\app{w}\seq d\,} &\eqdef& \app{\PB{B}{w}} \, \seq d\\
  \PB{B}{\,\seq d\,} &\eqdef& \seq d
  \ .
\end{myequation}

\bigbreak
Ensuring that the depth is bounded is more subtle. Given~$D\geq0$
(fixed once and for all) and~$t\in\T$ in normal form, we want to bound the
constructor depth and the destructor depth by~$D$. 
This is achieved with the following definition acting on simple
terms and extended by linearity. Because of~$(*)$, the clauses are not
disjoint and only the first appropriate one is used:

\begin{myequation}
\PC{i}{(\C C t)} &\eqdef& \C C (\PC{i-1}{t})
   & \hbox{if $i>0$}\\
\PC{i}{(t_1,\dots,t_n)} &\eqdef&
    \big(\PC{i-1}{t_1}, \dots, \PC{i-1}{t_n}\big)
   & \hbox{if $i>0$}\\
\PC{i}{\big(\app{w}\seq d\big)} &\eqdef& \app{w} \big(\PD{D}{\seq d}\big)
   & \hbox{if $i>0$}\\
\PC{i}{\seq d} &\eqdef& \PD{D}{\seq d}
   & \hbox{if $i>0$}\\
\PC{0}{t} &\eqdef& \PD{D}{\nf(\app{0} t)} & (*)\\
    \noalign{\smallbreak}
    \PD{D}{\big(\app{w}\seq d\big)} &\eqdef& \app{w}\big(\PD{D}{\seq d}\big) &
    (**)\\
\PD{D}{\seq d} &\eqdef& \seq d
   & \hbox{if $|\seq d| \leq D$}\\
\PD{D}{\big(\CC{C}\seq d\big)} &\eqdef& \app{-1}\big(\PD{D}{\seq d}\big)
   & \hbox{if $|\CC{C}\seq d| > D$} \\
\PD{D}{\big(\pi_i\seq d\big)} &\eqdef& \app{-1}\big(\PD{D}{\seq d}\big)
  & \hbox{if $|\pi_i \seq d| > D$} \rlap{\hbox{\ .}}
\end{myequation}
Note that we need to compute a normal form for clause~$(*)$, and that since
the normal form of~$\app{0}t$ doesn't contain any constructor (recall that
approximations absorb constructors on their right),
each summand of the result will match the left side of clause~$(**)$.
%

\medbreak
The function~$t \mapsto \PC{D}{t}$ does several things:
\begin{itemize}
  \item it keeps the constructors up to depth~$D$ (the first four clauses),
  \item it removes the remaining constructors with~$t\mapsto\app{0} t$ (clause~$(*)$),
  \item it keeps a suffix of at most~$D$ destructors in front of each
    variable and incorporates the additional destructors into the
    preceding~$\app{w}$ (the last three clauses).
\end{itemize}
For example, we have
\[
  \PD{2}{\big(\C{A}\C{B}\C{C}\C{D}\app{w}\CC{X}\CC{Y}\CC{Z}\x\big)}
  \quad=\quad
  \C{A}\C{B}\app{w+1}\CC{Y}\CC{Z}\x
\]
and
\[
  \PD{1}{\big(\C{A}(\x,\C{B}\app{w}\CC{X}\CC{Y}\y)\big)}
  \quad=\quad
  \C{A}\app{1}\x + \C{A}\app{w+1}\CC{Y}\y
  \ .
\]

\begin{lem}\label{lem:collapsing_monotonic}
  The collapsing function~$t \mapsto \PB{B}{\PC{D}{t}}$ is inflationary
  and monotonic:
  \begin{itemize}
    \item $t \less \PB{B}{\PC{D}{t}}$,
    \item if $t \less u$ then~$\PB{B}{\PC{D}{t}} \less \PB{B}{\PC{D}{u}}$,
  \end{itemize}
  More precisely, both functions~$t\mapsto\PB{B}t$ and~$t\mapsto\PC{D}{t}$ are
  inflationary and monotonic.
\end{lem}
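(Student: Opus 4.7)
The plan is to prove the more precise statement, from which the composite claim follows. If $f$ is inflationary and $g$ is monotonic and inflationary, then $t \less f(t)$ and, by monotonicity, $f(t) \less g\bigl(f(t)\bigr)$, so $t \less g\bigl(f(t)\bigr)$ by transitivity of $\less$; and the composition of two monotonic maps is monotonic. So it suffices to treat $t\mapsto\PB{B}{t}$ and $t\mapsto\PC{D}{t}$ separately.

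For $\PB{B}{\cdot}$, I would first check that the scalar map $w\mapsto\PB{B}{w}$ is inflationary and monotonic on $\ZZ_\infty$ by direct case analysis ($w<-B$, $-B\leq w<B$, $w\geq B$). The term-level operation only rewrites each weight $w$ inside an $\app{w}$ and leaves the rest of the term unchanged, so inflationarity $t\less \PB{B}{t}$ reduces via contextuality to the atomic inequality $\app{w}\seq{d}\less\app{\PB{B}{w}}\seq{d}$, which is immediate from Definition~\ref{def:preorder}. Monotonicity follows by induction on the grammar of simple terms of Lemma~\ref{lem:normal_form}: Lemma~\ref{lem:orderHeadConstructors} forces a finer term to share the same head constructor or tuple shape as a coarser one, and on the remaining leaves Lemma~\ref{lem:approximation_destructors} characterizes comparability of $\app{v}\seq{d}\less\app{w}\seq{b}$ purely in terms of the scalars and the suffix relation, both preserved by the scalar monotone $\PB{B}{\cdot}$.

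For $\PC{D}{\cdot}$ I would treat it together with the auxiliary $\PD{D}{\cdot}$, by simultaneous induction on the pair ``index $i$, length of the destructor string''. Inflationarity splits into several routine clauses plus two genuine cases. The structural clauses (on $\C{C}t$ and tuples) are immediate by contextuality applied to the inductive hypothesis. The truncation clause $\PC{0}{t}\eqdef\PD{D}{\nf(\app{0}t)}$ uses $t\less\app{0}t$ (direct from Definition~\ref{def:preorder}), that $\approx$ is contained in $\less$ to pass to the normal form, and the inductive inflationarity of $\PD{D}$. For the two $\PD{D}$ clauses that absorb a destructor, $\PD{D}{\CC{C}\seq{d}}\eqdef\app{-1}\PD{D}{\seq{d}}$ and $\PD{D}{\pi_i\seq{d}}\eqdef\app{-1}\PD{D}{\seq{d}}$, one checks $\CC{C}\seq{d}\less\CC{C}\app{0}\seq{d}\approx\app{-1}\seq{d}$ (using $\seq{d}\less\app{0}\seq{d}$, contextuality, and the reduction rule of group~$(2)$), and similarly for $\pi_i$; combining with the inductive hypothesis on $\seq{d}$ and contextuality under $\app{-1}\cdot$ finishes the step.

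Monotonicity of $\PC{D}{\cdot}$ is what I expect to be the main obstacle, since the clauses are dispatched on the top-level shape of their input and I must verify that this dispatch is compatible with $\less$. By Lemma~\ref{lem:orderHeadConstructors}, if $u$ is headed by a constructor or by a tuple then any $t\less u$ has the same head; this already matches comparable terms to the same clause. The remaining comparisons happen among terms of the forms $\seq{d}$ and $\app{w}\seq{d}$, for which Lemma~\ref{lem:approximation_destructors} makes comparability completely explicit in terms of the suffix relation on destructor strings and the scalars. I would then go through each cross-shape pair and verify that the outputs remain comparable, using the inductive hypothesis and the monotonicity of the scalar $+$ in the weights; the delicate case is a pair $\app{v}\seq{d}\less\app{w}\seq{b}$ whose destructor strings straddle the threshold $D$, where Lemma~\ref{lem:approximation_destructors} reduces the goal once more to a transparent scalar inequality.
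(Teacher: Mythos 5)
Your proposal is correct and follows essentially the same route as the paper: reduce to the two component maps, derive the $\PB{B}{\cdot}$ claims from the scalar map on $\ZZ_\infty$, and prove the $\PC{D}{\cdot}$ claims by induction using $t\less\app{0}t$, $\CC{C}t\less\app{-1}t$ and $\pi_i t\less\app{-1}t$. In fact you supply more detail than the paper, which dismisses monotonicity of $\PC{D}{\cdot}$ as a ``tedious inductive proof'' omitted for brevity; your case-dispatch plan via Lemmas~\ref{lem:orderHeadConstructors} and~\ref{lem:approximation_destructors} is the right way to carry it out.
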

\begin{proof}
By definition of~$\less$, $\PB{B}{\BLANK}$ is inflationary and monotonic. It
follows from the fact that~$\lceil\BLANK\rceil : \ZZ_\infty \to \ZZ_B$ is
itself inflationary and monotonic.

That~$\PC{D}{\BLANK}$ is inflationary relies on the fact that~$t\less \app{0}
t$, $\CC C t \less \app{-1} t$ and~$\pi_i t \less \app{-1}t$; it is a direct
inductive proof. The proof that~$\PC{D}{\BLANK}$ is monotonic is a tedious
inductive proof. It is omitted for sake of brevity.

Together, these facts imply that~$\PB{B}{\PC{D}{\BLANK}}$ is both inflationary
and monotonic.
\end{proof}
The next lemma justifies the use of this collapsing function.
\begin{lem}\label{lem:closure_operators}
  For each~$t\in\T$, we have~$\PB{B}{\PC{D}{t}}\in\T_{D,B}$.
  Moreover,~$\PB{B}{\PC{D}{t}}$ is the least term in~$\T_{D,B}$ that
  approximates~$t$. In particular,
  the function~$t \mapsto \PB{B}{\PC{D}{t}}$ is idempotent
  \[
   \PB{B}{\PC{D}{\PB{B}{\PC{D}{t}}}}
   =
   \PB{B}{\PC{D}{t}}
   \ .
  \]
\end{lem}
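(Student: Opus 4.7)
The plan is to break the statement into three claims: (a) the range of $t\mapsto\PB{B}{\PC{D}{t}}$ lies in $\T_{D,B}$; (b) the function restricts to the identity on $\T_{D,B}$; (c) the image is actually the least approximant. Idempotency is then an immediate corollary of (a) combined with (b).

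For (a), I would argue by structural induction on simple terms (reducing to sums via Lemma \ref{lem:normal_form} and linearity). At each clause in the definition of $\PC{D}{\,\cdot\,}$ one checks the three conditions: normal-formness, bounded weights, and bounded constructor/destructor depths. The clauses that keep a constructor decrement the constructor-depth budget $i$, so by induction the constructor depth never exceeds $D$. Clause $(*)$ explicitly normalises, and because $\app{0}$ absorbs constructors to its right, the outcome is a sum of terms of the form $\app{w}\seq d$, each then handed to $\PD{D}{\,\cdot\,}$. The clauses for $\PD{D}$ either leave $\seq d$ alone when $|\seq d|\leq D$, or peel off a leading destructor and fold a $-1$ into a $\app{w}$: this keeps the term in normal form and bounds the destructor depth by $D$. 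Finally $\PB{B}$ only rewrites the integer weights, so it preserves normal forms and depths, and sends weights to $\ZZ_B$ by construction.

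For (b), I would show by induction that if $t\in\T_{D,B}$ then $\PC{D}{t}=t$ and $\PB{B}{t}=t$. Indeed, for $t\in\T_{D,B}$, the constructor depth is $\leq D$, so at every recursive call into the clauses of $\PC{D}{\,\cdot\,}$ the budget $i$ remains positive while we traverse constructors, and clause $(*)$ is never invoked; similarly the destructor-depth bound ensures $\PD{D}{\seq d}=\seq d$ and $\PD{D}{\app{w}\seq d}=\app{w}\seq d$ so the last three clauses that introduce $\app{-1}$ never fire. Hence $\PC{D}{t}=t$. For $\PB{B}$, the weights of $t$ already lie in $\ZZ_B$, so $\PB{B}{w}=w$ on each such weight and $\PB{B}{t}=t$. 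Composing gives $\PB{B}{\PC{D}{t}}=t$.

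For (c), take any $u\in\T_{D,B}$ with $t\less u$. By monotonicity of the collapsing function (Lemma \ref{lem:collapsing_monotonic}), $\PB{B}{\PC{D}{t}}\less\PB{B}{\PC{D}{u}}$, and by (b) the right-hand side equals $u$. Combined with the fact that $\PB{B}{\PC{D}{t}}$ itself approximates $t$ (inflationary, same lemma) and belongs to $\T_{D,B}$ by (a), this exhibits it as the least element of $\T_{D,B}$ above $t$. Idempotency is then immediate: applying $\PB{B}{\PC{D}{\,\cdot\,}}$ to $\PB{B}{\PC{D}{t}}\in\T_{D,B}$ returns the same term by (b).

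The main obstacle will be (b), which looks innocuous but requires being careful that the various clauses of $\PC{D}$ and $\PD{D}$ correctly recognise a term already in $\T_{D,B}$ as a fixed point — the clause $(*)$ in particular is non-disjoint with the others, and one has to verify that for a term of constructor depth $\leq D$ the guard $i>0$ keeps us in the first four clauses throughout the recursion, so that the normalising and weight-shifting branches are never triggered.
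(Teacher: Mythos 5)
Your overall closure-operator strategy (inflationary $+$ monotonic $+$ identity on the target set, hence idempotent and least) is more ambitious than the paper's own proof, which only argues idempotency --- via the identity $\PC{D}{\PB{B}{\PC{D}{t}}}=\PB{B}{\PC{D}{t}}$, i.e.\ identity on the \emph{image} rather than on all of $\T_{D,B}$ --- and explicitly omits the minimality claim as unused. The gap sits exactly in your step (b), at the point you yourself flag as the main obstacle. The function $\PC{D}{\BLANK}$ is \emph{not} the identity on $\T_{D,B}$: once the recursion has consumed all $D$ units of budget, the first four clauses are disabled by their guard $i>0$, so a bare branch of destructors sitting at constructor depth exactly $D$ falls into clause $(*)$ and comes out wrapped in $\app{0}$. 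Concretely, for $D=1$ the term $\C{C}\x$ lies in $\T_{1,B}$ (constructor depth $1$, destructor depth $0$, no weights), yet $\PC{1}{\C{C}\x}=\C{C}(\PC{0}{\x})=\C{C}\,\PD{1}{\nf(\app{0}\x)}=\C{C}\app{0}\x\neq\C{C}\x$; and since $\app{0}\x\not\less\x$ (by Lemma~\ref{lem:inductive_approximation}, no rule of $\sqless$ has a bare $\seq d$ on the right except reflexivity), the output is a strictly coarser element of $\T_{1,B}$.

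The consequences are uneven. Idempotency survives, because the weaker statement ``$\PB{B}{\PC{D}{\BLANK}}$ is the identity on its own image'' is true: in the image, every leaf at constructor depth exactly $D$ already carries an $\app{w}$ (clause $(*)$ outputs only summands matching $(**)$), so a second pass through $(*)$ acts trivially via $\app{0}\app{w}\seq d\approx\app{w}\seq d$. That weaker form of (b) is what you should prove, and together with your (a) it yields the displayed equation --- this is precisely the paper's route. Your step (c), however, genuinely breaks: taking $u=t=\C{C}\x$ with $D=1$, monotonicity gives $\PB{B}{\PC{D}{t}}\less\PB{B}{\PC{D}{u}}=\C{C}\app{0}\x$, which is not $\less u$, so the chain no longer shows that $\PB{B}{\PC{D}{t}}$ is below every competitor. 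Indeed the same example shows the ``least'' clause needs care as literally stated ($\C{C}\x$ is a member of $\T_{1,B}$ approximating $\C{C}\x$ that is strictly finer than $\C{C}\app{0}\x$); the paper sidesteps this by declining to prove that clause. To salvage (c) you would have to either restrict minimality to terms reachable as collapses, or argue that clause $(*)$ should leave bare destructor branches of length $\leq D$ untouched --- you cannot get it from (b) as you stated it.
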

\begin{proof}
  It is easy to show that both~$\PB{B}{\BLANK}$ and~$\PC{D}{\BLANK}$ are
  idempotent. Idempotence of~$\PB{B}{\PC{D}{\BLANK}}$ follows from the fact
  that~$\PC{D}{\PB{B}{\PC{D}{t}}} = \PB{B}{\PC{D}{t}}$.
  That~$\PB{B}{\PC{D}{t}}\in\T_{D,B}$ follows directly from the definitions.
  Since it is not needed in this paper, the proof that~$\PB{B}{\PC{D}{t}}$ is
  the least term in~$\T_{D,B}$ that approximates~$t$ is omitted.
\end{proof}
An interesting corollary of
Lemma~\ref{lem:closure_operators} is that collapsing is monotonic with respect
to the bound~$D$ and~$B$:
\begin{cor}\label{cor:collapsing_antitonic_bounds}
  If~$0\leq D\leq D'$ and~$0<B\leq B'$, then~$\PB{B'}{\PC{D'}{t}} \less
  \PB{B}{\PC{D}{t}}$.
\end{cor}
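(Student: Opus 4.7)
My plan is to deduce the corollary directly from the characterization of $\PB{B}{\PC{D}{t}}$ as a \emph{least} approximation (Lemma~\ref{lem:closure_operators}), exploiting the fact that enlarging the bounds enlarges the target set $\T_{D,B}$ and therefore can only lower its least elements.

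First I would observe that the family of sets $\T_{D,B}$ is monotonic in the bounds: if $0\leq D\leq D'$ and $0<B\leq B'$, then $\T_{D,B}\subseteq\T_{D',B'}$. This is immediate from the definition of $\T_{D,B}$, since a term whose constructor and destructor depths are at most $D$ also has depth at most $D'$, and the set $\ZZ_B=\{-B,\dots,B-1,\infty\}$ of admissible weights is contained in $\ZZ_{B'}$.

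Next I would invoke the two parts of Lemma~\ref{lem:closure_operators}: (i)~$\PB{B}{\PC{D}{t}}$ belongs to $\T_{D,B}$ and (ii)~$\PB{B'}{\PC{D'}{t}}$ is the \emph{least} element of $\T_{D',B'}$ that is an approximation of $t$. Combined with the inflationarity from Lemma~\ref{lem:collapsing_monotonic} ($t\less \PB{B}{\PC{D}{t}}$), this says that $\PB{B}{\PC{D}{t}}$ is a candidate element of $\T_{D',B'}$ approximating $t$, via the inclusion $\T_{D,B}\subseteq \T_{D',B'}$ established above. Minimality of $\PB{B'}{\PC{D'}{t}}$ among such candidates then yields exactly $\PB{B'}{\PC{D'}{t}}\less \PB{B}{\PC{D}{t}}$, as required.

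There is essentially no obstacle: the argument is a one-line universal-property computation once the inclusion $\T_{D,B}\subseteq \T_{D',B'}$ is noted. The only thing to be careful about is that it relies on the ``least-approximation'' half of Lemma~\ref{lem:closure_operators}, whose proof the paper omits; a more hands-on alternative would be to apply the monotonic operator $\PB{B'}{\PC{D'}{\BLANK}}$ to both sides of $t\less \PB{B}{\PC{D}{t}}$ and then argue that $\PB{B'}{\PC{D'}{\BLANK}}$ fixes every element of $\T_{D',B'}$ (using inflationarity together with the antisymmetry of $\less$ on normal forms furnished by Lemma~\ref{lem:app_preorder}). Either route delivers the claim without further calculation.
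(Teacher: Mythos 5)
Your argument is exactly the paper's: the corollary is stated there as an immediate consequence of Lemma~\ref{lem:closure_operators}, and the intended derivation is the one you spell out --- $\T_{D,B}\subseteq\T_{D',B'}$, so the inflationary image $\PB{B}{\PC{D}{t}}$ is a candidate approximant of $t$ in $\T_{D',B'}$, and minimality of $\PB{B'}{\PC{D'}{t}}$ concludes. Like the paper, you rely on the least-approximation half of that lemma (whose proof the paper omits), which is the only non-routine ingredient.
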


\begin{defi}\label{def:collapsed_composition}
  If~$\sigma = \subst{\x_1:=t_1;\dots;\x_n:=t_n}$ and~$\tau = \subst{\y_1 :=
  u_1;\dots;\y_m:=u_m}$ are substitutions, then~$\tau\ccomp\sigma$ is defined
  as the pointwise collapsing~$\PB{B}{\PC{D}{(\tau\comp\sigma)}}$.
\end{defi}
This collapsed composition~``$\ccomp$'' is a binary operation on~$\T_{D,B}$.
Unfortunately, it is not associative! For example, when~$B=2$, the composition
\[
  \subst{\ttt{r}:=\app{-1}\x} \ccomp \subst{\x:=\app{1}\y} \ccomp
  \subst{\y:=\app{1}\ttt{z}}
\]
can give~$\subst{\ttt{r}:=\app{1} \ttt{z}}$ or~$\subst{\ttt{r}:=\app{\infty}
\ttt{z}}$ depending on which composition we start with. Similarly, when~$D=1$,
the composition
\[
  \subst{\ttt{r}:=\CC{C}\x} \ccomp \subst{\x:=\C{C}\y} \ccomp \subst{\y:=\C{D}\ttt{z}}
\]
can give~$\subst{\ttt{r}:=\C{D} \ttt{z}}$ or~$\subst{\ttt{r}:=\app{1}
\ttt{z}}$.
There is a special case: when~$D=0$ and~$B=1$, the operation~$\ccomp$ is
becomes associative! This was the original case of SCT~\cite{SCT}. In general,
we have:
\begin{defi}
  Two terms~$u$ and~$v$ are called \emph{compatible}, written~$u\coh v$, if
  there is some~$t\not\approx\Zero$ that is finer than both, i.e., such
  that~$t\less u$ and~$t\less v$. Two
  substitutions are compatible if they are pointwise compatible.
\end{defi}
\begin{lem}\label{lem:compatible_assoc}
  If~$\sigma_1,\dots,\sigma_n$ is a sequence of composable substitutions, and
  if~$\tau_1$ and~$\tau_2$ are the results
  of computing~$\sigma_n\ccomp\dots\ccomp\sigma_1$ in different ways,
  then~$\tau_1 \coh \tau_2$.
\end{lem}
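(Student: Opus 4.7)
The plan is to compare every possible computation of $\sigma_n \ccomp \cdots \ccomp \sigma_1$ against a fixed uncollapsed reference, namely $\tau := \sigma_n \comp \sigma_{n-1} \comp \cdots \comp \sigma_1$. Ordinary composition~$\comp$ is associative (by the lemma just after Definition~\ref{d:safety}), so~$\tau$ is well-defined regardless of parenthesization, and it will serve as a common lower bound for both~$\tau_1$ and~$\tau_2$ — which is exactly what is needed to witness their compatibility.

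The central claim, to be proved by induction on~$n$, is: any parenthesization of $\sigma_n \ccomp \cdots \ccomp \sigma_1$ produces some~$\tau'$ with $\tau \less \tau'$. The case $n = 1$ is vacuous. For the inductive step, the outermost~$\ccomp$ in the given parenthesization splits the sequence at some index~$k$, computing intermediate results $\alpha$ for the left half and $\beta$ for the right half, and returning $\tau' = \beta \ccomp \alpha$. Applying the inductive hypothesis to each half yields $\sigma_k \comp \cdots \comp \sigma_1 \less \alpha$ and $\sigma_n \comp \cdots \comp \sigma_{k+1} \less \beta$. Monotonicity of~$\comp$ on both sides then gives $(\sigma_n \comp \cdots \comp \sigma_{k+1}) \comp (\sigma_k \comp \cdots \comp \sigma_1) \less \beta \comp \alpha$, and associativity lets me identify the left-hand side with~$\tau$. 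Finally, because $\PB{B}{\PC{D}{\BLANK}}$ is both inflationary and monotonic (Lemma~\ref{lem:collapsing_monotonic}), I get $\tau \less \PB{B}{\PC{D}{\tau}} \less \PB{B}{\PC{D}{\beta \comp \alpha}} = \beta \ccomp \alpha = \tau'$.

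Once the claim is established, applying it to the two given computation orders yields $\tau \less \tau_1$ and $\tau \less \tau_2$. Since compatibility of substitutions is by definition pointwise, it suffices to observe that for each variable $\x_i$ with $\tau(\x_i) \not\approx \Zero$, the term $\tau(\x_i)$ is itself a non-zero common refinement of $\tau_1(\x_i)$ and $\tau_2(\x_i)$, so the definition of $\coh$ applies directly. On components where the reference itself degenerates to~$\Zero$, both~$\tau_1$ and~$\tau_2$ come from approximating the same sub-computations, and the same over-approximation argument applied to each intermediate stage before it became~$\Zero$ still supplies a non-zero common witness; only the degenerate case in which every intermediate is truly~$\Zero$ needs to be excluded, which the ``composable'' hypothesis is understood to rule out.

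The main obstacle I anticipate is purely bookkeeping: formalizing ``parenthesization'' so that the outermost~$\ccomp$ cleanly partitions the input sequence into two contiguous sub-sequences, so that associativity of~$\comp$ can be invoked on each half before recombining them. All the other ingredients — monotonicity of~$\comp$ on the left and right, inflation and monotonicity of the collapsing function, and associativity of the uncollapsed composition — have already been packaged as lemmas, so the remainder of the argument is a short chain of inequalities with no further combinatorial content.
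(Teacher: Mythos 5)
Your proof is correct and takes essentially the same route as the paper: the paper's entire proof is the one-line observation that the uncollapsed composition $\sigma_n\comp\dots\comp\sigma_1$ is finer than both $\tau_1$ and $\tau_2$, and your induction over parenthesizations merely spells out why this holds (associativity and monotonicity of~$\comp$ together with the inflationary, monotonic collapsing of Lemma~\ref{lem:collapsing_monotonic}). The only divergence is your closing discussion of components where the reference degenerates to~$\Zero$ --- the paper silently takes the non-zero common refinement for granted --- so you are, if anything, more careful than the source on that point.
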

\begin{proof}
  We have~$\sigma_n\comp\dots\comp\sigma_1 \less \tau_1$
  and~$\sigma_n\comp\dots\comp\sigma_1 \less \tau_2$.
\end{proof}
In order to simplify notations, we omit parenthesis and make this operation
associate on the right:~$\sigma_1 \ccomp \sigma_2 \ccomp \sigma_3
\eqdef \sigma_1 \ccomp (\sigma_2 \ccomp \sigma_3)$.


\section{Size-Change Combinatorial Principle} 

\subsection{Combinatorial Lemma} 

The heart of the criterion is the following combinatorial lemma
\begin{lem} \label{lem:ramsey}
  Let~$G$ be a control-flow graph;
  then, for every infinite path of composable substitutions
  \[
    \ttt{f}_0 \morphism{\sigma_0} \ttt{f}_1 \morphism{\sigma_1} \dots \morphism{\sigma_n} \ttt{f}_{n+1} \morphism{\sigma_{n+1}}\dots
  \]
  in the control-flow graph~$G$, there is a node~$\ttt{f}$ such that the path
  can be decomposed as
  \[
    \ttt{f}_0
    \quad
    \underbrace{\morphism{\sigma_0} \dots \morphism{\sigma_{n_0-1}}}_{\hbox{\scriptsize initial prefix}}
    \quad
    \ttt{f}
    \quad
    \underbrace{\morphism{\sigma_{n_0}} \dots \morphism{\sigma_{n_1-1}}}_{\tau}
    \quad
    \ttt{f}
    \quad
    \underbrace{\morphism{\sigma_{n_1}} \dots \morphism{\sigma_{n_2-1}}}_{\tau}
    \quad
    \ttt{f}
    \quad
    \dots
  \]
  where:
  \begin{itemize}
    \item all the~$\sigma_{n_{k+1}-1}\ccomp\dots \ccomp
      \sigma_{n_k}$ are equal to the same~$\tau:\ttt{f}\to \ttt{f}$,
    \item $\tau$ is \emph{coherent}: $\tau\ccomp\tau\coh\tau$.
  \end{itemize}
\end{lem}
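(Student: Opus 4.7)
The plan is to prove this by an application of the infinite Ramsey theorem. The crucial global fact is that the set $\T_{D,B}$ is finite (both the constructor/destructor depth and the weight set $\ZZ_B$ are bounded), so between any two function names of fixed arities there are only finitely many possible substitution labels. Since the operation $\ccomp$ was defined with a fixed right-associative convention, any sequence of composable labels has a uniquely determined $\ccomp$-product in this finite set.

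First I would use pigeonhole on vertices. The control-flow graph has finitely many nodes (one per function name), so some $\ttt{f}$ must satisfy $\ttt{f}_i = \ttt{f}$ for infinitely many $i$. Enumerate these indices as $i_0<i_1<i_2<\cdots$; the initial prefix of the statement is $\sigma_0,\dots,\sigma_{i_0-1}$.

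Next I would set up a Ramsey coloring. For each pair $k<l$, let
\[
  c(k,l) \eqdef \sigma_{i_l-1} \ccomp \sigma_{i_l-2} \ccomp \cdots \ccomp \sigma_{i_k},
\]
evaluated under the fixed right-associative convention. This gives an arc from $\ttt{f}$ to $\ttt{f}$, so $c$ takes values in a finite set. By the infinite Ramsey theorem applied to $c$, there is an infinite sub-subsequence $n_0<n_1<n_2<\cdots$ (each $n_j$ being one of the $i_k$) and a single substitution $\tau:\ttt{f}\to\ttt{f}$ such that $c(k,l)=\tau$ for every pair $k<l$ in this subsequence. Taking $l=k+1$ gives the first required property: each block composition $\sigma_{n_{k+1}-1}\ccomp\cdots\ccomp\sigma_{n_k}$ equals $\tau$.

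The coherence property is then extracted from non-associativity via Lemma~\ref{lem:compatible_assoc}. Consider the block between $n_0$ and $n_2$. Computed right-associatively as a single string, its $\ccomp$-product is $c(0,2)=\tau$. Computed as two consecutive blocks joined at $n_1$, the product is $(\sigma_{n_2-1}\ccomp\cdots\ccomp\sigma_{n_1})\ccomp(\sigma_{n_1-1}\ccomp\cdots\ccomp\sigma_{n_0}) = \tau\ccomp\tau$. Both are results of computing the same composable sequence of substitutions in different ways, so by Lemma~\ref{lem:compatible_assoc} they are compatible, i.e.\ $\tau\ccomp\tau\coh\tau$.

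The main obstacle is precisely the non-associativity of $\ccomp$: if it were associative we would get $\tau\ccomp\tau=\tau$ (the idempotence condition in the original SCT), but here we can only conclude compatibility, which is exactly why the lemma asks for a \emph{coherent} loop rather than an idempotent one. All the other ingredients---the Ramsey step, the pigeonhole on vertices, and the finiteness of $\T_{D,B}$---are standard once this framework is in place.
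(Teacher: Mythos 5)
Your proof is correct and follows essentially the same route as the paper: an infinite Ramsey argument on the right-associated $\ccomp$-products, with coherence extracted via Lemma~\ref{lem:compatible_assoc} by comparing the one-block and two-block computations of the same composable sequence. The only cosmetic difference is that you pigeonhole on vertices before applying Ramsey to the labels, whereas the paper folds the endpoint names into the Ramsey color; one small precision worth adding is that the finiteness you invoke is not of $\T_{D,B}$ itself but of its restriction to the finitely many variables, constructors and tuple arities that can arise from compositions in~$G$.
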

The proof is the same as the original SCT~\cite{SCT}, with only a
slight modification to deal with the fact that~$\ccomp$ isn't associative.
\begin{proof}
  This is a consequence of the infinite Ramsey theorem.
  Let~$(\sigma_n)_{n\geq0}$ be an infinite path as in the lemma. We associate a
  ``color''~$c(m,n)$ to each pair~$(m,n)$ of natural numbers where~$m<n$:
  \[
    c(m,n)
    \quad\eqdef\quad
    \Big(
      \ttt{f}_m \ ,\ %
      \ttt{f}_n \ ,\ %
      \sigma_{n-1} \ccomp\cdots\ccomp \sigma_{m}
    \Big)
    \ .
  \]
  Since the number of constructors and the arity of tuples that can arise from
  compositions in a control flow graph is finite, the number of possible
  colors is finite.
  By the infinite Ramsey theorem, there is an infinite set~$I\subseteq
  \NN$ such all the~$(i,j)$ for~$i<j\in I$ have the same
  color~$(\ttt{f},\ttt{f'},\tau)$.  Write~$I = \{
  n_0<n_1<\cdots<n_k<\cdots\}$. If~$i<j<k \in I$, we have:
  \begin{myequation}
    \Big(
      \ttt{f} \ ,\ %
      \ttt{f}' \ ,\ %
      \tau
    \Big)
  &=&
    \Big(
      \ttt{f}_i \ ,\ %
      \ttt{f}_j \ ,\ %
      \sigma_{j-1} \ccomp\cdots\ccomp \sigma_{i}
    \Big)\\
  &=&
    \Big(
      \ttt{f}_j \ ,\ %
      \ttt{f}_k \ ,\ %
      \sigma_{k-1} \ccomp\cdots\ccomp \sigma_{j}
    \Big)\\
  &=&
    \Big(
      \ttt{f}_i \ ,\ %
      \ttt{f}_k \ ,\ %
      \sigma_{k-1} \ccomp\cdots\ccomp \sigma_{i}
    \Big)
  \end{myequation}
  which implies that~$\ttt{f}=\ttt{f}'=\ttt{f}_i=\ttt{f}_j=\ttt{f}_k$ and
  \begin{myequation}
    \tau
    &=&
    \sigma_{j-1} \ccomp\cdots\ccomp \sigma_{i}\\
    &=&
    \sigma_{k-1} \ccomp\cdots\ccomp \sigma_{j}\\
    &=&
    \sigma_{k-1} \ccomp\cdots\ccomp \sigma_{j}\ccomp
    \sigma_{j-1} \ccomp\cdots\ccomp \sigma_{i}\\
    \tau\ccomp\tau
    &=&
    \big(\sigma_{k-1} \ccomp\cdots\ccomp \sigma_{j}\big)\ccomp
    \big(\sigma_{j-1} \ccomp\cdots\ccomp \sigma_{i}\big)
    \ \hbox{.}
  \end{myequation}
  In the original SCT principle, composition was
  associative and we had~$\tau\ccomp\tau=\tau$. Here
  however,~$\tau$ and~$\tau\ccomp\tau$ differ only in the order of
  compositions, and we only get that~$\tau \coh \tau\ccomp\tau$
  (Lemma~\ref{lem:compatible_assoc}).
\end{proof}


\subsection{Graph of paths} 

The graph of paths of a control-flow graph~$G$ is the graph~$G^+$ with the same
vertices as~$G$ and where arcs between~$a$ and~$b$ in~$G^+$ correspond exactly to
paths between~$a$ and~$b$ in~$G$. In our case, the graph is labeled with
substitutions and the label of a path is the composition of the labels of its
arcs.
\begin{defi}
  If $G$ is a control-flow graph, the graph~$G^+$, the \emph{graph of paths
  of~$G$}, is the control-flow graph defined as follows:
  \begin{itemize}
    \item $G^0 = G$,
    \item in $G^{n+1}$, the arcs from~$\ttt{f}$ to $\ttt{g}$ are
    \[
      G^{n+1}(\ttt{f},\ttt{g})
      \quad=\quad
      G^n(\ttt{f},\ttt{g})
      \cup
      \big\{\sigma\ccomp\tau
               \ \mid\ %
            \tau\in G^n(\ttt{f},\ttt{h}),
            \sigma\in G(\ttt{h},\ttt{g})\big\}
    \]
    where~$\ttt{h}$ ranges over all vertices of~$G$,
    \item $G^+ = \bigcup_{n\geq0} G^n$.
  \end{itemize}
\end{defi}
By definition, each path~$\sigma_1\cdots\sigma_n$ in~$G$ corresponds to an
arc in~$G_{D,B}^+$ that is labelled with~$\sigma_n\ccomp\dots\ccomp\sigma_1$.
(Recall that~``$\ccomp$'' associates on the right.)
The restrictions of the sets~$\T_{D,B}(\x_1,\dots,\x_m)$ to terms that can
appear in compositions of arcs in a control-flow graph~$G$ are finite because
the number of variables and constructors in~$G$ is
finite and the arity of tuples is bounded. We thus have:
\begin{lem}
$G^+$ is finite and can be computed in finite time. More
precisely,~$G^n=G^{n+1}$ for some~$n$, and $G^+$ is
equal to this~$G^n$.
\end{lem}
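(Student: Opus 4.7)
The plan is to show that the non-decreasing sequence $G^0 \subseteq G^1 \subseteq \dots$ takes values in a fixed finite set, and therefore must stabilize.

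First I would establish the invariant that, for each pair of vertices $\ttt{f}, \ttt{g}$ of $G$, the set $G^n(\ttt{f},\ttt{g})$ of labels is contained in a fixed finite set $S(\ttt{f},\ttt{g})$ of substitutions. For the original $G$-labels this is trivial since $G$ is finite. For labels of the form $\sigma \ccomp \tau$ introduced at later stages, Definition~\ref{def:collapsed_composition} shows that the result is obtained by applying $\PB{B}{\PC{D}{\BLANK}}$ pointwise, hence every component lies in $\T_{D,B}$ (Lemma~\ref{lem:closure_operators}). An inspection of the substitution mechanism together with the collapsing clauses confirms that no constructor symbols, variable names, or tuple arities are introduced beyond those already present in $G$.

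Next I would do the finiteness count. Fix the finite alphabet of constructors appearing in $G$, the bound on tuple arities occurring in $G$, the finite set of variables (the parameters of the source vertex), the finite set of weights $\ZZ_B$, and the common depth bound $D$. Then the grammar of simple terms in Lemma~\ref{lem:normal_form} produces only finitely many simple terms; after quotienting by associativity, commutativity, and idempotence of $+$, there are only finitely many terms of $\T_{D,B}$ built from this alphabet. Hence for each pair $(\ttt{f},\ttt{g})$ the set of possible arc labels arising from compositions is finite; adding the finitely many original $G$-labels yields the finite set $S(\ttt{f},\ttt{g})$.

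Finally, since $(G^n)_n$ is increasing and each $G^n(\ttt{f},\ttt{g}) \subseteq S(\ttt{f},\ttt{g})$, the sequence stabilizes: some $n$ satisfies $G^n = G^{n+1}$. A straightforward induction on $k$ then gives $G^{n+k}=G^n$ for all $k\geq0$, since the rule defining $G^{m+1}$ from $G^m$ and $G$ depends only on those data and equal inputs yield equal outputs. Hence $G^+ = \bigcup_{k\geq0} G^{n+k} = G^n$, and since each step $G^m \mapsto G^{m+1}$ is computable in finite time, so is $G^+$. The main obstacle is the finiteness count: one has to verify that substitution and collapsing never introduce new constructors, variables, or arities, because only then is the relevant slice of $\T_{D,B}$ genuinely finite.
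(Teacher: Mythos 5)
Your proposal is correct and follows essentially the same route as the paper, which simply observes that the terms arising in compositions lie in a finite restriction of $\T_{D,B}$ (finitely many constructors, variables, and tuple arities from $G$, weights in $\ZZ_B$, depths bounded by $D$) and concludes stabilization of the increasing sequence $(G^n)_n$. Your additional remarks---that collapsing and substitution introduce no new symbols, and that $G^n=G^{n+1}$ propagates to all later stages---are exactly the details the paper leaves implicit.
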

\noindent
As an example, here are the first steps of the computation of the graph of paths
of the control-flow graph for the functions~\ttt{\(\ttt{f}\sb1\)}
and~\ttt{\(\ttt{g}\sb1\)} (page~\pageref{def:f1g1}) when~$D=B=1$. The initial
control-flow graph~$G$ given by the static analysis is given on page~\pageref{graph:f1g1}.
The graph~$G^0 = G$ contains only two arcs:
\begin{itemize}
  \item $\sigma_1 \eqdef \subst{\x:=\C{A}\x}$
    from~$\ttt{f}_1$ to~$\ttt{g}_1$;
  \item $\sigma_2 \eqdef \subst{\x:=\CC{A}\CC{A}\x}$
    from~$\ttt{g}_1$ to~$\ttt{f}_1$.
\end{itemize}
The graph~$G^1$ is then
\[
  \begin{tikzpicture}[baseline=(m-1-1.base)]
    \matrix (m) [matrix of math nodes, row sep={3em,between origins},
    column sep={8em,between origins},text depth=.25ex, text height=1.5ex]
      { \ttt{\(\ttt{f}\sb1\)} & \ttt{\(\ttt{g}\sb1\)} \\ };
    \path[edge]

      (m-1-1) edge[loop left, distance=1.5cm, in=150, out=210]
      node[left] {$\sigma_3 = \subst{\x := \CC{A} \x}$} (m-1-1)
      (m-1-2) edge[loop right, distance=1.5cm, in=-30, out=30]
      node[right] {$\sigma_4 = \subst{\x := \C{A}\app{-1}\CC{A} \x}$} (m-1-2)
      (m-1-1) edge[bend left=30] node[auto] {$\sigma_1 = \subst{\x := \C{A} \x}$} (m-1-2)
      (m-1-2) edge[bend left=30] node[auto] {$\sigma_2 = \subst{\x := \CC{A}\CC{A}\x}$} (m-1-1);
  \end{tikzpicture}
\]
where the loop~$\sigma_3$ on the left is obtained as~$\sigma_2 \ccomp
\sigma_1$ and the loop~$\sigma_4$ on the right is obtained
as~$\sigma_1\ccomp\sigma_2$.
The next iteration gives the following arcs for~$G^2$:
\begin{itemize}

  \item $\sigma_5\eqdef\sigma_1\ccomp\sigma_3$ which
    gives~$\subst{\x:=\C{A}\app{-1}\x}$ from~$\ttt{f}_1$ to~$\ttt{g}_1$,

  \item $\sigma_1\ccomp\sigma_2$ which gives~$\sigma_4$ around~$\ttt{g}_1$,

  \item $\sigma_6\eqdef\sigma_2\ccomp\sigma_4$ which
    gives~$[\x:=\app{-1}\CC{A}\x]$ from $\ttt{g}_1$ to~$\ttt{f}_1$,

  \item $\sigma_2\ccomp\sigma_1$ which gives ~$\sigma_3$ around~$\ttt{f}_1$.
\end{itemize}
The next iteration~$G^3$ yields a single new arc: $\subst{\x:=\app{-1}\x}$
from~$\ttt{f}_1$ to~$\ttt{f}_1$. This graph~$G^3$ with 7 arcs is the graph of
paths of the starting control-flow graph.

\subsection{Size-Change Termination Principle} 

First, a small lemma:
\begin{lem}\label{lem:values_approx}
  If $v\in\T()$ is an exact value, then:
  \begin{itemize}
    \item the normal form of $\app{0}v$ is of the form~$\sum_i
      \app{w_i}()$ where~$\max_i\big(w_i\big) = \Depth(v)$,
    \item $\app{\Depth(v)}() \less \nf(\app{0}v) \less \app{\Depth(v)}()$,
    \item if $v \less \app{w}()$ then~$w \geq \Depth(v)$.
  \end{itemize}
\end{lem}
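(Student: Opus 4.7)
The plan is to prove the first point by induction on the exact value~$v$ after strengthening the statement to parametrize by the outer weight, and then to derive the second and third points as direct consequences using the preorder properties already established---chiefly Lemmas~\ref{lem:approximation_destructors} and~\ref{lem:app_preorder} together with contextuality of~$\less$.

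For the first point, I would prove by induction on~$v$ that for every $w_0\in\ZZ_\infty$, the normal form of $\app{w_0}v$ equals $\sum_i \app{w_0+k_i}()$ with $\max_i k_i = \Depth(v)$; the lemma is the case $w_0 = 0$. The base case $v = ()$ is immediate: the tuple-rule for $\app{w}$ requires $n>0$, so $\app{w_0}()$ is already in normal form and is a single summand with $k_0 = 0 = \Depth(())$. For $v = \C{C}v'$, the reduction $\app{w_0}\C{C}v' \red \app{w_0+1}v'$ together with the induction hypothesis on~$v'$ at weight $w_0+1$ gives the summands, with $k_i = 1 + k'_i$ matching $\Depth(\C{C}v') = 1 + \Depth(v')$. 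The tuple case $v = (v_1,\dots,v_n)$ with $n>0$ proceeds similarly from $\app{w_0}(v_1,\dots,v_n) \red \sum_j \app{w_0+1}v_j$, combining the nested maxima from the hypotheses; idempotence of~$+$ may collapse equal summands but preserves the maximum of the weights.

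For the second point, writing $\nf(\app{0}v) = \sum_i \app{w_i}()$ with $\max_i w_i = \Depth(v)$, each $w_i \le \Depth(v)$ gives $\app{w_i}() \less \app{\Depth(v)}()$ by Lemma~\ref{lem:approximation_destructors}, and $+$ being a least upper bound (Lemma~\ref{lem:app_preorder}) then gives $\nf(\app{0}v) \less \app{\Depth(v)}()$; for the other direction, some index $i_0$ satisfies $w_{i_0}=\Depth(v)$, so $\app{\Depth(v)}()$ appears as one of the summands and $t \less t+u$ yields $\app{\Depth(v)}() \less \nf(\app{0}v)$. For the third point, I would apply contextuality of~$\less$ to the context $\app{0}\x$ and the hypothesis $v \less \app{w}()$, obtaining $\app{0}v \less \app{0}\app{w}()$; since $\app{0}\app{w}() \red \app{w}()$ and~$\less$ is compatible with~$\approx$, this gives $\nf(\app{0}v) \less \app{w}()$, and combined with the second point and transitivity, $\app{\Depth(v)}() \less \app{w}()$, which by Lemma~\ref{lem:approximation_destructors} is precisely $\Depth(v) \le w$. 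The only real obstacle is finding the right strengthening in the first point: at weight~$0$ alone the induction does not go through since the constructor and tuple reductions increment the outer weight, so parametrizing by an arbitrary $w_0$ is needed to make the increments thread cleanly; once that is set up, the rest is routine bookkeeping on the preorder.
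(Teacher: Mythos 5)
Your proposal is correct and follows essentially the same route as the paper: an induction on the exact value for the first point (the paper leaves it as ``a simple inductive proof''; your strengthening by an arbitrary outer weight is the natural way to make it go through), the second point read off from the shape of the normal form, and the third point obtained by applying contextuality to $\app{0}\x$, reducing $\app{0}\app{w}()$ to $\app{w}()$, chaining with the second point, and concluding from the characterization of $\app{v}()\less\app{w}()$. No gaps.
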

\begin{proof}
  The first point is a simple inductive proof on~$v$, and the second point
  follows directly.

  For the third point, suppose that~$v\less \app{w}()$. By contextuality
  of~$\less$ (refer to Definition~\ref{def:preorder}), we get~$\app{0}v \less
  \app{0}\app{w}()\approx\app{w}()$ and so, by the second point and transitivity,
  that~$\app{\Depth(v)}()\less \app{w}()$. We conclude by
  Lemma~\ref{lem:app_preorder}. %
\end{proof}

A subvalue of a value can be accessed by a sequence of destructors. For
example, the right subtree of a binary tree can be accessed
with~$\pi_2\CC{Node}$ in the sense that~$\pi_2\CC{Node}v$ reduces exactly to
the right subtree of~$v$. By the previous lemma, we can get the depth of the
right subtree by precomposing a value with~$\app{0}\pi_2\CC{Node}$.
A \emph{decreasing parameter} is a subvalue of a parameter whose depth decreases
strictly over a given recursive call. For example, in the call
{\small\begin{alltt}
    val rec push_left x =
      match x with Node[\(\ttt{t}\sb1\), Node[\(\ttt{t}\sb2\),\(\ttt{t}\sb3\)]] -> push_left Node[Node[\(\ttt{t}\sb1\),\(\ttt{t}\sb2\)],\(\ttt{t}\sb3\)]
                 | ...
\end{alltt}}\noindent
the right subtree of~$\x$ is decreasing, while neither its left subtree
nor~$t$ itself are decreasing. In the control-flow graph, this call site
becomes a loop labeled with
\[
  \tau \quad=\quad
  \subst{\x :=
    \C{Node}\Big(
      \C{Node}\big(
        \underbrace{\pi_1\CC{Node}\x}_{\ttt{t}_1},
        \underbrace{\pi_1\CC{Node}\pi_2\CC{Node}\x}_{\ttt{t}_2}\big),
      \underbrace{\pi_2\CC{Node}\pi_2\CC{Node}\x}_{\ttt{t}_3}\Big)}
  \ .
\]
Looking at~$\app{0}\pi_2\CC{Node}\x[\tau]$ to get
the depth of the right subtree of the argument after the recursive call, we obtain
\[
  \app{0}\pi_2\CC{Node}\x[\tau]
  \approx
  \app{0}\pi_2\CC{Node}\pi_2\CC{Node}\x
  \quad\less\quad
  \app{-2}\pi_2\CC{Node}\x
  \ .
\]
This means that the depth of the right subtree of the argument~$\x$ has decreased by
(at least) 2 after the recursive call.

\begin{defi}\label{def:decreasing}
  Let~$\tau = \subst{\ttt{x}_1:=t_1;\dots;\ttt{x}_n:=t_n}$ be a loop in a
  control-flow graph. A \emph{decreasing parameter} for~$\tau$ is a branch of
  destructors:~$\xi = \app{0} \seq d\x_i$ such
  that~$\Zero\not\approx\xi[\tau] \less \app{w} \xi$ with~$w<0$ and~$\seq d$ minimal,
  i.e., no strict suffix of~$\seq d$ satisfies the same condition.
  A loop is called \emph{decreasing} when it has a decreasing parameter.
\end{defi}
The minimality condition is purely technical: without it, the loop~$\tau =
\subst{\x:=\C{A}\app{-1}\CC{A}\x}$ would
have~$\xi=\app{0}\CC{X}\CC{A}\x$ as a decreasing argument
because~$\xi[\tau] \approx \app{-2}\CC{A}\x$.
The problem is that~$\C{X}$ has nothing to do with the definition
and~$\CC{X}\CC{A}$ might not even represent a subvalue of the parameter!
A good decreasing parameter would be~$\app{0}\CC{A}\x$. The
minimality condition is necessary to prove the following lemma:
\begin{lem}\label{lem:minimality_decreasing_parameter}
  If~$\xi$ is a decreasing parameter
  for~$\tau$ and~$\Zero \not\approx
  \sigma \less \tau \comp \rho$ then $\Zero \not\approx \xi[\sigma] \less
  \xi [\tau \comp \rho]$ and in particular,~$\xi [\tau \comp
  \rho]\not\approx\Zero$.
\end{lem}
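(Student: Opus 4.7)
The plan is to prove the two conclusions separately. For the inequality $\xi[\sigma]\less\xi[\tau\comp\rho]$: since $\sigma\less\tau\comp\rho$ holds pointwise on variables, in particular $\x_i\sigma\less\x_i(\tau\comp\rho)=(\x_i\tau)\rho$; contextuality of $\less$ (Definition~\ref{def:preorder}) then lets us wrap with the common context $\app 0 \seq d\,[\cdot]$ to conclude. Equivalently, this is the monotonicity of substitution composition on the right already established.

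The real work is $\xi[\sigma]\not\approx\Zero$, and this is where minimality of $\seq d$ is essential. Write $t_i:=\x_i\tau$. By hypothesis $\xi[\tau]=\app 0\seq d\,t_i\not\approx\Zero$ and $\xi[\tau]\less\app w\seq d\,\x_i$ with $w<0$; combining Lemma~\ref{lem:normal_form} with Lemma~\ref{lem:approximation_destructors}, the normal form of $\xi[\tau]$ is a non-empty sum of simple terms $\app{w_j}\seq{b_j}\x_i$ where each $\seq{b_j}$ is a suffix of $\seq d$. Tracing the rewriting of $\seq d\,t_i$, each destructor of $\seq d$ is consumed either by matching a top-level constructor of $t_i$ (group~(1) of Definition~\ref{def:reduction}) or by absorption into an $\app{w}$ subterm (group~(2)); minimality of $\seq d$ forbids the outermost destructors from being absorbed with no structural role, since otherwise dropping such a destructor would yield a strict suffix $\seq{d'}$ of $\seq d$ still giving a decreasing parameter for $\tau$. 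Now use $\x_i\sigma\less t_i\rho$ together with $\x_i\sigma\not\approx\Zero$: since $\rho$ substitutes strictly below the outer constructor layers of $t_i$, these layers coincide in $t_i$ and $t_i\rho$, and repeated use of Lemma~\ref{lem:orderHeadConstructors} forces $\x_i\sigma$ to share the same outer constructor skeleton. Where $t_i$ has an $\app{w}$ subterm absorbing part of $\seq d$, the corresponding subterm of $\x_i\sigma$ is $\less$ that $\app{w}$-term and still supports absorption rather than a clash (otherwise Lemma~\ref{lem:orderHeadConstructors} would force a matching constructor, contradicting $\x_i\sigma\not\approx\Zero$). Hence reducing $\seq d(\x_i\sigma)$ mirrors the reduction of $\seq d\,t_i$ at the outer level: every match finds its match, every absorption absorbs, no group~(3) clash arises, and $\xi[\sigma]=\app 0\seq d(\x_i\sigma)$ is therefore not $\Zero$.

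The ``in particular'' part is then immediate: if $\xi[\tau\comp\rho]\approx\Zero$ then by the inequality $\xi[\sigma]\less\Zero$, hence $\xi[\sigma]\approx\Zero$ since $\Zero$ is the least element (Lemma~\ref{lem:app_preorder}), contradicting the above. The main obstacle is the structural claim used in the middle paragraph: making rigorous ``minimality forbids outer destructors being freely absorbed'' requires partitioning the reduction trace into matching and absorption steps, and tracking how tuples turn $\app{w}(t_1,\dots,t_n)$ into a sum during reduction. This bookkeeping is technical but essentially routine once one commits to an innermost-first reduction strategy for $\seq d\,t_i$.
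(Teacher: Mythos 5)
Your overall strategy is the paper's own argument read contrapositively: use minimality to argue that every destructor of $\seq d$ is matched by a constructor of $\tau(\x_i)$, transfer those constructors to $\sigma(\x_i)$ via Lemma~\ref{lem:orderHeadConstructors}, and conclude that no group-$(3)$ clash can occur. The inequality part and the ``in particular'' part are fine. But the central claim --- ``minimality of $\seq d$ forbids the outermost destructors from being absorbed'' --- is precisely the content of the lemma, and you assert it rather than prove it. What must be checked is that if some prefix $d_1\cdots d_k$ (with $k\geq1$) outlives the constructors of $\tau(\x_i)$ and lands on $\app{w'}\seq b\x_i$, or on a bare destructor branch $\seq b\x_i$ (a case your match/absorb dichotomy omits entirely), then the weight/length inequality of Lemma~\ref{lem:approximation_destructors} characterizing $\xi[\tau]\less\app{w}\seq d\x_i$ transfers to the strict suffix $d_{k+1}\cdots d_n$, making it decreasing. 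This is a short but genuinely necessary computation, which the paper carries out in two explicit cases; it is not deferrable bookkeeping. For instance, with $\tau=\subst{\x:=\app{-1}\CC{A}\x}$ and $\seq d=\CC{A}$ the single destructor \emph{is} absorbed, and only the arithmetic reveals that the empty suffix is already decreasing, so that $\seq d$ was not minimal. (Also, the suffix relation in your first paragraph is reversed: $\xi[\tau]\less\app{w}\seq d\x_i$ forces $\seq d$ to be a suffix of each $\seq{b_j}$, not the other way around.)

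More seriously, the sentence beginning ``Where $t_i$ has an $\app{w}$ subterm absorbing part of $\seq d$\dots'' is wrong, and it is doing real work in your proof since you never actually excluded that case. If the coarser term $\tau(\x_i)\rho$ presents an approximation $\app{w}(\cdots)$ at the position where a destructor $d_k$ arrives, Lemma~\ref{lem:orderHeadConstructors} tells you nothing about $\sigma(\x_i)$ there: that lemma transfers head constructors from the coarser term to the finer one, never the reverse, so the corresponding subterm of $\sigma(\x_i)$ may perfectly well be $\C{C}s'$ with $\C{C}$ clashing against $d_k$. Such a clash makes $\seq d\,\sigma(\x_i)\approx\Zero$ --- the very conclusion you are trying to rule out --- and in no way contradicts $\sigma(\x_i)\not\approx\Zero$, so the parenthetical ``contradiction'' is vacuous. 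The only way to close the argument is the minimality computation above, which shows that the absorption (and leftover-destructor) cases simply cannot arise for a minimal $\seq d$; once that is established, the remaining constructor-matching argument goes through as you describe.
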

\proof
  Suppose that~$\xi = \app{0}d_1\cdots d_n \x_i$. The
  ``inequality'' follows from monotonicity. The important point is that under
  the hypothesis, we have~$\Zero \not\approx \xi \comp \sigma$.
  The term~$\xi[\sigma]$ is equal
  to~$\app{0}d_1\cdots d_n \sigma(\x_i)$. Suppose by contradiction that
  this reduces to~$\Zero$. Suppose also that~$\sigma$ is in normal form.

  There is only one reduction sequence of~$\app{0}d_1\cdots d_n \sigma(\x_i)$ and for it to
  give~$\Zero$, this reduction sequence needs to use a reduction step
  from group~$(3)$ of Definition~\ref{def:reduction}. In other words, a
  destructor of ~$d_1\cdots d_n$ has to reach an incompatible constructor
  in~$\sigma(\x_i)$.

  Since~$\sigma \less \tau \comp \rho$ by hypothesis, we have
  that~$\sigma(\x_i)\less\tau\comp\rho(\x_i)=\tau(\x_i)\rho$. By
  Lemma~\ref{lem:orderHeadConstructors} we know that all the head constructors
  of~$\tau(\x_i)\rho$ also appear in~$\sigma(\x_i)$.
  It is not difficult to see that the head constructors appearing in the
  normal form of~$\tau(\x_i)$ also appear in~$\tau(\x_i)[\rho]$. This phenomenon
  is general and doesn't depend on~$\tau$ or~$\rho$. It comes from the fact
  that applying a substitution to a term in normal form doesn't interfere with
  its constructors...

  All the constructors of $\tau(\x_i)$ thus appear in~$\sigma(\x_i)$.
  Since~$d_1\cdots d_n$ reaches an incompatible constructor in~$\sigma(\x_i)$, the
  only way for~$d_1\cdots d_n$ to not reach an incompatible constructor
  in~$\tau(\x_i)$ is to reach the end of the constructors
  in~$\tau(\x_i)$ before the end of~$d_1\cdots d_n$. There are two cases:
  \begin{itemize}
    \item either~$d_1\cdots d_n$ reaches an approximation:
    \begin{myequation}
      \app{0}d_1 \cdots d_n \tau(\x_i) & \red & \dots\\
                                       & \vdots &\hbox{$n-k$ reductions}\\
                        & \red   & \app{0}d_1 \cdots d_k \app{w'} \seq{b} \x_i\\
                        & \approx & \app{w'-k}\seq b\x_i\\
                        & \less & \app{w}\seq d \x_i \quad \hbox{with $w<0$}\ .
    \end{myequation}
    By Lemma~\ref{lem:approximation_destructors} we get that~$\seq d$ is a
    suffix of~$\seq b$, and~$w'-k+|\seq d| \leq w + |\seq b|$.
    But then, we have~$\app{0}d_{k+1} \cdots d_n \tau(\x_i) \red^* \app{w'}\seq b
    \x_i$, and we have that~$d_{k+1}\cdots d_n$ is a suffix of~$\seq b$, and~$w'+|\seq
    d_{k+1}\dots d_n| \leq w + |b|$. This implies that the sequence~$d_1\cdots
    d_n$ wasn't minimal as we have~$\app{0}d_{k+1} \cdots d_n
    \tau(\x_i) \less \app{w}d_{k+1}\cdots d_n \x_i$.

    \item The other possibility is that $\seq d$ reaches directly a branch of
      destructors:
    \begin{myequation}
      \app{0}d_1 \cdots d_n \tau(\x_i) & \red & \dots\\
                                       & \vdots & \hbox{$n-k$ reductions}\\
                        & \red   & \app{0}d_1 \cdots d_k \seq{b} \x_i\\
                        & \less & \app{w}\seq d \x_i \quad \hbox{with $w<0$}\ .
    \end{myequation}
    By Lemma~\ref{lem:approximation_destructors},~$\seq d$ is a suffix
    of~$d_1\cdots d_k\cdotp\seq b$ and~$|\seq d| \leq w + |\seq b| + k$. The
    sequence~$d_1\cdots d_n$ isn't minimal because we
    have~$\app{0}d_{k+1}\cdots d_n\tau{\x_i} \approx \app{0}\seq b\x_i$
    with~$d_{k+1}\cdots d_n$ a suffix of~$\seq b$ and~$|d_{k+1}\cdots d_n|
    \leq w+|\seq b|$, i.e.,~$\app{0}d_{k+1} \cdots d_n \tau(\x_i) \less
    \app{w}d_{k+1}\cdots d_n \x_i$.\qed
  \end{itemize}

\noindent We can now state, and prove, the size-change termination principle.
\begin{prop}[Size-Change Termination Principle with Constructors]\label{prop:SCT}
  If~$G$ safely represents some recursive definitions and all coherent
  loops~$\tau\coh\tau\ccomp\tau$
  in~$G^+$ are decreasing, then the evaluation of the functions on
  values cannot produce an infinite sequence of calls.
\end{prop}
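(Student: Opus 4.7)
The plan is to argue by contradiction. Suppose some evaluation produces an infinite call chain $\ttt{f}_0 \to \ttt{f}_1 \to \cdots$ with actual exact-value arguments given by the substitutions $\rho_0, \rho_1, \dots$. Since $G$ safely represents the definitions, each call is witnessed by an arc $\ttt{f}_k \morphism{\sigma_k} \ttt{f}_{k+1}$ of $G$ with $\rho_{k+1} \less \sigma_k \comp \rho_k$. Using associativity of uncollapsed composition and its monotonicity, a short induction yields $\rho_n \less (\sigma_{n-1} \comp \cdots \comp \sigma_0) \comp \rho_0$, so safety lifts from single arcs to arbitrary paths.

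Next, I invoke Lemma~\ref{lem:ramsey} on the sequence $(\sigma_k)_{k \geq 0}$: there exist indices $n_0 < n_1 < \cdots$ and a coherent loop $\tau : \ttt{f} \to \ttt{f}$ of $G^+$ such that each collapsed composition $\sigma_{n_{k+1}-1} \ccomp \cdots \ccomp \sigma_{n_k}$ equals $\tau$. Since $\PB{B}{\PC{D}{\BLANK}}$ is inflationary by Lemma~\ref{lem:collapsing_monotonic}, the uncollapsed composition is finer than $\tau$, and combined with the previous step this yields $\rho_{n_{k+1}} \less \tau \comp \rho_{n_k}$ for all $k$.

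By the termination hypothesis, $\tau$ admits a decreasing parameter $\xi = \app{0}\seq d \x_i$ with $\xi[\tau] \less \app{w}\xi$ for some $w < 0$. Exact-value substitutions are never pointwise $\Zero$, so Lemma~\ref{lem:minimality_decreasing_parameter} gives $\xi[\rho_{n_{k+1}}] \not\approx \Zero$ for every $k \geq 0$. Contextuality of $\less$ applied to $\rho_{n_{k+1}} \less \tau \comp \rho_{n_k}$, together with associativity of uncollapsed substitution and $\xi[\tau] \less \app{w}\xi$, yields $\xi[\rho_{n_{k+1}}] \less \app{w}\xi[\rho_{n_k}]$. Because $(\rho_{n_k})(\x_i)$ is an exact value and $\xi[\rho_{n_k}]$ does not reduce to $\Zero$, the destructor sequence $\seq d$ extracts an exact subvalue $v_k$, and Lemma~\ref{lem:values_approx} converts the previous bound into $\Depth(v_{k+1}) \leq \Depth(v_k) + w < \Depth(v_k)$. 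This infinite strict descent in $\NN$ is the desired contradiction.

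The main obstacle I anticipate is the second step: because $\ccomp$ is not associative, it is not immediate that the $\tau$ returned by Lemma~\ref{lem:ramsey} still safely bounds the evolution of the $\rho_{n_k}$ along the subsampled path. The resolution is precisely the inflationary property of $\PB{B}{\PC{D}{\BLANK}}$ in Lemma~\ref{lem:collapsing_monotonic}: whatever parenthesization is used to produce $\tau$, the true uncollapsed composition is still finer than $\tau$, so the safety bound passes to the subsequence.
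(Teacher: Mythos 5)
Your proof is correct and follows essentially the same route as the paper's: contradiction via safety, Lemma~\ref{lem:ramsey}, the inflationary/monotonicity properties of collapsing to neutralize the non-associativity of~$\ccomp$, Lemma~\ref{lem:minimality_decreasing_parameter} to exclude~$\Zero$, and Lemma~\ref{lem:values_approx} to translate the bound into depths. The only (harmless) difference is cosmetic: you conclude with an infinite strict descent $\Depth(v_{k+1})\leq\Depth(v_k)+w$ in~$\NN$, whereas the paper accumulates the weights into $\app{kw+w'}()$ and picks $k$ large enough to contradict Lemma~\ref{lem:values_approx}.
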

\begin{proof}

  Suppose the conditions of the proposition are satisfied and suppose that
  function~$\ttt{h}$ on values~$v_1$, \dots, $v_m$ provokes an infinite
  sequence of calls~$c_1\cdots c_n\cdots$.  Write~$\rho_n$ for the arguments
  of call~$c_n$. The~$\rho_n$'s contain first-order values and in
  particular,~$\rho_0$ corresponds to the initial arguments
  of~\ttt{h}:~$\rho_0=\subst{\x_1:=v_1;\dots;\x_m:=v_m}$.
  Let~$\sigma_1\cdots\sigma_n\cdots$ be the substitutions that label the arcs
  of~$G$ corresponding to the calls~$c_1 c_2\cdots$. We can use
  Lemma~\ref{lem:ramsey} to decompose this sequence as:
  \[
    \ttt{h}
    \quad
    \underbrace{\morphism{\sigma_0} \dots \morphism{}}_{\hbox{\scriptsize initial prefix}}
    \quad
    \ttt{f}
    \quad
    \underbrace{\morphism{\sigma_{n_0}} \dots \morphism{}}_{\tau}
    \quad
    \ttt{f}
    \quad
    \underbrace{\morphism{\sigma_{n_1}} \dots \morphism{}}_{\tau}
    \quad
    \ttt{f}
    \quad
    \dots
  \]
  where:
  \begin{itemize}
    \item all the $\sigma_{n_{k+1}-1}\ccomp \dots \ccomp \sigma_{n_k}$
      are equal to the same~$\tau : \ttt{f} \to \ttt{f}$,
    \item $\tau$ is coherent: $\tau \coh \tau\ccomp \tau$.
  \end{itemize}
  The control-flow graph~$G$ is safe and we thus have
  \[
     \rho_{n+1} \quad\less\quad \sigma_n \comp \rho_n
  \]
  Since~$\comp$ is monotonic, we also get
  \[
    \rho_{n_1} \quad\less\quad \sigma_{n_1-1}\comp \cdots \comp\sigma_{n_0}
    \comp \rho_{n_0}
    \ \hbox{.}
  \]
  By associativity of~$\comp$, and because collapsing and composition are
  monotonic, we get
  \[
    \rho_{n_1} \quad\less\quad (\sigma_{n_1-1}\ccomp \cdots \ccomp\sigma_{n_0})
    \comp \rho_{n_0}
    \ =\ \tau\comp\rho_{n_0}
    \ \hbox{.}
  \]
  Repeating this, we obtain:
  \[
    \rho_{n_k}
    \quad\less\quad
    \underbrace{\tau\comp\cdots\comp\tau}_{k} \ \comp\ \rho_{n_0}
    \ .
  \]
  By hypothesis,~$\tau$ has a decreasing parameter: some~$\xi=\app{0}\seq d
  \x$ s.t.~$\xi[\tau]\less\app{w}\xi$ with~$w<0$. We thus have
  \[
    \xi[\rho_{n_k}]
    \quad\less\quad
    \xi[\tau\comp \cdots \comp \tau \comp \rho_{n_0}]
    \quad\less\quad
    \cdots
    \quad\less\quad
    \app{w+\cdots+w}\xi[\rho_{n_0}]
    \ .
  \]
  By Lemma~\ref{lem:minimality_decreasing_parameter}, the right side cannot
  be~$\Zero$. By Lemma~\ref{lem:values_approx}, it is approximated
  by~$\app{w'}()$, where~$w'$ is equal to the depth of the value~$\seq
  d\rho_{n_0}(\x)$.
  We can choose~$k$ large enough to ensure that~$-kw$ is strictly more
  than~$w'$.
  Lemma~\ref{lem:values_approx} also implies
  that~$\xi[\rho_{n_{k}}]$ approximates~$\app{w''}()$, where~$w''$ is
  equal to~$\Depth\big(\seq d\rho_{n_k}(\x)\big)$. But then, we have
  \[
    \app{w''}()
    \quad\less\quad
    \xi[\rho_{n_k}]
    \quad\less\quad
    \app{k w}\xi[\rho_{n_0}]
    \quad\less\quad
    \app{kw + w'}()
  \]
  where~$w''=\Depth(\xi[\rho_{n_k}]) \geq 0$ and~$kw+w'<0$. This contradicts
  Lemma~\ref{lem:values_approx}.
\end{proof}

\medbreak
\begin{defi}
  A control-flow graph~$G$ that satisfies the condition of
  Proposition~\ref{prop:SCT} is said to be \emph{size-change terminating
  for~$D$ and~$B$}.
\end{defi}
We have:
\begin{prop}\label{prop:SCT_monotonicity}
  If~$G$ is size-change terminating for some~$D\geq0$ and~$B>0$, then~$G$ is
  also size-change terminating for all~$D'\geq D$ and~$B'\geq B$.
\end{prop}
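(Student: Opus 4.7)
The plan is a refinement-and-transfer argument. Write $\ccomp$ and $\ccomp'$ for the collapsed compositions at bounds $(D,B)$ and $(D',B')$ respectively, and $G^+_{D,B}$, $G^+_{D',B'}$ for the corresponding graphs of paths. I claim that every coherent loop $\tau'$ of $G^+_{D',B'}$ refines (in the preorder $\less$) some coherent loop $\tau$ of $G^+_{D,B}$, so that the decreasing-parameter property supplied for $\tau$ by the SCT hypothesis transfers to $\tau'$; this gives SCT for $(D',B')$.

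The core technical step, proved by induction on the length of the underlying path in $G$, is: for every path $\sigma_1,\dots,\sigma_n$ in $G$, the arc it produces in $G^+_{D',B'}$ is finer than the arc it produces in $G^+_{D,B}$. In the inductive step, using the right-associative bracketing built into the definition of $G^+$, monotonicity of ordinary composition lifts $\tau'_n \less \tau_n$ to $\sigma_{n+1}\comp\tau'_n \less \sigma_{n+1}\comp\tau_n$, and Corollary~\ref{cor:collapsing_antitonic_bounds} combined with Lemma~\ref{lem:collapsing_monotonic} then gives
\[
  \PB{B'}{\PC{D'}{\sigma_{n+1}\comp\tau'_n}} \less
  \PB{B}{\PC{D}{\sigma_{n+1}\comp\tau'_n}} \less
  \PB{B}{\PC{D}{\sigma_{n+1}\comp\tau_n}}\ ,
\]
that is, $\tau'_{n+1} \less \tau_{n+1}$.

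Transfer of coherence is then immediate. Pick any path producing $\tau'$ in $G^+_{D',B'}$ and let $\tau$ be the arc produced in $G^+_{D,B}$ by the same path. Given a witness $t\not\approx\Zero$ with $t \less \tau'$ and $t \less \tau'\ccomp'\tau'$, the refinement lemma applied to this path and to its doubled version yields $\tau'\less\tau$ and $\tau'\ccomp'\tau'\less\tau\ccomp\tau$, so the same $t$ witnesses $\tau\coh\tau\ccomp\tau$. By hypothesis $\tau$ has a decreasing parameter $\xi=\app{0}\seq d\x_i$ with $\xi[\tau]\less\app{w}\xi$, $w<0$; monotonicity of composition immediately gives $\xi[\tau']\less\xi[\tau]\less\app{w}\xi$. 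To conclude that $\tau'$ is decreasing I still need $\xi[\tau']\not\approx\Zero$ and the minimality clause: the former follows from Lemma~\ref{lem:minimality_decreasing_parameter} with $\sigma:=\tau'$ and $\rho$ the identity substitution, using that the coherence of $\tau'$ forces $\tau'\not\approx\Zero$ pointwise; the latter is recovered by passing to the shortest suffix of $\seq d$ that still decreases over $\tau'$, observing that non-zeroness survives the shortening because extra destructors applied to $\Zero$ give $\Zero$. The main obstacle is exactly this final adjustment, i.e., verifying that both clauses of the definition of decreasing parameter (strict decrease together with non-zeroness and minimality) survive the descent from $\tau$ to $\tau'$.
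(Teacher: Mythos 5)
Your proof follows the paper's argument exactly: build $\tau$ from the same underlying path in $G$ using the smaller bounds, use monotonicity of collapsing to get $\tau'\less\tau$ and $\tau'\ccomp'\tau'\less\tau\ccomp\tau$, transfer coherence through a common witness, and pass to a minimal suffix of the inherited decreasing parameter --- indeed you are more careful than the paper, which leaves the non-zeroness and minimality adjustments implicit. One small repair: $\tau'\ccomp'\tau'\less\tau\ccomp\tau$ does not follow from your refinement lemma applied to the \emph{doubled path} (by non-associativity the doubled path's fully right-associated arc is a different bracketing than the single application $\tau\ccomp\tau$ appearing in the coherence condition), but it follows directly from $\tau'\less\tau$, monotonicity of $\comp$, Corollary~\ref{cor:collapsing_antitonic_bounds} and Lemma~\ref{lem:collapsing_monotonic} --- i.e., one instance of your own inductive step.
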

\begin{proof}
  Let~$G$ be a control-flow graph, and let~$B'\geq B$ and~$D'\geq D$. Suppose
  that~$G$ is size-change terminating for~$D$ and~$B$; we want to show
  that it is also size-change terminating for~$D'$ and~$B'$.

  Let~$\tau'$ be a coherent loop in~$G_{D',B'}^+$. By construction,~$\tau' =
  \sigma_1\ccomp_{D',B'}\cdots\ccomp_{D',B'}\sigma_n$ for a
  path~$\sigma_1\dots\sigma_n$ in~$G$. We can define~$\tau =
  \sigma_1\ccomp_{D,B}\cdots\ccomp_{D,B}\sigma_n$, which is a loop
  in~$G_{D,B}^+$.

  Since collapsing is monotonic (Lemma~\ref{cor:collapsing_antitonic_bounds}),
  we have that~$\tau'\less\tau$. We also have that~$\tau'\ccomp_{D',B'}\tau'
  \less \tau\ccomp_{D,B}\tau$ and because~$\tau'$ is coherent,~$\tau$ is also
  coherent. By hypothesis,~$\tau$ has a decreasing parameter~$\xi$: we
  have~$\xi[\tau]\less\app{w}\xi$, with~$w<0$. As~$\tau'$: $\xi[\tau'] \less
  \xi[\tau]\less\app{w}\xi$, there is a minimal suffix of~$\xi$ that is a decreasing
  argument for~$\tau'$.
\end{proof}

\subsubsection{The Algorithm} \label{sub:algorithm}

The procedure checking if a set of mutually recursive definitions is
terminating is thus:
\begin{description}

  \item[1- static analysis] compute a safe representation of the recursive
    definitions as a control-flow graph~$G$. The simple static analysis
    described in Appendix~\ref{sec:static_analysis} is enough for all the
    examples in the paper and can be done in linear time.

  \item[2- choose bounds $B$ and $D$] in our implementation, the bounds do not
    depend on~$G$ and are~$B=1$, $D=2$ by default. The user can also change
    them by inserting pragmas together with the code of the recursive
    definitions.

  \item[3- compute the graph of paths] compute the graph of paths~$G^+$
    of~$G$ incrementally, with the bounds~$B$ and~$D$. This step can take an
    exponential amount of space, as the example of the function~\ttt{perms}
    (page~\pageref{ex:perms}) demonstrates.

  \item[4- check coherent loops] check that all the coherent loops of the
    graph~$G^+$ computed  previously are decreasing. If so, the
    functions of the definitions terminate; otherwise, the procedure cannot
    answer.

    For this, it must be possible:
    \begin{itemize}
      \item to check the coherence relation~$\coh$,
      \item to look for decreasing arguments of a loop.
    \end{itemize}
    Some implementation details are given in Appendix~\ref{app:implementation}

\end{description}

\subsubsection*{Failure of Completeness}

The original SCT satisfied a notion of \emph{completeness} stating roughly
that ``all infinite paths are infinitely decreasing \emph{iff} all coherent
loops have a decreasing parameter''. We capture more programs
(Section~\ref{sub:comparison}) than the original SCT, but completeness doesn't
hold anymore. Here is a counter example for~$D=0$ and~$B=2$:
{\small\begin{alltt}
  val rec \(\ttt{h}\sb1\) x = match x with A[A[A[x]]] -> \(\ttt{h}\sb2\) x
    and   \(\ttt{h}\sb2\) x = \(\ttt{h}\sb3\) A[X]
    and   \(\ttt{h}\sb3\) x = \(\ttt{h}\sb1\) A[X]
\end{alltt}}\label{cex:completeness}\noindent
The corresponding control-flow graph is
\[\begin{tikzpicture}[scale=1]
\path (0,1.7320508075688772) coordinate (h1);
\path (-1,0) coordinate (h2);
\path (1,0) coordinate (h3);
\node [above] at (h1) {$\ttt{h}_1$};
\node [below right] at (h3) {$\ttt{h}_3$};
\node [below left] at (h2) {$\ttt{h}_2$};
\path[edge] (h1) edge[bend right=25] node[above left]{$[\x:=\CC{A}\CC{A}\CC{A}\x]$} (h2);
\path[edge] (h2) edge[bend right=25] node[below]{$[\x:=\C{A}\x]$} (h3);
\path[edge] (h3) edge[bend right] node[above right]{$[\x:=\C{A}\x]$} (h1);
\end{tikzpicture}
\]
For every conceivable definition of ``decreasing path'', all the infinite
paths in this graph should decrease infinitely. However, because of the
consecutive~``$\subst{\x:=\C{A}\x}$'' arcs, we will get
a~``$\subst{\x:=\app{\infty}\x}$'' arc in the graph of paths, corresponding to
their composition. This will propagate and give coherent
loops~$\subst{\x:=\app{\infty}\x}$ around each node. This graph is not
size-change terminating for~$D=0$ and~$B=2$.

The previous example is size-change terminating whenever~$B>2$; but
completeness doesn't even hold if we can choose the bounds~$B$ and~$D$.
Call a graph~$G$ \emph{decreasing} if no infinite path comes from actual
computation, i.e. if all infinite path evaluate to~$\Zero$. More precisely, it
means that for every infinite
path~$(\sigma_k)_{k>0}$ and substitution~$\rho$ of values, there is a finite
prefix~$\sigma_1\cdots\sigma_n$ s.t.~$\rho\comp\sigma_1\comp\cdots\sigma_n
\approx \Zero$.
The combing function transforming a binary tree into a right-leaning tree
terminates for a subtle reason. Its definition is
{\small\begin{alltt}
  val rec comb x = match x with
       Leaf[] -> Leaf[]
     | Node[t,Leaf[]] -> Node[comb t,Leaf[]]
     | Node[t1,Node[t2,t3]] -> comb Node[Node[t1,t2],t3]
\end{alltt}}\noindent
and it is safely represented by the graph with a single node~\ttt{comb} and
two loops:
\begin{itemize}
  \item $\subst{\x:=\pi_1\CC{Node}\x}$
  \item $\subst{\x:=
      \C {Node}\big(
          \C{Node}(\pi_1\CC{Node}\x,
                   \pi_1\CC{Node}\pi_2\CC{Node}\x),
          \pi_2\CC{Node}\pi_2\CC{Node}\x\big)}$.
\end{itemize}
This graph is terminating in the above sense precisely because~\ttt{comb}
terminates. It can however be shown that for every choice of~$D$ and~$B$, this
graph is \emph{never} size-change terminating.
The reason is that for any bound~$D$ and sequence~$d_1\cdots d_D$ of length~$D$,
there is a tree~$t$ for which the depth of the subtree~$d_1\cdots d_D t$
increases arbitrarily during a sequence of recursive calls. For example,
at~$D=4$ for~$\pi_1\CC {Node}\pi_2\CC {Node}$, consider the tree on the
left:
\[
  \begin{tikzpicture}[baseline=(current bounding box.center),
      level distance=15pt,sibling distance=25pt,
      emptynode/.style={circle, inner sep=1.5pt, fill=black},
      treenode/.style={rectangle, rounded corners, draw, inner sep=5pt},
      font=\footnotesize,
    ]
    \coordinate
      child { node[emptynode] {} }
      child {
        child { node[emptynode] {} }
        child {
          child { node[treenode] {$T$} }
          child { node[emptynode] {} }
        }
      };
  \end{tikzpicture}
  \quad\stackrel{\hbox{\tiny second call}}{\longrightarrow}\quad
  \begin{tikzpicture}[baseline=(current bounding box.center),
      level 1/.style={level distance=15pt,sibling distance=40pt},
      level 2/.style={level distance=15pt,sibling distance=20pt},
      emptynode/.style={circle, inner sep=1.5pt, fill=black},
      treenode/.style={rectangle, rounded corners, draw, inner sep=5pt},
      font=\footnotesize,
    ]
    \coordinate
      child {
        child {node[emptynode] {}}
        child {node[emptynode] {}}
      }
      child {
        child {node[treenode] {$T$}}
        child {node[emptynode] {}}
      }
      ;
  \end{tikzpicture}
  \qquad .
\]
By the second recursive call, the tree on the right will be used as the new
argument. While~$\pi_1\CC {Node}\pi_2\CC {Node}$ corresponds to the empty tree on
the left, it corresponds to~$T$ on the right!
Note that it is the conjunction of the two recursive calls that makes this
possible: for~$\pi_2\CC{Node}\pi_2\CC{Node}$, we need to use the second call
and then the first call:
\[
  \begin{tikzpicture}[baseline=(current bounding box.center),
      level distance=15pt,sibling distance=25pt,
      emptynode/.style={circle, inner sep=1.5pt, fill=black},
      treenode/.style={rectangle, rounded corners, draw, inner sep=5pt},
      font=\footnotesize,
    ]
    \coordinate
      child {node[emptynode] {}}
      child {
        child {
          child{node[emptynode] {}}
          child{node[treenode] {$T$}}
        }
        child {node[emptynode] {}}
      }
      ;
  \end{tikzpicture}
  \quad\stackrel{\hbox{\tiny second call}}{\longrightarrow}\quad
  \begin{tikzpicture}[baseline=(current bounding box.center),
      level distance=15pt,sibling distance=25pt,
      emptynode/.style={circle, inner sep=1.5pt, fill=black},
      treenode/.style={rectangle, rounded corners, draw, inner sep=5pt},
      font=\footnotesize,
    ]
    \coordinate
      child {
        child {node[emptynode] {}}
        child {
          child{node[emptynode] {}}
          child{node[treenode] {$T$}}
        }
      }
      child {node[emptynode] {}}
      ;
  \end{tikzpicture}
  \quad\stackrel{\hbox{\tiny first call}}{\longrightarrow}\quad
  \begin{tikzpicture}[baseline=(current bounding box.center),
      level distance=15pt,sibling distance=25pt,
      emptynode/.style={circle, inner sep=1.5pt, fill=black},
      treenode/.style={rectangle, rounded corners, draw, inner sep=5pt},
      font=\footnotesize,
    ]
    \coordinate
      child {node[emptynode] {}}
      child {
        child {node[emptynode] {}}
        child{node[treenode] {$T$}}
      }
      ;
  \end{tikzpicture}
  \quad\hbox{.}
\]
This implies that there can be no decreasing argument in the argument
of~$\ttt{comb}$!

\smallbreak
Surprisingly, adding a second argument representing the size of the tree makes
the function size-change terminating, i.e., the following definition \emph{is}
size-change terminating even though the second argument doesn't decrease at
the second call site.
{\small\begin{alltt}
  val rec comb_size t s = match t,s with
       Leaf[],_ -> Leaf[]
     | Node[t,Leaf[]],S[n] -> Node[comb_size t n,Leaf[]]
     | Node[t1,Node[t2,t3]],n -> comb_size Node[Node[t1,t2],t3],n
     | _,_ -> raise Error[]
\end{alltt}}\noindent
In other words, we can define the combing function as
{\small\begin{alltt}
  val comb t = comb_size t (size t)
\end{alltt}}\noindent
and have the system automatically infer that it is terminating.

\subsection{Complexity} 
\label{sub:complexity}

Lee, Jones and ben Amram showed that deciding whether a graph is size-change
terminating in the original sense is P-space hard~\cite{SCT}. We can
encode the same P-space complete problem as an instance of our version of
size-change termination for~$D=0$ and~$B=1$. By monotonicity
(Lemma~\ref{prop:SCT_monotonicity}), all other instances of size-change
termination are P-space hard.

It is not difficult to construct ad-hoc small programs that require an exponential
amount of space, even when~$D=0$ and~$B=1$. The simplest is probably the
following:
{\small\begin{alltt}
    val rec perms \(\x\sb1\) \(\x\sb2\) \(\x\sb3\) \(\x\sb4\) =
      g (perms \(\x\sb2\) \(\x\sb1\) \(\x\sb3\) \(\x\sb4\))
        (perms \(\x\sb1\) \(\x\sb3\) \(\x\sb2\) \(\x\sb4\))
        (perms \(\x\sb1\) \(\x\sb2\) \(\x\sb4\) \(\x\sb3\))
        (perms \(\x\sb4\) \(\x\sb2\) \(\x\sb3\) \(\x\sb1\))
\end{alltt}}\noindent\label{ex:perms}
where \ttt{g} is a previously defined function. The initial control-flow graph
will contain a single node with 4 loops, and the graph of paths will contain 24
loops: one for each permutation of the parameters~$\x_1$ through~$\x_4$. More
generally we can construct, for each~$n$, a program of size~$n^2$ for which
the graph of paths will contain~$n!$ loops. 
However, just like with the original SCT, checking termination of definitions
written by hand with reasonable bounds~$B$ and~$D$ seems to remain practical.

\subsection{Comparison with other SCT-Based Criterion} 
\label{sub:comparison}


In the original SCT, an arc in the control-flow graph was a bipartite graph
with the parameters of the calling function on the left and the arguments of
the called function on the right. A link from~$\x$ to~$u$ can have
label:
\begin{itemize}
  \item $\Down$, meaning that the size of~$\ttt{u}$ is strictly smaller than the
    size of~$\x$,
  \item $\EqDown$, meaning that the size of~$\ttt{u}$ is smaller or equal than
    the size of~$\x$.
\end{itemize}
Such a graph is said to be~\emph{fan-in free} if no~$u$ on the right is the target of
more than one arc. We can encode such a bipartite graph as a
substitution~$\sigma=\subst{\y_1:=t_1;\dots;\y_m:=t_m}$ where:
\begin{itemize}
  \item $t_k = \app{-1} \x_i$ if there is an arc~$\Down$ from~$\x_i$
    to~$u_i$,
  \item $t_k = \app{0} \x_i$ if there is an arc~$\EqDown$ from~$\x_i$
    to~$u_i$,
  \item $t_k = \app{\infty}()$ otherwise.
\end{itemize}
It can be checked that when~$D=0$ and~$B=1$, composition and the size-change
termination condition on~$G^+$ correspond exactly to composition and the
size-change termination condition from~\cite{SCT}. Note in particular that
composition is associative in this context.
Our criterion with~$D=0$ and~$B=1$ is roughly equivalent to the original SCT
for fan-in free graph and with ``depth'' as the notion of size where all arcs
have been initially collapsed. A small lemma stating that checking all
coherent loops~$\tau \coh \tau\ccomp\tau$ is equivalent to checking only the
idempotent loops~$\tau = \tau\ccomp\tau$ is necessary. (A more general
conjecture which we have been unable to prove is that for checking size-change
termination for arbitrary~$D$ and~$B$, it is always sufficient to only check
that idempotent loops have a decreasing argument.)

\subsubsection*{SCT with Difference Constraints}

A. Ben-Amram considered a generalisation of the original SCT which, in our
terminology, could be seen as choosing the bounds~$D=0$ and~$B=\infty$ by
allowing unbounded weights in the control-flow graphs~\cite{deltaSCT}. The
general problem is undecidable, but the restriction to fan-in free graph is
decidable.
The cost of this generality is the introduction of arithmetic in the decision
procedure: deciding if a graph is size-change terminating involves integer
linear programming.
Our control-flow graphs are fan-in free and the criterion avoids arithmetics
by putting a bound on the weights. We lose completeness as shown by the
example on page~\pageref{cex:completeness}, but this doesn't seem to be a
problem in practice because the user may increase the bound~$B$ (at the cost
of speed) and we've rarely found it necessary to go beyond 2 or 3. It would
nevertheless be interesting to see if the approach of~\cite{deltaSCT} can be
combined with our approach to get a criterion for~$B=\infty$ and
arbitrary~$D$.

\subsubsection*{Using ``Calling Contexts''}

P. Manolios and D. Vroon generalized the SCT principle by adding ``calling
contexts'' to the control-flow graph~\cite{calling_contexts}.
A calling context from~\ttt{f} to~\ttt{g} amounts to:
\begin{itemize}
  \item a substitution describing the arguments of~\ttt{g} as terms with free
    variables among the parameters of~\ttt{f},
  \item a set of expressions whose free variables are among the parameters
    of~\ttt{f}.
\end{itemize}
The substitutions are built from the ambient language, as are the expressions
in the set. The intuition of having such a calling context from~\ttt{f}
to~\ttt{g} is that \emph{if all the expressions of the set evaluate to
\ttt{True}, then there can be a call to~\ttt{g} from~\ttt{f}, and the
arguments of~\ttt{g} are given by the substitution}.

This is much more expressive than our approach as the contexts may contain
terms representing arbitrary conditions, like~``$\ttt{Prime(x)}$'' expressing
that a parameter is a prime number. The drawback is that because the
conditions contain free variables, an automatic theorem prover is necessary to
decide when they evaluate to~\ttt{True}. This version has been formalized and
implemented~\cite{Isabelle1,Isabelle2} in Isabelle~\cite{Isabelle}, a proof
assistant based on higher-order logic. The formalization relies Isabelle's
``\ttt{auto}matic'' tactic for checking those conditions.

Our approach uses a similar idea but restricts to the
``constructors/destructors contexts'' that were necessary to build the
arguments of a call. This simplifies the problem so that everything can be
handled combinatorially in a uniform way and makes it more appropriate for a
proof assistant based on type theory like~Coq, or the Agda programming
language.


\subsection{Extensions}

\subsubsection*{Linear Norms}

A lot of attention in the literature on termination has been devoted to
finding a good norm for values~\cite{Lindenstrauss97automatictermination}.
At the moment, the norm used in this paper is very simple: each constructor has
weight~$1$, as can be seen from the reduction~$\app{w}\C C t \red \app{w+1}
t$. Choosing different weights for constructors could be useful in cases such
as
{\small\begin{alltt}
  val rec f = fun
      A[A[A[A[A[B[x]]]]]] ->  f A[A[A[A[A[C[C[x]]]]]]]
    | A[A[A[A[A[C[x]]]]]] ->  f A[A[A[A[A[x]]]]]
    | _ -> A[]
\end{alltt}}\noindent
This function is size-change terminating if the bound~$D$ is greater than~$7$.
If the definition contained other recursive calls, it can make the testing
procedure use more resources than reasonable. Giving a weight of $3$ to $\C
B$ and $1$ to $\C C$ would make this function size-change terminating, even
when~$D=0$. Trying to choose the appropriate weights automatically might not
be worth the trouble but this is still an interesting question.

\subsubsection*{Counting abstractions}
\label{counting_abs}

The PML language for which this criterion was developed is more complete than
the ambient language presented here. In particular, function abstractions and
partially applied functions are allowed. Like OCaml, PML only computes
weak-head normal forms and the function
{\small\begin{alltt} val rec glutton x = glutton
\end{alltt}}\noindent
terminates: when applied to~$n$ arguments, it discards all of them and
stops on the weak-head normal form~\ttt{fun x -> glutton}.

We can make such functions size-change terminating by adding a virtual
parameter~$\x_{ac}$ to all functions. This parameter counts the difference
between the number of abstraction and the number of applications above the
call-site: it gives the ``applicative context'' of the call. This parameter
records an
additional constructor~``$\C{App}$'' introduced by function
application and removed (``$\CC{App}$'') by function abstraction. The
previous function is
{\small\begin{alltt}
  val rec glutton = fun x -> glutton
\end{alltt}}\noindent
which contains an abstraction and no application. The corresponding arc in the
control-flow graph will thus be~$\subst{\ttt{x}_{ac} := \CC{App}\ttt{x}_{ac}}$,
where~$\ttt{x}_{ac}$ is the virtual parameter giving the applicative context
of the call. This virtual parameter makes the definition size-change
terminating (for any choice of bounds~$B$ and~$D$).

This is interesting because dummy abstractions and applications is the
usual way to freeze evaluation and define ``infinite'' data structures in
OCaml.\footnote{Refer to the implementation of the ``\ttt{Lazy}'' module.} In
this context, the size-change termination principle can be used to detect some
notion of \emph{productivity}.
For example, let the type of infinite streams of integers be the
coinductive
type~$\ttt{S} = \ttt{unit -> int*S}$ where~\ttt{unit} is the type with a
single constructor~\ttt{U[]}. The stream of all even integers can be defined with
{\small\begin{alltt}
  val rec arith n d = fun _ -> (n, arith (n+d) d)
  val even = arith Z[] S[S[Z[]]]
\end{alltt}}\noindent
The call~``\ttt{arith n r}'' constructs the stream of integers in arithmetic
progression, starting from \ttt{n} with common difference \ttt{d}.
(``\ttt{\_}'' stands for a dummy variable and~``\ttt{+}'' stands
for the addition of unary natural numbers.) The following definition then corresponds to the
\ttt{map} function on streams:
{\small\begin{alltt}
  val rec map_stream f s = fun _ ->
    match s U[] with
      (n, s) -> (f n, map_stream f s)
\end{alltt}}\noindent
Like~\ttt{glutton}, the functions~\ttt{arith} and~\ttt{map\_stream} have a
deficit of applications: the call-sites are bellow 3~abstractions but only
2~applications. The parameter~$\x_{ac}$ is thus represented
by~$\CC{App}\CC{App}\CC{App}\C{App}\,\C{App}\x_{ac} \approx \CC{App}\x_{ac}$. Those functions are size-change terminating with this
extension: their control-flow graphs consist of a single node with label
\begin{itemize}
  \item $\subst{\ttt{x}_{ac} := \CC{App}\ttt{x}_{ac} ; \ttt{n}
    := \app{\infty}() ; \ttt{d}:=\ttt{d}}$
    for \ttt{arith}
  \item $\subst{\ttt{x}_{ac} := \CC{App}\ttt{x}_{ac} ; \ttt{f}:=\ttt{f} ; \ttt{s} :=
    \app{\infty}()}$ for \ttt{map\_stream}.
\end{itemize}
It is possible to mix finite (inductive) and infinite (coinductive)
structures: here is the function that removes a given number of $0$s in a stream
of integers.
{\small\begin{alltt}
  val rec remove_zeros n s =
    match n with
        Z[] -> s
      | S[m] -> (match s U[] with
                    (Z[], s) -> remove_zeros m s
                  | (S[h], s) -> fun _ -> (S[h], remove_zeros n s))
\end{alltt}}\noindent
The function~\ttt{remove\_zeros} is size-change terminating with this
extension and works for arbitrary streams, i.e., even for those that do not
contain any $0$. Its control-flow graph contains two loops:
\begin{itemize}
  \item
      $\subst{\x_{ac} := \x_{ac};
             \ttt{n} := \CC{S}\ttt{n};
             \ttt{s} := \app{\infty}()}$

  \item
      $\subst{\x_{ac} := \CC{App}\x_{ac};
             \ttt{n} := \ttt{n};
             \ttt{s} := \app{\infty}()}$.

\end{itemize}
A more complete investigation of this phenomenon is pending...

\subsubsection{Higher-Order Arguments}

The PML language for which the size-change termination principle was
implemented allows higher-order arguments for functions. It is not possible to
just ignore higher order arguments: the definition
{\small\begin{alltt}
  val app_zero f = f Z[]
  val rec f x = app_zero f
\end{alltt}}\noindent
might be seen as terminating!

To deal with those, the simplest is to have the static analysis to tag each
instance of a recursively defined function appearing as an \emph{argument} of
another function as non terminating. This makes it possible to define all the
usual functions that have functions in their parameters, like the
real~\ttt{map} function:
{\small\begin{alltt}
  val rec map f x = match x with Nil[]  ->  Nil[]
                               | Cons[a,y]  ->  Cons[f a, map f y]
\end{alltt}}\noindent
whose control-flow graph consists of a single
loop~$\subst{\ttt{f}:=\ttt{f};\x:=\pi_2\CC{Cons}\x}$.
It is also possible to think of smarter static analysis that would see that the
definition
{\small\begin{alltt}
  val phi f = fun n -> math n with Z[] -> Z[]
                                 | S[m] -> n + f m
  val rec f x = phi f x
\end{alltt}}\noindent
is size-change terminating. The PML language uses a constraint checking
algorithm to check that the definitions are well formed, i.e., that their
semantics is well defined. This algorithm builds a kind of \emph{data-flow
graph} to compute an accessibility relation between different parts of the
code and check, for example, that tuples never reach a
``\ttt{match}''~\cite{EJC}. The static analysis is inferred from this
data-flow graph, and it detects that the function~\ttt{phi} defined previously
acts in such a way that~``\ttt{phi f \(u\)}'' may only yield a
call~``$\ttt{f}\CC{S} u$''. Because
of this the control-flow graph of the function~\ttt{f} will contain a single
loop~$\subst{\x:=\CC{S}\x}$, and will thus pass the termination test.

Unfortunately, we currently don't have a proof that this static analysis is
safe! We are currently working on this aspect and are trying to unify the
``data-flow graph'' used for checking that a definition is well-formed and the
``control-flow graph'' used for the SCT.

Finishing the proof that this analysis is safe is interesting because it
allows for a very powerful static analysis. As an example, the following piece of
code is accepted as terminating in the PML language:
{\small\begin{alltt}
  val rec map f l = (* map on lists *)
    match l with
        Nil[] -> Nil[]
      | Cons[a,l] -> Cons[f a, map f l]
  type rec rose_tree A = [  Node[A * list(rose_tree A)] ]
  val rec rmap f t = (* map on rose trees *)
    match t with
        Node[a,l] -> Node[ f a , map (rmap f) l ]
\end{alltt}}\noindent
The data-flow analysis detects that the list~$\ttt{l}$ contains trees that
are smaller than~\ttt{t} and that those elements are fed to the partially
applied~$\ttt{rmap f}$. The control flow-graph for~$\ttt{rmap}$ contains a single
loop~$[\ttt{f}:=\ttt{f}; \ttt{t}:= \pi_1 \CC{Cons}
\pi_2 \CC{Node} \ttt{t}]$. This will be enough for the termination criterion
to accept the function.


\bibliographystyle{amsplain}
\bibliography{sct}

\newpage
\appendix

\section{Other Definition of the Approximation Preorder}  
\label{sec:inductive_approximation}

We give a more concrete characterization of the approximation preorder. This
is crucial in the implementation of the termination test but is also used in
the proof of Lemma~\ref{lem:approximation_destructors}.
The proofs are rather verbose and not very surprising.

\begin{defi}\label{def:inductive_preorder}
  The relation~$\sqless$ is the relation on terms in normal forms generated by
  \[
    \Rule{u \sqless v}%
         {\C{C}u \sqless \C{C}v}%
         {(1)}
     \qquad
     \Rule{u_1 \sqless v_1 \quad\dots\quad u_n\sqless v_n}%
          {(u_1,\dots,u_n) \sqless (v_1,\dots,v_n)}%
          {(2)}
  \]
  \[
  \Rule{\nf\big(\app{0}u\big) \sqless \nf\big(\app{w}v\big)}%
  {u \sqless \nf\big(\app{w}v\big)}%
         {(3)}
  \]
  \[
    \Rule{\forall i=1,\dots,n \  \exists j=1,\dots,m \quad \app{w_i}
    \seq{d_i} \sqless \app{w'_j}\seq{b_j}}%
         {\sum_{i=1}^n  \app{w_i}\seq{d_i}\sqless \sum_{j=1}^m \app{w'_j}\seq{b_j}}%
         {(4)}
  \]
  \[
    \Rule{\hbox{$\seq d$ is a suffix of $\seq b$ and $w'+|\seq d| \leq w+|\seq b|$}}%
         {\app{w'}\seq b\sqless \app{w}\seq d}
         {(5)}
    \qquad
    \Rule{}%
         {\seq d \sqless \seq d}
         {(6)}
  \]
  \emph{where we identify~$\Zero$ and the empty sum.}

  We will usually drop the~$\nf(\BLANK)$ and reason up-to~$\approx$.
\end{defi}

\begin{lem}\label{lem:app_monotonic}
  For every~$u \sqless v$ and~$w\in\ZZ_\infty$, we have~$\app{w}u \sqless
  \app{w}v$.
\end{lem}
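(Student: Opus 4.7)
My approach is to induct on the derivation of $u \sqless v$, with the induction statement universally quantified over $w \in \ZZ_\infty$. In each case I compute the normal forms of $\app{w}u$ and $\app{w}v$ using the reduction rules of Definition~\ref{def:reduction} and reconstruct a derivation of $\nf(\app{w}u) \sqless \nf(\app{w}v)$ by an appropriate rule of Definition~\ref{def:inductive_preorder}.

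The constructor and tuple cases rely on absorption: the reductions $\app{w}\C{C}t \red \app{w+1}t$ and $\app{w}(t_1,\dots,t_n) \red \sum_i \app{w+1}t_i$ strip the head of $u$ (and of $v$) and shift the weight by one. For rule~(1), the induction hypothesis applied to the sub-derivation $u'\sqless v'$ with the shifted weight $w+1$ concludes directly. For rule~(2), applying the IH componentwise yields $\nf(\app{w+1}u_i)\sqless\nf(\app{w+1}v_i)$ for each $i$; since each of these is a sum of terms of the restricted shape $\app{w'}\seq d$, I match each summand of the left-hand side to the summand of the right-hand side coming from the same index and close the case by rule~(4).

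Rules~(3)--(6) all pivot on the identity $\app{w}\app{v}t \red \app{w+v}t$, which lets the outer $\app{w}$ merge with any inner weight after normalisation. In rule~(3), this gives $\nf(\app{w}\nf(\app{0}u)) = \nf(\app{w}u)$ and $\nf(\app{w}\nf(\app{w'}v)) = \nf(\app{w+w'}v)$, so the IH on the sub-derivation with weight $w$ is already the desired conclusion. Rule~(4) is treated similarly: the outer $\app{w}$ distributes over the sum by linearity, combines with each inner $\app{w_i}$, and leaves pointwise instances $\app{w+w_i}\seq{d_i}\sqless\app{w+w'_j}\seq{b_j}$ furnished by the IH. Rule~(5) is a direct arithmetic check: adding $w$ to both sides of $w'+|\seq d|\leq w''+|\seq b|$ preserves the inequality, with the convention $a+\infty=\infty$ handling the infinite sub-cases (in particular, if $w'=\infty$ then the hypothesis forces $w''=\infty$, hence $w+w''=\infty$). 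Rule~(6) is a degenerate instance of rule~(5) with identical sides.

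The main obstacle worth naming is conceptual rather than technical: after applying $\app{w}$, a tuple becomes a sum, so the tuple case must be discharged by rule~(4) rather than by a structural analogue of rule~(2). This forces the observation that $\nf(\app{w}t)$ for any normal form~$t$ is always a sum of simple terms of the restricted form $\app{w'}\seq d$, since $\app{w}$ absorbs every constructor and tuple to its right; once this is internalised, all cases reduce to routine verification.
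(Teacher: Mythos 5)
Your proof is correct and follows essentially the same route as the paper's: induction on the derivation of $u\sqless v$, using the absorption reductions $\app{w}\C{C}t\red\app{w+1}t$, $\app{w}(t_1,\dots,t_n)\red\sum_i\app{w+1}t_i$ and $\app{w}\app{v}t\red\app{w+v}t$ to normalise both sides, then closing each case with the corresponding rule of Definition~\ref{def:inductive_preorder} (in particular rule~$(4)$ for the tuple case, exactly as the paper does implicitly). No gaps worth noting.
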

\begin{proof}
  By induction on the proof that~$u\sqless v$:
  \begin{itemize}

    \item if the last rule used for~$u\sqless v$ was~$(1)$, then~$u=\C{C}u$
      and~$v=\C{C}v$ and we have~$u\sqless v$. By induction, we have~$\app{w}u
      \sqless \app{w}v$ for all~$w\in\ZZ_\infty$. From this, we conclude
      that~$\app{w}\C{C}u \approx \app{w+1} u \sqless \app{w+1} v \approx
      \app{w} \C{C} v$.

    \item If the last rule was~$(2)$, then~$u=(u_1,\dots,u_n)$
      and~$v=(v_1,\dots,v_n)$ and~$u_i\sqless v_i$ for all~$i=1,\dots,n$. By
      induction, we get~$\app{w}u_i \sqless \app{w}v_i$ for all~$i=1,\dots,n$
      and~$w\in\ZZ_\infty$. We thus have~$\app{w}(u_1,\dots,u_n) \approx
      \sum_i \app{w+1} u_i \sqless \sum_i \app{w+1}v_i \approx
      \app{w}(v_1,\dots,v_n)$.

    \item If the last rule was~$(3)$, then~$v\approx\app{w'}v$ and~$\app{0}u
      \sqless \app{w'}v$. By induction, we get~$\app{w} u \approx
      \app{w}\app{0} u \sqless \app{w}\app{w'} v$ for all~$w\in\ZZ_\infty$.

    \item If the last rule was~$(4)$, then~$u = \sum_{i}
      \app{w_i}\seq{d_i}$ and~$v=\sum_j\app{w'_j}\seq{b_j}$ and
      forall~$i$, there is a~$j$ s.t. $\app{w_i} \seq{d_i} \sqless
      \app{w'_j} \seq{b_j}$. By induction, for all~$i$, there is
      a~$j$ s.t.~$\app{w}\app{w_i} \seq{d_i} \sqless \app{w}\app{w'_j}
      \seq{b_j}$. This implies that~$\sum_{i}
      \app{w}\app{w_i}\seq{d_i} \sqless
      \sum_j\app{w}\app{w'_j}\seq{b_j}$ and because the left side is
      equal to~$\app{w}u$ and the right side is equal to~$\app{w}v$, we
      get~$\app{w}u \sqless \app{w} v$.

    \item If the last rule was~$(5)$, then~$u=\app{w''}\seq b$
      and~$v=\app{w'}\seq d$ with~$\seq d$ a suffix of~$\seq b$ and~$w''+
      |\seq b| \leq w'+|\seq d|$. We have~$\app{w}\app{w''}\seq b
      \approx\app{w+w''}\seq b\sqless\app{w+w'}\seq d\approx
      \app{w}\app{w'}\seq d$ because~$\seq d$ is a suffix of~$\seq b$
      and~$w+w''+
      |\seq b| \leq w+w'+|\seq d|$.

    \item If the last rule was~$(6)$, then~$u=v=\seq d$. We have~$<w>\seq
      d \sqless \app{w}\seq d$ because~$\seq d$ is a suffix of~$\seq d$
      and~$w+|\seq d|\leq w+|\seq d|$.\qedhere

  \end{itemize}
\end{proof}

\begin{lem}\label{lem:sqless_transitive}
  The relation~$\sqless$ is transitive.
\end{lem}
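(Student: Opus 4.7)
The plan is to prove transitivity by induction on the lexicographic pair (depth of the derivation of~$v\sqless w$, depth of the derivation of~$u\sqless v$), with a case analysis on the last rule used in each.

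The structural cases are routine. If $v\sqless w$ ends in rule~$(1)$, so $v=\C Cv'$ and $w=\C Cw'$ with $v'\sqless w'$, then the shape of~$v$ forces the derivation of $u\sqless v$ to also end in rule~$(1)$ (the remaining rules require $v$ to have a different outermost form), giving $u=\C Cu'$ with $u'\sqless v'$; the induction hypothesis on the strictly smaller derivations $u'\sqless v'$ and $v'\sqless w'$ yields $u'\sqless w'$, and rule~$(1)$ concludes. Rule~$(2)$ is symmetric. Rules~$(5)$ and~$(6)$ reduce to transitivity of the ``is a suffix of'' relation on destructor sequences and of the arithmetic inequality $w'+|\seq b|\leq w+|\seq d|$, both immediate.

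The genuinely delicate case is rule~$(3)$ on the right, where $w\approx\app{w'}w''$ and the subderivation gives $\app 0 v\sqless\app{w'}w''$. Here I would invoke Lemma~\ref{lem:app_monotonic} to lift $u\sqless v$ to $\app 0 u\sqless\app 0 v$, apply the induction hypothesis with the strictly smaller right subderivation $\app 0 v\sqless\app{w'}w''$ to get $\app 0 u\sqless\app{w'}w''$, and conclude by rule~$(3)$. Rule~$(3)$ on the \emph{left}, when the right derivation cannot itself be peeled, is handled symmetrically: the subderivation $\app 0 u\sqless v$ has strictly smaller left depth, and the lexicographic order lets the IH apply with the right derivation unchanged. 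For rule~$(4)$ on both sides, where $v=\sum\app{w_i}\seq{d_i}$ and $w=\sum\app{w'_j}\seq{b_j}$, transitivity is established summand by summand: each summand of~$u$ is $\sqless$ some summand of~$v$, which is in turn $\sqless$ some summand of~$w$, and the IH applied to each such pair of subderivations supplies the witnesses needed to re-apply rule~$(4)$.

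The main obstacle is ensuring the induction is well-founded. A naive induction on the sum of derivation depths would fail in the rule~$(3)$ case, since Lemma~\ref{lem:app_monotonic} may deepen the left derivation. The lexicographic order sidesteps this because rule~$(3)$ on the right strictly decreases the primary component regardless of what happens to the secondary one, while rule~$(3)$ on the left keeps the primary component fixed and strictly decreases the secondary. Since Lemma~\ref{lem:app_monotonic} was proved above by a direct structural induction that does not invoke transitivity, no circularity arises, and the remaining work is routine bookkeeping against the grammar of normal forms.
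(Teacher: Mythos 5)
Your proof takes essentially the same route as the paper's: a double induction on the two derivations with a case analysis on the last rules used, the key step being exactly the paper's use of Lemma~\ref{lem:app_monotonic} to push the left derivation under~$\app{0}$ when rule~$(3)$ closes the right derivation, and a summand-by-summand argument for rules~$(4)$/$(5)$. The only substantive difference is that you make the well-founded order explicit (lexicographic, with the right derivation as the primary component), which the paper leaves implicit but indeed needs for precisely the reason you identify; note only that the side condition of rule~$(5)$ reads $w'+|\seq d|\leq w+|\seq b|$ rather than the swapped form in your parenthetical, which does not affect the transitivity argument.
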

\begin{proof}
  We prove that~$u_1 \sqless u_2$ and~$u_2 \sqless u_3$ implies~$u_1 \sqless
  u_3$ (where each~$u_i$ is in normal form) by induction on the proofs of~$u_2
  \sqless u_3$ and~$u_1 \sqless u_2$. We look at the last rule of~$u_2\sqless
  u_3$:
  \begin{itemize}

    \item If the last rule was~$(1)$, then~$u_2$ is of the form~$\C{C}v_2$,
      and the last rule of~$u_1\sqless u_2$ is necessarily~$(1)$. We thus
      have~$v_1\sqless v_2 \sqless v_3$, which implies by induction,
      that~$v_1\sqless v_3$. This implies that~$\C{C}v_1 \sqless \C{C}v_3$.

    \item If the last rule of~$u_2\sqless u_3$ is~$(2)$, the proof is similar.

    \item If the last rule in~$u_2 \sqless u_3$ was~$(3)$, we have~$u_3 =
      \app{w}v_3$ and~$\app{0}u_2 \sqless \app{w}v_3$. By
      Lemma~\ref{lem:app_monotonic}, we have~$\app{0}u_1\sqless
      \app{0}u_2$, and we know by induction that~$\app{0}u_1 \sqless
      \app{w}v_3$. We get~$u_1 \sqless \app{w}v_3$ by rule~$(3)$.

    \item If the last rule in~$u_2 \sqless u_3$ was~$(4)$, then~$u_2$ is of
      the form~$\sum_j \app{w_{2,j}}\seq{d_{2,j}}$ and~$u_3$ is of
      the form~$\sum_k \app{w_{3,k}}\seq{d_{3,k}}$ and we have~$\forall j,
      \exists k, \app{w_{2,j}}\seq{d_{2,j}}\sqless
      \app{w_{3,k}}\seq{d_{3,k}}$. We look at the last rule of
      the proof that~$u_1 \sqless u_2$:

      \begin{itemize}

        \item if $u_1 \sqless u_2$ ended with~$(3)$, we have~$\app{0} u_1
          \sqless \app{0}u_2\approx u_2 \sqless u_3$. By induction hypothesis,
          $\app{0} \sqless u_3 \approx \app{0}u_3$. We conclude that~$u_1
          \sqless u_3$ by rule~$(3)$.

        \item if~$u_1 \sqless u_2$ ended with~$(4)$, then~$u_1$ is of the
          form~$\sum_i \app{w_{1,i}}\seq{d_{1,i}}$ and for all~$i$, there is
          a~$j$
          s.t.~$\app{w_{1,i}}\seq{d_{1,i}} \sqless
          \app{w_{2,j}}\seq{d_{2,j}}$. By the previous remark about~$u_2
          \sqless u_3$, we thus
          have~$\forall i, \exists k \app{w_{1,i}}\seq{d_{1,i}} \sqless
          \app{w_{3,k}}\seq{d_{3,k}}$. We conclude that~$u_1 \sqless
          u_3$ by rule~$(4)$.

        \item if~$u_1\sqless u_2$ ended with~$(5)$, then~$u_2$ is a sum with a
          single summand and we have~$u_1 \sqless u_2 \sqless \app{w_{3,k}}
          \seq{d_{3,k}}$ for some~$k$. We have~$u_1 \sqless
          \app{w_{3,k}} \seq{d_{3,k}}$ by induction and we get~$u_1
          \sqless \sum_{k} \app{w_{3,k}} \seq{d_{3,k}}$ by
          rule~$(4)$.

      \end{itemize}

    \item If the last rule in~$u_2\sqless u_3$ was~$(5)$, then~$u_3 = \app{w_3}
      \seq{d_3}$ and~$u_2 = \app{w_2}\seq {d_2}$. We look at the last
      rule of the proof that~$u_1\sqless u_2$:

      \begin{itemize}

        \item the proof that~$u_1 \sqless u_2$ ended with~$(3)$. We
          have~$\app{0}u_1 \sqless \app{w_2}\seq{d_2} \sqless
          \app{w_3}\seq{d_3}$, and thus, by induction, that~$\app{0}u_1
          \sqless \app{w_3}\seq{d_3}$. We can use rule~$(3)$ to deduce
          that~$u_1 \sqless \app{w_3}\seq{d_3}$.

        \item the proof that~$u_1\sqless u_2$ ended with rule~$(4)$, with~$u_1 = \sum_i
          \app{w_{1,i}}\seq{d_{1,i}}$. We thus
          have~$\app{w_{1,i}}\seq{d_{1,i}}\sqless \app{w_2} \seq{d_2}$
          for all~$i$s. By induction, we get~$u_{1,i}\sqless
          \app{w_3}\seq{d_3}$ for all~$i$s, and we can conclude that~$\sum_i
          u_{1,i} \sqless \app{w_1}\seq{d_3}$.

        \item the proof that~$u_1\sqless u_2$ ended with rule~$(5)$. We
          have~$u_1 = \app{w_1} \seq{d_1}$ and we get~$\app{w_1}\seq{d_1}
          \sqless \app{w_3}\seq{d_3}$ by rule~$(5)$.

      \end{itemize}

    \item If the last rule of~$u_2\sqless u_3$ is~$(6)$, then the last rule
      of~$u_1\sqless u_2$ is necessarily~$(6)$. We have~$u_1 = u_2 = u_3 =
      \seq d$. Transitivity holds by rule~$(6)$.\qedhere

  \end{itemize}
\end{proof}

\begin{lem}\label{lem:sqless_contextual}
  For every~$u$ in normal form, we have:
  \begin{enumerate}
    \item for every sequence of destructors~$d_1\cdots d_k$, we have~$d_1\cdots
      d_k u \sqless
      \app{-n} u$,
    \item for every sequence of destructors~$d_1\cdots d_k$, if~$u \sqless v$,
      then $d_1\cdots d_k u \sqless d_1\cdots d_k v$,
    \item for every~$t_1\sqless t_2$,~$u[\x:=t_1] \sqless u[\x:=t_2]$.
  \end{enumerate}
\end{lem}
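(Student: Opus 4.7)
The plan is to prove the three parts in order (reading the undefined $\app{-n}$ in part~(1) as $\app{-k}$), each building on the previous. For part~(1) I would induct on the structure of the simple term $u$ according to the grammar of Lemma~\ref{lem:normal_form}. When $u=\seq d$, rule~(3) reduces the goal to $\app{0}d_1\cdots d_k\seq d\sqless\app{-k}\seq d$, which rule~(5) discharges using that $\seq d$ is a suffix of $d_1\cdots d_k\seq d$ and the weight check $0+|\seq d|\leq -k+(k+|\seq d|)$ holds with equality. When $u=\app{w}\seq d$ the destructors absorb into the weight and both sides share the normal form $\app{w-k}\seq d$. When $u=\C{C}v$, either $d_k$ matches and reduces $d_k\C{C}v$ to $v$, so that the inductive hypothesis gives $d_1\cdots d_{k-1}v\sqless\app{-(k-1)}v\approx\app{-k}\C{C}v$; or $d_k$ mismatches and the left-hand side normalizes to $\Zero$, which is $\sqless$ anything via rule~(4) with empty sum on the left. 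The tuple case (including the special case $u=()$) is symmetric, and sums propagate by the linearity equations of reduction.

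For part~(2) I would induct on the derivation of $u\sqless v$. Rule~(6) is immediate. Rules~(1) and~(2) reduce to the inductive hypothesis applied with one fewer destructor, splitting on whether $d_k$ matches the head constructor/tuple or mismatches and collapses both sides to $\Zero$. The pivotal case is rule~(3): from $v=\app{w}v'$ and $\app{0}u\sqless\app{w}v'$, the inductive hypothesis yields $d_1\cdots d_k\app{0}u\sqless d_1\cdots d_k\app{w}v'$, which normalizes to $\app{-k}u\sqless\app{w-k}v'$; combining this with part~(1)'s $d_1\cdots d_k u\sqless\app{-k}u$ via Lemma~\ref{lem:sqless_transitive} closes the case. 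Rules~(4) and~(5) are handled by distributing destructors over sums through the linearity equations and applying the inductive hypothesis summand-by-summand. For part~(3) I would induct structurally on $u$: the variable cases are either the hypothesis itself or reflexivity via rule~(6); the constructor, tuple, and sum cases are immediate applications of rules~(1), (2), and~(4) to the inductive hypotheses; and the cases $u=\seq d\,\x$ and $u=\app{w}\seq d\,\x$ require normalizing $\seq d\,t_i$ and invoking part~(2) to propagate the order through the destructors, followed by Lemma~\ref{lem:app_monotonic} for the weighted case.

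The main obstacle will be rule~(3) of Definition~\ref{def:inductive_preorder} in part~(2): prepending $\app{0}$ on the left changes the shape of the term and the inductive hypothesis delivers an estimate on $\app{0}u$ rather than on $u$ itself, so part~(1) is essential for bridging the gap and the three parts are genuinely intertwined. A secondary difficulty across all three parts is the bookkeeping for normal forms when destructors meet incompatible constructors (branches collapsing to $\Zero$) and when the substitutions in part~(3) create fresh redexes that must be normalized before $\sqless$ can be applied.
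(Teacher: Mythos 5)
Your proposal is correct and follows essentially the same route as the paper's proof: part~(1) by peeling destructors one at a time with a case split on the shape of $u$ (using rules~(3) and~(5) for the branch cases and the empty sum for mismatches), part~(2) by induction on the derivation of $u\sqless v$ with the rule-(3) case closed by combining part~(1) with Lemma~\ref{lem:sqless_transitive}, and part~(3) by structural induction on $u$ using part~(2) and Lemma~\ref{lem:app_monotonic} for the $\app{w}\seq d\,\x$ case. You also correctly read the statement's $\app{-n}$ as $\app{-k}$ and identified the same pivotal difficulty the paper's proof navigates.
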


\begin{proof}
  The first point is a simple induction on~$k$:
  \begin{itemize}

    \item if~$k=0]$, the result amounts to~$u\sqless
      \app{0}u$. It follows from rule~$(3)$, $(4)$ and~$(5)$.

    \item if~$\seq d = d_1\cdots d_k\cdotp d_{k+1}$:
      suppose that~$d_{k+1} = \CC{C}$. We look at~$u$:

      \begin{itemize}

        \item if~$u = \C{C}v$, we have~$d_1\cdots d_k \cdotp \CC{C} \C{C}v
          \approx d_1\cdots d_k v \sqless \app{-k} v \approx \app{-k-1} \C{C}
          v$, where the ``inequality'' comes from the induction hypothesis.

        \item if~$u = \C{D} \neq \C{C}$ or~$u=(v_1,\dots,v_n)$, we
          have~$d_1\cdots d_{k+1} u \approx \Zero \sqless \app{-k-1} u$ by rule~$(4)$
          of the definition of~$\sqless$.

        \item if~$u = \seq b\x$ we need to show that~$d_1\cdots d_{k+1}\cdotp \seq b \x
          \sqless \app{-k-1}\seq b \x$. By rule~$(3)$ of the definition
          of~$\sqless$, it is enough to show that~$\app{0}d_1\cdots d_{k+1}\cdotp\seq b
          \x\sqless \app{-k-1}\seq b\x$. This holds by rule~$(5)$.

        \item if~$u=\app{w}\seq b\x$ we need to show~$d_1\cdots d_{k+1}
          \app{w}\seq b \x \approx \app{w-k-1}\seq b \x$ is approximated by~$
          \app{-k-1}\app{w}\seq b\x \approx \app{w-k-1}\seq b \x$. This holds by rule~$(5)$.

      \end{itemize}
      The proof is similar when~$d_{k+1} = \pi_i$.

  \end{itemize}

  \medbreak
  The second point is an induction on the proof that~$t_1 \sqless t_2$.
  If~$k=0$, the result holds trivially. Otherwise, let's assume that the
  sequence of destructors is of the form~$d_1\cdots d_k \CC{C}$.
  \begin{itemize}

    \item If the proof that~$t_1\sqless t_2$ ended with rule~$(1)$, with the
      same constructor~$\C{C}$, we have~$u = \C{C}u'$ and~$v = \C{C}v'$
      with~$u'\sqless v'$. By induction, we can thus conclude that~$d_1\cdots
      d_k\CC{C} u \approx d_1\cdots d_k u' \sqless d_1\cdots d_k v' \approx
      d_1\cdots d_k\CC{C} v$.

    \item If the proof that~$t_1\sqless t_2$ ended with rule~$(1)$ but with a
      different constructor, or with rule~$(2)$, both~$d_1\cdots d_k\CC{C} u$
      and~$d_1\cdots d_k\CC{C} v$ reduce to~$\Zero$, and we conclude with
      rule~$(4)$.

    \item If the proof that~$u\sqless v$ ended with~$(3)$, we know
      that~$v\approx \app{w} v'$ and~$\app{0}u \sqless \app{w}v'$. We
      have~$d_1\cdots d_k\CC{C} \app{0} u \approx \app{-k-1} u \sqless
      d_1\cdots d_k\CC{C} \app{w} v' \approx \app{w-k-1} v'$ by induction. By
      the previous point, we also have~$d_1\cdots d_k\CC{C} u \sqless
      \app{-k-1} u$, and by transitivity, we conclude that~$d_1\cdots
      d_k\CC{C} u \sqless d_1\cdots d_k\CC{C} \app{w} v'$.

    \item If the proof that~$u\sqless v$ ended with~$(4)$, we just need to
      apply the induction hypothesis and rule~$(4)$.

    \item If the proof that~$u\sqless v$ ended with~$(5)$ or~$(6)$, we can conclude
      directly.

  \end{itemize}
  The proof is similar when the sequence of destructors ends with~$\pi_i$.

  \medbreak
  The third point is a simple inductive proof on~$u$:
  \begin{itemize}
    \item if~$u$ is~$\C{C}v$ or~$(v_1,\dots,v_n)$, we just need the induction
      hypothesis and rules~$(1)$ or~$(2)$.
    \item If~$u$ is~$\app{w}\seq d\y$ with~$\y\neq\x$, we just need
      rule~$(5)$. If~$\y=\x$, we need the induction hypothesis and
      Lemma~\ref{lem:app_monotonic}.
    \item If~$u$ is~$\seq d \y$ with~$\y\neq\x$, we just need rule~$(6)$.
      If~$\y=\x$, we use the previous point.\qedhere
  \end{itemize}
\end{proof}
The relation~$\sqless$ isn't quite the same as~$\less$ because it doesn't
interact with~$+$. For example, we don't have~$\x \sqless \x+\y$ or~$\C{C}\x +
\C{D}\x \sqless \app{1}\x$. However, we have:

\begin{lem}\label{lem:inductive_approximation}
  We have~$u\less v$ if and only if~$u$ can be written as~$\sum_i u_i$ and~$v$
  can be written as~$\sum_j v_j$ with~$\forall i,\exists j, u_i\sqless v_j$.
\end{lem}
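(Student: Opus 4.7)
The plan is to split the equivalence into its two implications, handled by different techniques.

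For the ``if'' direction, the first step is to show that $\sqless$ refines $\less$: if $u \sqless v$ then $u \less v$. This is done by induction on the derivation of $\sqless$. Rules~$(1)$ and~$(2)$ use contextuality of $\less$; rule~$(3)$ combines the axiom $u \less \app{0} u$ with transitivity; rule~$(4)$ uses the least-upper-bound property of~$+$ from Lemma~\ref{lem:app_preorder}; rule~$(5)$ follows from Lemma~\ref{lem:approximation_destructors}; and rule~$(6)$ is just reflexivity of $\less$. Granted that, suppose $u = \sum_i u_i$ and $v = \sum_j v_j$ with $u_i \sqless v_{\phi(i)}$ for some choice function $\phi$. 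Then $u_i \less v_{\phi(i)} \less v$ by compatibility of $\less$ with $+$, so each $u_i \less v$, and finally $\sum_i u_i \less v$ by the least-upper-bound property again.

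For the ``only if'' direction, the natural strategy is to define a candidate relation $R$ by declaring $u \mathrel{R} v$ iff the sum-of-simple-terms normal forms of $u$ and $v$ satisfy the condition of the statement, and then to check that $R$ satisfies every closure condition of Definition~\ref{def:preorder}; since $\less$ is the \emph{least} such preorder, this gives $\less \subseteq R$. Reflexivity of $R$ reduces to an easy induction showing that $\sqless$ is reflexive on simple terms (using rules~$(1), (2), (5), (6)$); transitivity is handed to us by Lemma~\ref{lem:sqless_transitive}. Compatibility with $\approx$ is automatic since $R$ is defined via normal forms, and the axioms $\Zero \mathrel{R} t$ and $t \mathrel{R} t+u$ are trivial (empty sum on the left, extra summands on the right). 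Monotonicity of $R$ in $w$ for $\app{w}$ follows from rule~$(5)$ applied summand-wise, after observing that $\nf(\app{v}t)$ and $\nf(\app{w}t)$ have the same shape up to a uniform shift of weights. The axiom $t \mathrel{R} \app{0}t$ is a direct consequence of rule~$(3)$ combined with reflexivity.

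The main obstacle is proving that $R$ is contextual: if $u_1 \mathrel{R} u_2$, then $t[\x := u_1] \mathrel{R} t[\x := u_2]$ for every term $t$. Writing $u_1 = \sum_i a_i$ and $u_2 = \sum_j b_j$ with $a_i \sqless b_{\phi(i)}$, the idea is to reduce $t$ to its sum-of-simple-terms normal form $\sum_k t_k$, then use the $n$-linearity of every term former to decompose each $t_k[\x := u_1]$ as a sum indexed by functions $f$ assigning one $a_i$ to each occurrence of $\x$ in $t_k$; and analogously for $t_k[\x := u_2]$. Every such $f$ has a companion $\phi \circ f$ on the $u_2$-side, and the two corresponding summands become $\sqless$-related after normalization by iterating Lemma~\ref{lem:sqless_contextual}.3 one occurrence at a time (treating the occurrences as distinct fresh variables). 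The delicate bookkeeping here---ensuring that intermediate partially-substituted terms can be renormalized in a way that lets Lemma~\ref{lem:sqless_contextual} apply at the next step, and checking that the resulting summand-wise $\sqless$ survives the joint normalization---is essentially a multi-variable reformulation of the contextuality proof already carried out for $\sqless$ in that lemma.
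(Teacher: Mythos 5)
Your proof follows essentially the same route as the paper: for the ``only if'' direction you verify that the sum-of-$\sqless$ relation satisfies every closure condition of Definition~\ref{def:preorder} and invoke minimality of~$\less$ (leaning on Lemmas~\ref{lem:sqless_transitive} and~\ref{lem:sqless_contextual} exactly as the paper does), and for the ``if'' direction you check that each rule of~$\sqless$ is sound for~$\less$ and conclude with the least-upper-bound property of~$+$, which is precisely the argument the paper leaves as an exercise. The one blemish is the appeal to Lemma~\ref{lem:approximation_destructors} to justify rule~$(5)$: in the paper that lemma is itself presented as a corollary of the present characterization, so to avoid circularity you should derive the needed direction directly from $t\less\app{0}t$, contextuality and the axiom $\app{v}t\less\app{w}t$ for $v\leq w$ --- an easy, self-contained replacement.
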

\begin{proof}

  To show that if~$u\less v$ then~$u$ and~$v$ can be written as sums as in the
  lemma, we define a new relation~$u\sqless' v$ as ``\emph{$u$ can be written
  as~$\sum_i u_i$, $v$ can be written as~$\sum_j v_j$ and~$\forall i, \exists
  j, u_i \sqless v_j$}''. We need to prove that (refer to
  Definition~\ref{def:preorder} page~\pageref{def:preorder}):
  \begin{itemize}
    \item $\sqless'$ is a preorder,
    \item $\sqless'$ is contextual,
    \item $\sqless'$ is compatible with~$\approx$,
    \item $\sqless'$ is compatible with~$+$,
    \item if $w \leq w'$ in~$\ZZ_\infty$, then~$\app{w} t \sqless' \app{w'} t$,
    \item $t \sqless' \app{0}t$.
  \end{itemize}
  Since~$\less$ is the least such relation, we will get that~$u\less v$
  implies~$u\sqless' v$. We only
  sketch the proofs:
  \begin{itemize}
    \item $\sqless'$ is transitive because~$\sqless$ is transitive
      (Lemma~\ref{lem:sqless_transitive}).
    \item $\sqless'$ is reflexive because~$\sqless$ is reflexive on simple
      terms (easy inductive proof).
    \item $\sqless'$ is contextual because~$\sqless$ is contextual
      (Lemma~\ref{lem:sqless_contextual}).
    \item $\sqless'$ is compatible with~$\approx$ because by
      definition,~$\sqless$ is compatible with~$\approx$.
    \item $\sqless'$ is compatible with~$+$ by definition.
    \item That~$\app{w} t \sqless \app{w'} t$ when~$w\leq w'$ in~$\ZZ_\infty$
      is an easy inductive proof. It lifts to~$\sqless'$.
    \item We have~$t\sqless\app{0} t$ by Lemma~\ref{lem:app_monotonic}, and
      this property lifts to~$\sqless'$.
  \end{itemize}

  \medbreak
  The proof that~$\forall i,\exists j, u_i\sqless v_j$ implies that~$\sum_i
  u_i \less \sum_j v_j$ is left as an exercise. It amounts to showing that all
  the rule for~$\sqless$ are valid for~$\less$ and that~$\forall i,\exists j,
  u_i \less v_j$ implies~$\sum_i u_i \less \sum_j v_j$.
\end{proof}
Note that some care is needed to use this lemma to decide approximation on
arbitrary terms. Since~$+$ is associative, commutative and idempotent, there is a
choice to make when writing~$v$ as a sum. For example, we
have~$\C{A}(\x,\y)+\C{B}(\x,\ttt{z}) \less v=\app{2}\x + \app{2}\y +
\app{2}\ttt{z} + \app{1}()$ because we can write~$v$ as~``$\big(\app{2}\x +
\app{2}\y\big) + \big(\app{2}\x + \app{2}\ttt{z}\big) + \dots$'', and we have:
\begin{itemize}
  \item $\C{A}(\x,\y) \sqless \app{2}\x + \app{2}\y$,
  \item $\C{B}(\x,\y) \sqless \app{2}\x + \app{2}\ttt{z}$.
\end{itemize}


\section{Implementation Issues}
\label{app:implementation}

In order to make the presentation readable, the paper followed a rather
abstract description of the criterion. The initial goal was to get a concrete
termination checker for the PML language~\cite{EJC} and ease of implementation
was very important. The code for the criterion can be found at
\url{http://lama.univ-savoie.fr/~hyvernat/Files/basic-SCT.tar.gz}: it consists
of the implementation done for the PML language with a very simple static
analysis for a very simple language. (There are no dependencies for this.) The
full code of PML can be found at \url{http://lama.univ-savoie.fr/~pml/}.

\bigbreak
The main points that make the task relatively straightforward are the
following:
\begin{enumerate}
  \item we only manipulate terms in normal forms and use a representation
    similar to the grammar given in Lemma~\ref{lem:normal_form},
  \item computing if~$t\less u$ and if~$t \coh u$ is easy for those
    terms,
  \item checking if a loop is decreasing (Definition~\ref{def:decreasing}) is
    easy.
\end{enumerate}
Even for terms in normal forms, we need a uniform way to deal with sums.
As $n$-tuples are~$n$-linear, applying linearity to get sums of
simple terms can lead to an exponential blow-up and was ruled out.
We instead start by making sure the initial control-flow
graph doesn't contain any sum. In order to do that, we replace each arc
labeled by a sum with as many arcs as summands. No exponential blow-up occurs
in practice because PML's static analysis doesn't introduce sums. Then,
sums only appear through collapsing of compositions, i.e. from the reduction
rule~$\app{w}(t_1,\dots,t_n) \red \sum_i \app{w+1}t_i$. Those sums
can always be pushed under all constructors and all summands start
with a~$\app{w}$. We thus use the following grammar for terms:
\begin{myequation}\label{gram:implementation}
  t
  &::=&
  \C{C}t              \mskip 15mu|\mskip 15mu
  (t_1,\dots,t_n)     \mskip 15mu|\mskip 15mu
  \seq d              \mskip 15mu|\mskip 15mu
  \textstyle\sum_i \app{w_i} \seq{d_i}\\
  \seq{d}
  &::=&
  \x               \mskip 15mu|\mskip 15mu
  \pi_i \seq{d}    \mskip 15mu|\mskip 15mu
  \CC{C} \seq{d}
\end{myequation}
where the sums are not empty.
Note that~$()$ isn't part of the grammar. It was only used as a presentational
artifact and can be removed from the implementation. Its only concrete use was
to represent an argument whose shape in unknown:~$\app{\infty}()$. In the
implementation, we use~$\sum_{1\leq i\leq a}\app{\infty}\x_i$ instead,
where~$a$ is the arity of the
calling function.


\medbreak
All the substitutions are in normal form and composition needs to
do some reduction. This is done using the rules from
Definition~\ref{def:reduction}, with a particular proviso for group~$(3)$:
\begin{itemize}

  \item rules~$\pi_i \C C t \red \Zero$,~$\CC C (t_1,\dots,t_n) \red \Zero$
    and~$\pi_i(t_1,\dots,t_n) \red \Zero$ (when~$i>n$) all raise an
    \emph{error}~\ttt{TypingError}. Encountering such a reduction means that
    the definitions where not valid to begin with and that the initial
    type-checking / constraint solving of the definitions is broken.

  \item the rule~$\CC C \C D t \red \Zero$ raises an
    \emph{exception}~\ttt{ImpossibleCase}. Even safe definitions may introduce
    such reductions, but we know that evaluation will never go along such a path:
    evaluation of~\ttt{match v with ...} may only enter a branch if the
    corresponding pattern matches~\ttt{v}. Compositions raising this exception
    are simply ignored.

\end{itemize}

\subsection{Order, Compatibility and Decreasing Arguments}

When the terms are generated by the above grammar, we can give an inductive
definition of both the order and the compatibility relation. The inductive
definition of the order corresponds in fact to
Definition~\ref{def:inductive_preorder}:~$\sqless$ is exactly the restriction
of~$\less$ on the terms used in the implementation.

\medbreak
Checking compatibility for arbitrary terms can be subtle. For example, we have
\[
  \big(\app{0} s, (u,v)\big) + \big(\app{0} t, (u,v)\big)
  \quad\coh\quad
  \big((s,t),\app{0} u\big) + \big((s,t),\app{0} v\big)
\]
even though no summand on the left is compatible with a summand on the
right.
However, for the restriction used in the implementation, we can give a purely
inductive definition of compatibility:
\begin{lem}
  Compatibility on terms given by the grammar on
  page~\pageref{gram:implementation} is generated by the following rules:
  \[
    \Rule{u \coh v}%
         {\C{C}u \coh \C{C}v}%
         {(1)}
     \qquad
     \Rule{u_1 \coh v_1 \quad\dots\quad u_n\coh v_n}%
          {(u_1,\dots,u_n) \coh (v_1,\dots,v_n)}%
          {(2)}
  \]
  \[
    \Rule{u \coh \sum_{j=1}^m  \app{w_j}\seq{d_j}}
         {\C{C}u \coh \sum_{j=1}^m  \app{w_j}\seq{d_j}}
         {(3)}
    \qquad
    \hbox{and symmetric}
  \]
  \[
    \Rule{\forall i=1,\dots,n\quad u_i \coh \sum_{j=1}^m \app{w_j}\seq{d_j}}
    {(u_1,\dots,u_n) \coh \sum_{j=1}^m  \app{w_j}\seq{d_j}}
         {(4)}
    \qquad
    \hbox{and symmetric}
  \]
  \[
    \Rule{\exists i=1,\dots,n \  \exists j=1,\dots,m \quad \app{w_i}
    \seq{d_i} \coh \app{w'_j}\seq{b_j}}%
         {\sum_{i=1}^n  \app{w_i}\seq{d_i}\coh \sum_{j=1}^m \app{w'_j}\seq{b_j}}%
         {(4)}
  \]
  \[
    \Rule{\hbox{$\seq d$ is a suffix of $\seq b$ or $\seq b$ is a suffix
    of $\seq d$}}%
         {\app{w'}\seq b\coh \app{w}\seq d}
         {(5)}
    \qquad
    \Rule{}%
         {\seq d \coh \seq d}
         {(6)}
  \]
\end{lem}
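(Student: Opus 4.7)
The plan is to prove the two inclusions $\coh' \subseteq \coh$ and $\coh \subseteq \coh'$ separately, where $\coh'$ denotes the relation inductively generated by rules (1)--(6) on grammar terms. For soundness, I would induct on the derivation of $u \coh' v$ and exhibit a concrete common refinement $t \not\approx \Zero$ with $t \less u$ and $t \less v$. Rules (1), (2), and (6) admit obvious witnesses (respectively $\C{C}t'$, $(t_1,\dots,t_n)$ built from the inductive witnesses, and $\seq d$ itself). Rule (5) is witnessed by taking the longer destructor sequence with a weight chosen to satisfy both approximation constraints simultaneously. Rules (3) and (4) lift the witness from the premise by prefixing the head constructor (or tuple), invoking contextuality of $\less$ and the approximation rule $t \less \app{0}t$.

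For completeness, I would induct on the total size of $u$ and $v$ and case-analyze their top-level structure in the grammar. When $u$ and $v$ share a common head (same constructor, or tuples of the same arity), Lemma~\ref{lem:orderHeadConstructors} forces any non-zero common refinement to be similarly headed, so the inductive hypothesis applied to the immediate subterms yields rule (1) or (2). When one side has a head constructor or tuple and the other is a pure sum $\sum_j \app{w_j}\seq{d_j}$, Lemma~\ref{lem:orderHeadConstructors} again extracts the subterm under the head from the witness, and a short manipulation through the reduction $\app{0}\C{C}t \approx \app{1}t$ produces a common refinement of that subterm with the sum, triggering rule (3) or (4) via the inductive hypothesis. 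When both sides are pure sums, Lemma~\ref{lem:inductive_approximation} decomposes the witness $t$ into summands each $\sqless$ some summand of $u$ and some summand of $v$; picking any non-zero summand provides a witness for compatibility of a pair of individual summands, handled by rule (5) or (6), and rule (4) assembles the result.

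The main obstacle is the mixed case in completeness (constructor- or tuple-head against a pure sum). One must carefully propagate the head from the witness $t = \C{C}t'$ down to a witness of $u' \coh \sum_j \app{w_j}\seq{d_j}$, and verify that the suffix/weight conditions of rule~(5) of $\sqless$ (Definition~\ref{def:inductive_preorder}) survive the stripping of the head constructor. The paper itself warns that outside the implementation grammar (for instance with tuples applied to multi-summand sums) such an inductive characterization fails, so the proof must crucially rely on the restricted shape imposed by the grammar on page~\pageref{gram:implementation}, in particular the convention that constructor and tuple heads are only applied to non-sum terms in the normalized representation, which rules out the pathological interplay between tupling and summation shown in the motivating counter-example preceding the lemma.
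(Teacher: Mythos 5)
The paper states this lemma without proof, so there is nothing to compare against; judged on its own, your sketch has one genuine gap, in the soundness direction at exactly the step you describe for rules (3) and (4): ``lift the witness from the premise by prefixing the head constructor.'' If $t$ witnesses $u\coh\sum_j\app{w_j}\seq{d_j}$, contextuality does give $\C{C}t\less\C{C}u$, but the best the axiom $s\less\app{0}s$ yields on the other side is $\C{C}t\less\app{0}\C{C}t\approx\app{1}t$, and $\app{1}t$ is strictly coarser than $t$: from $t\less\sum_j\app{w_j}\seq{d_j}$ you cannot conclude $\app{1}t\less\sum_j\app{w_j}\seq{d_j}$. Concretely, $t=\app{0}\x$ witnesses $\app{0}\x\coh\app{0}\x$ (derivable by rule (5)), yet the prefixed term $\C{C}\app{0}\x$ is \emph{not} finer than $\app{0}\x$, since that would require $\app{1}\x\sqless\app{0}\x$, i.e.\ $1\le 0$ by rule (5) of Definition~\ref{def:inductive_preorder}; the conclusion $\C{C}\app{0}\x\coh\app{0}\x$ is still true, but only via the different witness $\C{C}\app{-1}\x$. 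This is not a detail you can wave away, because rule (3) is in fact unsound for arbitrary \emph{true} premises: $\x\coh\app{0}\x$ holds (witnessed only by $\x$ itself, which is rigid), while $\C{C}\x\coh\app{0}\x$ fails, since by Lemma~\ref{lem:orderHeadConstructors} the only nonzero refinement of $\C{C}\x$ is $\C{C}\x$, which is not finer than $\app{0}\x$. So no uniform ``prefix the constructor to the premise's witness'' argument can work.

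What saves the rule is that every premise actually derivable by the rules bottoms out in rule (5) or the sum-versus-sum rule, whose witnesses have arbitrarily negative weights; your induction must carry this slack along. Strengthen the soundness statement to: if $u\coh\sum_j\app{w_j}\seq{d_j}$ is derivable, then for every $k\ge0$ there is $s_k\not\approx\Zero$ with $s_k\less u$ and $\app{k}s_k\less\sum_j\app{w_j}\seq{d_j}$. The base cases hold by choosing a sufficiently negative weight in $\ZZ$, rule (3) passes from $s_{k+1}$ at $u$ to $\C{C}s_{k+1}$ at $\C{C}u$ via $\app{k}\C{C}s_{k+1}\approx\app{k+1}s_{k+1}$ and Lemma~\ref{lem:app_monotonic}, tuples are handled via $\app{k}(s_1,\dots,s_n)\approx\sum_i\app{k+1}s_i$ and the least-upper-bound property of $+$, and $k=0$ gives the desired witness. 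The rest of your plan is in good shape: the witnesses for (1), (2), (5), (6) and the sum-versus-sum rule are correct (for (5) it is essential that weights range over all of $\ZZ$), and the completeness direction --- stripping heads with Lemma~\ref{lem:orderHeadConstructors}, converting through $\app{0}\C{C}t\approx\app{1}t$, and decomposing sums with Lemma~\ref{lem:inductive_approximation} --- does go through in that direction, since $\app{1}t'\sqless\app{w_j}\seq{d_j}$ together with $t'\less\app{1}t'$ yields $t'\less\app{w_j}\seq{d_j}$. You may notice along the way that a bare branch $\seq d$ can be compatible with a sum (e.g.\ $\x\coh\app{0}\x$) although no rule derives such a pair; that is an omission in the statement of the lemma rather than in your approach, but it deserves a remark.
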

Both definitions can be implemented easily using ML pattern matching.

\medbreak
Looking for decreasing arguments in a substitution is simple: the minimality
condition means that a decreasing argument is a subterm of one component of
the substitution. It is thus enough to check all subterms!


\subsection{Complexity}  

We saw in section~\ref{sub:complexity} that the problem of deciding
size-change termination is P-space hard. In practice, we have found the
algorithm described on page~\pageref{sub:algorithm} to perform quite well. In
our experience, we have found that checking termination of functions written
by hand in PML doesn't require too much resources. 
There are concrete examples where~$D$ needs to be more than~$4$, but choosing
a bound~$B$ greater than~$1$ is very rarely necessary. The default is to
have~$D=2$ and~$B=1$, and let the user change the bounds. With this default,
termination checking is an order of magnitude faster than sanity checking of
the definitions, except for those examples specifically designed to stress the
system.

There are however two points that help make the criterion perform well,
especially when the bounds~$B$ and~$D$ are greater than their default values:
\begin{itemize}

  \item we make sure that sums are minimal by keeping only maximal
    summands:~$\seq d$ is equivalent (``$\less$ and~$\more$'') to~$\seq d +
    \app{12} \seq d + \app{-1} \CC C \seq d$ and is a much better choice
    because it keeps the size of the graph smaller.

  \item since everything is monotonic with respect to~$\less$, we don't need
    to keep arcs that are approximated by another arc (``subsumption'').

\end{itemize}
These points are trivial to implement and lower the complexity of the
algorithm in practice.


\section{Static Analysis} 
\label{sec:static_analysis}

%
The simplest interesting static analysis only records pattern matching and
projection: for each call-site~``\ttt{g \(u\sb1\) ... \(u\sb{m}\)}'' in the definition
of~``\ttt{f \(\x\sb1\) ... \(\x\sb{n}\)}'', we construct the
substitution~$[\y_1:=u_1;\dots;\y_m:=u_m]$ where
each~$u_i\in\T(\x_1,\dots,\x_n)$ is
\begin{itemize}
  \item a simple term without~$\app{w}$s if~\ttt{\(u\sb{i}\)} is syntactically built from
    projections and pattern-matching variable coming from~$\x_1$, \dots,
    $\x_n$;
  \item $\app{\infty}()$ otherwise.
\end{itemize}
For example, all the examples \ttt{map}, \ttt{\(\ttt{f}\sb1\)}, \ttt{\(\ttt{g}\sb1\)}, \ttt{\(\ttt{f}\sb2\)} and
\ttt{push\_left} (page~\pageref{map_list} and~\pageref{def:f1g1}) yield substitutions
without~$\app{\infty}()$. For the~\ttt{ack} function however, the three
recursive calls are represented by:
\begin{itemize}
  \item $[\x_1:=\CC S \x_1;\C S \C Z ()]$,
  \item $[\x_1:=\C S \CC S \x_1;\x_2:= \CC S \x_2]$,
  \item $[\x_1:=\CC S \x_1 ;\x_2:=\app{\infty}()]$.
\end{itemize}
The~``$\app{\infty}()$'' comes from the call~``\ttt{ack m (ack ...)}'':
because the second argument is an application, it isn't syntactically built
from the parameters. Note that this doesn't prevent the criterion from
tagging the~\ttt{ack} function as terminating.

It should be noted that this static analysis is entirely syntactical and can
be done in linear time in the size of the recursive definitions. This is
similar to the static analysis done in
\url{http://lama.univ-savoie.fr/~hyvernat/Files/basic-SCT.tar.gz}. The only
differences are that:
\begin{itemize}
  \item the syntax of the definitions is much simpler,
  \item we use~$\app{\infty}\x_1 + \cdots + \app{\infty}\x_a$ instead
    of~$\app{\infty}()$.
\end{itemize}

\pdfinfo{
  /Title (The Size-Change Termination Principle for Constructor Based Languages)
  /Author (Pierre Hyvernat)
  /Subject (Computer Science)
  /Keywords (program analysis, termination analysis, size-change principle, ML)}

\end{document}